\newtheorem{theorem}{Theorem}[section]
\newtheorem{lemma}[theorem]{Lemma}
\newtheorem{claim}[theorem]{Claim}
\newtheorem{corollary}[theorem]{Corollary}
\newtheorem{fact}[theorem]{Fact}
\newtheorem{defn}[theorem]{Definition}
\newtheorem{remark}[theorem]{Remark}
\long\def\rem#1{}
\def\B{\{0,1\}}
\newcommand{\Ker}{\text{Ker}}
\newcommand{\Img}{\text{Im}}
\newcommand{\Sym}{\text{Sym}}
\newcommand{\Span}{\text{Span}}
\newcommand{\Dim}{\text{Dim}}
\newcommand{\rank}{\text{rank}}
\newcommand{\R}{\mathbb{R}}
\newcommand{\psdrk}{\mathrm{rk_{psd}}}
\newcommand{\sosdeg}{\textrm{sos-deg}}
\DeclarePairedDelimiter{\ceil}{\lceil}{\rceil}
\DeclarePairedDelimiter{\floor}{\lfloor}{\rfloor}
\newcommand{\argmin}{\operatornamewithlimits{argmin}}
\newcommand{\eps}{\varepsilon}
\newcommand{\braket}[2]{\langle #1|#2\rangle}
\renewcommand{\part}[2]{\frac{\partial #1}{\partial #2}}
\newcommand{\all}[2]{\begin{align}\label{#2} #1\end{align}}
\newcommand{\al}[1]{\begin{align} #1\end{align}}
\newcommand{\en}[1]{\left ( #1 \right )}
\newcommand{\nl}{\notag \\}
\newcommand{\Tr}{\mathrm{Tr}}
\newcommand{\G}{\mathcal{G}}
\newcommand{\x}{\mathbf x}
\long\def\rem#1{}
\def\B{\{0,1\}}
\newcommand{\Ex}{\mathbb{E}}
\newcommand{\QE}{\mathrm{QE}}
\newcommand{\thmref}[1]{\hyperref[#1]{{Theorem~\ref*{#1}}}}
\newcommand{\lemref}[1]{\hyperref[#1]{{Lemma~\ref*{#1}}}}
\newcommand{\remref}[1]{\hyperref[#1]{{Remark~\ref*{#1}}}}
\newcommand{\corref}[1]{\hyperref[#1]{{Corollary~\ref*{#1}}}}
\newcommand{\eqnref}[1]{\hyperref[#1]{{Equation~(\ref*{#1})}}}
\newcommand{\claimref}[1]{\hyperref[#1]{{Claim~\ref*{#1}}}}
\newcommand{\remarkref}[1]{\hyperref[#1]{{Remark~\ref*{#1}}}}
\newcommand{\propref}[1]{\hyperref[#1]{{Proposition~\ref*{#1}}}}
\newcommand{\factref}[1]{\hyperref[#1]{{Fact~\ref*{#1}}}}
\newcommand{\defref}[1]{\hyperref[#1]{{Definition~\ref*{#1}}}}
\newcommand{\exampleref}[1]{\hyperref[#1]{{Example~\ref*{#1}}}}
\newcommand{\hypref}[1]{\hyperref[#1]{{Hypothesis~\ref*{#1}}}}
\newcommand{\secref}[1]{\hyperref[#1]{{Section~\ref*{#1}}}}
\newcommand{\chapref}[1]{\hyperref[#1]{{Chapter~\ref*{#1}}}}
\newcommand{\apref}[1]{\hyperref[#1]{{Appendix~\ref*{#1}}}}
\newcommand{\eqnsref}[1]{\hyperref[#1]{{Equations~(\ref*{#1})}}}
\newcommand{\algoref}[1]{\hyperref[#1]{{Algorithm~\ref*{#1}}}}
\begin{document}
\title{On the sum-of-squares degree of symmetric quadratic functions} 
\author{
%Srinivasan Arunachalam\thanks{CWI Amsterdam, the Netherlands.
%Supported by ERC Consolidator Grant QPROGRESS. {\tt arunacha@cwi.nl}}
%\and 
Troy Lee\thanks{School of Physical and Mathematical Sciences, Nanyang Technological University and Centre for 
Quantum Technologies, Singapore.  This material is based on research supported by the Singapore National Research 
Foundation under NRF RF Award No. NRF-NRFF2013-13.  Email: {\tt troyjlee@gmail.com}, 
{\tt aprakash@ntu.edu.sg}} \and 
Anupam Prakash$^*$
\and 
Ronald de Wolf\thanks{QuSoft, CWI and University of Amsterdam, the Netherlands.
Partially supported by ERC Consolidator Grant QPROGRESS and by the European Commission FET-Proactive project Quantum Algorithms (QALGO) 600700. {\tt rdewolf@cwi.nl}}
\and 
Henry Yuen\thanks{MIT, Cambridge, MA. Supported by NSF grants 1218547 and 1452302 and a Simons Graduate Award in Theoretical Computer Science. {\tt hyuen@mit.edu}}
}
\maketitle
\thispagestyle{empty}

\begin{abstract}
We study how well functions over the boolean hypercube of the form $f_k(x)=(|x|-k)(|x|-k-1)$ can be approximated by sums of squares of low-degree polynomials, obtaining good bounds for the case of approximation in $\ell_{\infty}$-norm as well as in $\ell_1$-norm. We describe three complexity-theoretic applications:
(1) a proof that the recent breakthrough lower bound of Lee, Raghavendra, and Steurer~\cite{LRS15} on the positive semidefinite extension complexity of the correlation and TSP polytopes cannot be improved further by showing better sum-of-squares degree lower bounds on $\ell_1$-approximation of~$f_k$;
(2) a proof that Grigoriev's lower bound on the degree of Positivstellensatz refutations for the knapsack problem is optimal, answering an open question from~\cite{G01};
(3) bounds on the query complexity of quantum algorithms whose expected output approximates such functions.
\end{abstract}

\newpage

\setcounter{page}{1}

\section{Introduction} 

\subsection{Approximation of functions on the boolean hypercube by polynomials} 

Classical approximation theory studies how well a function $f:\R \rightarrow \R$ can be approximated by simpler functions, most commonly by polynomials of bounded degree.  Approximation theory has found applications throughout complexity theory, for example in learning theory~\cite{MP88,O'DS10}, query complexity~\cite{NS94,BBCMW01}, communication complexity~\cite{She09b,SZ09b}, and more.

An important special case is the investigation of the best approximation to a real boolean function $f: \{0,1\}^n \rightarrow \mathbb{R}$ in $\ell_\infty$-distance by a degree-$d$ polynomial in the $n$ variables $x_1,\ldots,x_n$.  Nisan and Szegedy~\cite{NS94} initiated this study, showing that any polynomial that approximates the OR function with constant error in $\ell_\infty$-norm on $\{0,1\}^n$ has degree $\Omega(\sqrt{n})$. They also showed this bound is tight by constructing an $O(\sqrt{n})$-degree approximating polynomial for the OR function from a Chebyshev polynomial.  Paturi~\cite{P92} followed this by characterizing the approximate degree of 
\emph{any} symmetric boolean function, i.e., any function $f:\{0,1\}^n \rightarrow \{0,1\}$ which only depends on the number of ones $|x|$ in the $n$-bit input~$x$, and not on their locations. 
To get a feel for Paturi's theorem, consider the special case 
of a function $g_k:\{0,1\}^n \rightarrow \{0,1\}$ where $g_k(x)=0$ unless $|x|=k$ in which case $g_k(x)=1$.  Paturi's theorem says that the 
$\tfrac{1}{4}$-error approximate degree of $g_k$, denoted by $\deg_{1/4}(g_k)$, is 
$\Theta(\sqrt{k(n-k)})$.  Later, the $\ell_\infty$-approximate degree of symmetric boolean functions was characterized for all approximation errors~$\eps$ by \cite{She09,deW08}.  Again in the special case of~$g_k$, these results say that the degree of a polynomial that approximates $g_k$ up to error $\eps\geq 2^{-n}$ is 
$\Theta(\sqrt{k(n-k)} + \sqrt{n \log(1/\eps)})$.

\subsection{Our results on sum-of-squares approximation}

Here we study the representation of non-negative functions on the boolean hypercube by \emph{sums of squares} of polynomials. More precisely, a non-negative boolean function $f: \{0,1\}^n \to \R_+$ has an (exact) \emph{degree-$d$ sum-of-squares (sos) representation} if there exist degree-$d$ polynomials $h_1,\ldots,h_r$ over the reals such that for all $x \in \{0,1\}^n$,
$$
	f(x) = h_1(x)^2 + \cdots + h_r(x)^2.
$$
Let $\sosdeg(f)$ be the minimum~$d$ such that a non-negative function $f$ has a degree-$d$ sum-of-squares representation.\footnote{Note that the degree of the polynomial representing $f$ will actually be $2d$.} This sos degree is an important quantity that arises in the context of optimization and proof complexity, as also witnessed by our applications below. 

The obvious fact that a sum of squares of polynomials is globally non-negative is remarkably useful.  For example, for a graph $G=([n], E)$, if $f(x_1, \ldots, x_n) = c - \sum_{(i,j) \in E} (x_i -x_j)^2$ has an sos representation on 
the boolean cube, then $c\geq \sum_{(i,j) \in E} (x_i -x_j)^2$ for all $x\in \{0,1\}^n$, and hence $G$ has no cut of size larger than~$c$.  Moreover if $f$ has a degree-$d$ sos representation for small~$d$, then this provides a small \emph{certificate} (of size $n^{O(d)}$) that $f$ has no cut of size larger than $c$.  
Such certificates can in fact be found by means of semidefinite programming; these observations are the basis of the semidefinite programming hierarchies of Lasserre and Parrilo \cite{Shor87, Lasserre01, Parrilo00} that have been the 
subject of intense study in approximation algorithms.   

While exact sum-of-squares degree of functions on the boolean hypercube has been previously studied, there has been little work on 
the \emph{approximation} of such functions by sos polynomials. This is the focus of our paper, and we prove a number of tight bounds on the \emph{approximate sum-of-squares degree} of functions on the hypercube. We consider two notions of approximation in this paper. The most familiar is $\ell_\infty$-approximation: an sos polynomial $h$ $\eps$-approximates a function $f : \{0,1\}^n \to \R_+$ in $\ell_\infty$-distance if $| f(x) - h(x) | \leq \eps$ for all $x \in \{0,1\}^n$. We let $\sosdeg_\eps(f,\ell_\infty)$ denote the minimum degree of an sos polynomial that $\eps$-approximates $f$ in $\ell_\infty$-distance. The other notion is $\ell_1$-approximation: an sos polynomial $h$ $\eps$-approximates $f$ in $\ell_1$-distance if $\sum_{x \in \{0,1\}^n} |f(x) - h(x)| \leq \eps$, and we let $\sosdeg_\eps(f,\ell_1)$ denote the minimum degree of an sos polynomial that $\eps$-approximates $f$ in $\ell_1$-distance. Note that $\eps=\delta 2^n$ corresponds to \emph{average} approximation error~$\delta$.\footnote{Also note that the existence of a degree-$d$ sos approximation in either of these notions can be formulated as the feasibility of a semidefinite program of size polynomial in the domain size $2^n$, as follows. For $x\in\{0,1\}^n$, let $m_x$ be the column vector of dimension $D=\sum_{i=0}^d{n\choose i}$ indexed by sets $S\subseteq[n]$ of size $\leq d$, with entry $m_{x,S}=\prod_{i\in S}x_i$. Let $M_x$ be the $D\times D$ rank-1 matrix $m_x m_x^T$. Suppose $p(x)=\sum_{S:|S|\leq d}p_S\prod_{i\in S}x_i$ is a multilinear polynomial of degree~$d$, where with slight abuse of notation we use $p$ also to denote the $D$-dimensional vector of real coefficients~$p_S$. Then $p(x)$ is the inner product of $p$ and $m_x$, so $p(x)^2=\Tr(pp^T M_x)$.  Accordingly, every sos polynomial $h$ of degree $\leq d$ corresponds to a psd matrix $Z$ such that $h(x)=\Tr(Z M_x)$ for all $x\in\{0,1\}^n$ ($Z$ can be written as $\sum_{i=1}^r p_ip_i^T$, so the rank~$r$ of $Z$ would be the number of squared polynomials that $h$ sums over). Hence the existence of an sos polynomial $h$ of degree $\leq d$ that $\eps$-approximates~$f$ in $\ell_\infty$-distance, is equivalent to the existence of a psd matrix $Z$ such that $|\Tr(ZM_x)-f(x)|\leq\eps$ for all $x\in\{0,1\}^n$. The latter corresponds to the feasibility of an SDP with $2^n$ constraints, which (up to issues of precision) can be solved in time $2^{O(n)}$. However, we won't use this fact in this paper.} 

For much of this paper we will focus on understanding the approximate sos degree of the symmetric quadratic functions 
$f_k: \{0,1\}^n \to \R_+$ defined as $f_k(x) = (|x| - k)(|x| - k - 1)$ for $k = 0, 1, 2, \ldots, n-1$.  
Our study of these functions is motivated by several reasons.  First, these functions have a close connection to 
the proof complexity of the knapsack problem \cite{G01}, and have recently been used to show lower bounds on 
semidefinite extension complexity \cite{LRS15}; we survey these two applications in \secref{sec:applications} below.
Furthermore, while the $f_k$ may look very special, they are not too far from a general symmetric 
quadratic polynomial with real coefficients that is nonnegative on the hypercube, for the following reason.  Any symmetric quadratic polynomial on the hypercube is of the form $p(|x|)$ for a quadratic univariate polynomial $p$.  The polynomial $p$ will have two roots.  If 
the roots are complex they must come in a conjugate pair and $p$ is already sos; if the roots are real, and not both 
$\le 0$ or $\ge n$, then they 
must lie in an interval $[k,k+1]$, for some $k \in \{0, 1, \ldots, n\}$, just as with $f_k$.  
%First, they are the sum-of-squares analogue of the canonical symmetric boolean functions $g_k$ above where $g_k(x) = 1$ if and only if $|x| = k$. 

In our first set of results, we give lower and upper bounds on the $\ell_\infty$-approximate sos degree of the functions 
$f_k$.

\begin{theorem}[$\ell_\infty$ sos approximations of $f_k$]
\label{thm:ell_infty}
	For all integers $n\geq 0$, $k \in \{1, \ldots, n-2\}$, and $\eps=\eps(n)$ satisfying $0 < \eps < 1/50$, we have 
	\begin{enumerate}
		\item $\sosdeg_\eps (f_k,\ell_\infty) = \Omega(\sqrt{k (n -k)})$
		\item $\sosdeg_\eps(f_k,\ell_\infty) =O(\sqrt{k(n-k)}+\sqrt{n\log(1/\eps)})$
	\end{enumerate}
\end{theorem}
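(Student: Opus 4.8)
The plan is to reduce both directions to known facts about ordinary $\ell_\infty$-approximation by low-degree polynomials, exploiting that $f_k(x) = (|x|-k)(|x|-k-1)$ is symmetric and has a very simple univariate description. Throughout, I will think of a symmetric polynomial on $\{0,1\}^n$ as a univariate polynomial $p(t)$ evaluated at $t = |x| \in \{0,1,\ldots,n\}$; symmetrizing an arbitrary sos approximant does not increase its degree, so we may assume the approximating sos polynomial $h$ is symmetric, i.e.\ $h(x) = q(|x|)$ for some univariate $q$ of degree $\le 2d$ that is nonnegative on $\{0,1,\ldots,n\}$.

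\textbf{Lower bound (part 1).} Suppose $h$ is a degree-$d$ sos polynomial with $|h(x)-f_k(x)| \le \eps$ for all $x$; write $h(x) = q(|x|)$ with $q \ge 0$ on the integers $0,\ldots,n$. The key observation is that $\sqrt{q(t)}$ (or rather, the univariate function $t \mapsto \sqrt{q(t)}$ on the integer points) is a good approximation to $\sqrt{f_k}$, which on the integers equals $|t-k-1|$ for $t \le k$ and $|t-k|$ for $t \ge k+1$ — essentially a shifted absolute-value / "V"-shaped function with its minimum $0$ attained at $t=k$ and $t=k+1$. Concretely, $|\sqrt{q(t)} - \sqrt{f_k(t)}| \le \sqrt{|q(t)-f_k(t)|} \le \sqrt{\eps}$ wherever $f_k(t)$ is not too small, and near $t \in \{k,k+1\}$ one handles the square-root's lack of smoothness directly (there $f_k$ is $O(1)$, $q$ is $O(1)$, so both $\sqrt{q}$ and $\sqrt{f_k}$ are $O(1)$). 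The upshot is that $h$ itself, being within $\eps$ of $f_k$, must ``turn around'' near $|x|=k$: more precisely, $h$ restricted to the line $|x|=j$ must change by roughly $1$ as $j$ moves by $1$ away from $k$, over a range of roughly $\min(k,n-k)$ steps on each side. This forces the univariate degree-$2d$ polynomial $q$ to oscillate, and a Markov–Bernstein / Paturi-type argument (exactly the machinery behind the $\Theta(\sqrt{k(n-k)})$ lower bound for $\deg_{1/4}(g_k)$ quoted in the introduction) gives $2d = \Omega(\sqrt{k(n-k)})$, hence $d = \Omega(\sqrt{k(n-k)})$. The cleanest route is probably to directly show that a good sos approximation to $f_k$ yields a bounded-error ordinary polynomial approximation to a related symmetric boolean function whose approximate degree is already known to be $\Omega(\sqrt{k(n-k)})$ — e.g.\ use $h$ to separate $|x| \in \{k,k+1\}$ from $|x| \notin \{k-1,k,k+1,k+2\}$.

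\textbf{Upper bound (part 2).} I would construct the sos approximant explicitly as a single square, $h = g^2$, where $g$ is a degree-$d'$ polynomial with $d' = O(\sqrt{k(n-k)} + \sqrt{n\log(1/\eps)})$ that approximates $\sqrt{f_k}$ in $\ell_\infty$ on $\{0,1,\ldots,n\}$. On the integers $\sqrt{f_k(t)}$ is the piecewise-linear function described above (value $|t-k|$ for $t\ge k+1$, $|t-k-1|$ for $t\le k$, and $0$ at $t\in\{k,k+1\}$). The $\ell_\infty$-approximate degree of such a shifted-absolute-value symmetric function is governed by exactly the results of \cite{She09,deW08} cited in the introduction for $g_k$: up to error $\eps$ one needs degree $\Theta(\sqrt{k(n-k)} + \sqrt{n\log(1/\eps)})$, built from an appropriate Chebyshev polynomial damped to be small on the bulk and error-controlled near the ``kink''. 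Taking such a $g$ and setting $h=g^2$, we get $|h(t) - f_k(t)| = |g(t)-\sqrt{f_k(t)}|\cdot|g(t)+\sqrt{f_k(t)}|$; since $\sqrt{f_k}$ is $O(n)$ and $g$ is within the approximation error of it, the second factor is $O(n)$, which would only cost an extra $\mathrm{polylog}$ in the error budget — absorbing this into the $\sqrt{n\log(1/\eps)}$ term (replace $\eps$ by $\eps/\mathrm{poly}(n)$, paying only an additive $O(\sqrt{n\log n})$, which is anyway $O(\sqrt{k(n-k)})$ unless $k$ is very close to $0$ or $n$, and in those extreme cases one checks the bound directly). One subtlety: $h=g^2$ is automatically a nonnegative polynomial on $\R$, but we must also make sure it actually has a genuine sos \emph{representation on the hypercube} of degree $d'$ — since it literally is one square of a degree-$d'$ multilinear-reducible polynomial, this is immediate after multilinearizing (which does not change values on $\{0,1\}^n$ and does not increase degree).

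\textbf{Main obstacle.} I expect the crux to be the lower bound, specifically handling the non-smoothness of the square root at the double root $t\in\{k,k+1\}$: one cannot simply say ``$\sqrt{h}$ approximates $\sqrt{f_k}$ to within $\sqrt\eps$'' uniformly, because $\sqrt{\cdot}$ has unbounded derivative at $0$ and $f_k$ is $O(1)$ precisely where it matters most. The fix is to argue about $h$ (not $\sqrt h$) on a scale of unit steps in $|x|$ near $k$, combined with the global $\sqrt\eps$-closeness of $\sqrt h$ to $\sqrt{f_k}$ away from the root; packaging this into a clean reduction to a known symmetric approximate-degree lower bound is the delicate part. The upper bound, by contrast, should follow fairly mechanically from the existing tight constructions for $\ell_\infty$-approximation of symmetric boolean functions, once one notes that squaring a good $\sqrt{f_k}$-approximant costs only a $\mathrm{polylog}(n)$ factor in the error.
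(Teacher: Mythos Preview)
Your approach has a genuine gap. After symmetrizing, you correctly obtain a univariate polynomial $q$ of degree $\le 2d$ with $q(i)\ge 0$ and $|q(i)-f_k(i)|\le\eps$ on integers. But Paturi's lemma needs the polynomial to be \emph{bounded} on $\{0,\ldots,n\}$: it says a bounded polynomial with large derivative has large degree. Here $q$ is \emph{not} bounded---it tracks $f_k$, which grows to order $n^2$. Your ``$q$ must turn around near $k$'' observation is true but does not force high degree without boundedness. Passing to $\sqrt{q}$ does not help, since $\sqrt{q}$ is not a polynomial; and your ``cleanest route'' of using $h$ to separate $\{k,k+1\}$ from the rest faces the same obstruction, since any rescaling of $q$ that makes the jump at $k$ of constant size leaves $q$ of size $\Theta(n^2)$ at the ends. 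The paper's fix is a structural theorem of Blekherman: any symmetric degree-$d$ sos on the cube has the form $\tilde h(z)=\sum_{j} q_{d-j}(z)\prod_{i<j}(z-i)(n-z-i)$ with each $q_t$ a univariate sos. Splitting this sum at the $(k{+}1)$st term into $g_1+g_2$ and then dividing out the (approximate) double root near $k$ is exactly what produces a \emph{bounded} polynomial on $\{0,\ldots,n\}$ with a constant jump near $k$, to which Paturi applies. Without Blekherman (or an equivalent device), the argument does not go through.

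\textbf{Upper bound.} Your construction $h=g^2$ with $g$ approximating $\sqrt{f_k}$ is not the paper's route and, as stated, does not reach the claimed bound. The results of \cite{She09,deW08} you invoke concern symmetric \emph{boolean} functions, not the real-valued $\sqrt{f_k}$, so the degree $O(\sqrt{k(n-k)}+\sqrt{n\log(1/\eps)})$ for approximating $\sqrt{f_k}$ is not something you can cite; and your own error analysis already incurs an extra $\sqrt{n\log n}$ that is \emph{not} absorbed by $\sqrt{k(n-k)}$ when $k=O(1)$. In fact the paper explicitly notes that the natural univariate construction (add a Chebyshev-type bump $e(z)$ so that $(z-k)(z-k-1)+e(z)$ is globally nonnegative) only yields $O(\sqrt{k(n-k)}+\sqrt{n}\,\log(1/\eps))$, with the $\log$ \emph{outside} the square root. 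To pull the $\log$ inside, the paper uses the identity $\sosdeg(f)=\QE(f)$ of \cite{KLW14} and designs a quantum query algorithm (Grover-based search for $k$ ones, plus an exact Hamming-weight routine) that computes $f_k$ in expectation with $O(\sqrt{k(n-k)}+\sqrt{n\log(1/\eps)})$ queries. This parallels the history for $g_k$, where the tight $\eps$-dependence in \cite{deW08} also came from a quantum algorithm rather than a direct polynomial construction.
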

\noindent 
We expect the lower bound can be improved for the case of small~$\eps$ to match the upper bound, but have been unable to show that so far.
Observe that in the case of constant error, we obtain the tight bound of $\sosdeg_{1/50}(f_k,\ell_\infty)= \Theta\left(\sqrt{k(n-k)}\right)$.  While we are not aware of any previous work on $\ell_\infty$-approximate sos degree, 
techniques of Grigoriev \cite{G01} can be used to show that, for $n$ odd, any degree-$(n-1)/2$ sos polynomial has error at least $\Omega(1/\log n)$ for approximating $f_{\floor{n/2}}$.  This derivation is given in \apref{ap:D}.

The similarity between our $\ell_\infty$ bounds for $f_k$ and Paturi's bound for the 0/1-valued functions $g_k$ defined above is 
striking.  For the upper bound, the connection can be seen as follows: we can construct an $\eps$-approximation to $f_k$ in $\ell_\infty$-distance by finding a univariate polynomial $e(z)$ such that $h(z)=(z-k)(z-k-1)+e(z)$ is globally nonnegative (i.e., $h(z)\geq 0$ for all $z\in\R$), and $|e(i)| \leq \eps$ on integer points $i \in \{0,1,\ldots, n\}$.  As $h(z)$ is a globally nonnegative \emph{univariate} polynomial, it is sos, and furthermore $h(|x|)$ is an $\eps$-approximation to $f_k$.  What are the 
requirements on~$e$?  It must be large enough to ``cancel out'' the negative values of $(z - k )(z - k - 1)$ in the interval 
$(k,k+1)$, but small on all integer points $0,1,\ldots,n$.  This is very similar to looking for an $\eps$-approximation 
of $g_k$, and techniques similar to those used by Paturi show that there is an $e$ satisfying these requirements of degree $O(\sqrt{k(n-k)} + \sqrt{n} \log(1/\eps))$.  Note that this is slightly weaker than what \thmref{thm:ell_infty} claims; we will soon discuss how to bring the $\log(1/\eps)$ inside the square-root.

This upper bound argument shows that $f_k$ can be approximated by a polynomial $h(|x|)$ where $h$ is globally 
nonnegative.  For the lower bound, it is not clear at all why the optimal approximating polynomial should be of this 
form.  Any symmetric polynomial $f(x_1, \ldots, x_n)$ on the hypercube is of the form 
$f(x_1, \ldots, x_n)=p(|x|)$ for a univariate polynomial $p$.  Even if $f$ is sos, however, this does not mean that $p$ 
will be globally nonnegative.  

For the lower bound, we use an elegant recent result of Blekherman~\cite{B15} that gives a characterization of the 
possible form of univariate polynomials $p$ such that $f(x_1, \ldots, x_n)=p(|x|)$ when $f$ is a sos and symmetric real-valued boolean function.  
This structural theorem allows us to reduce the analysis of the approximate sos degree of $f_k$ to the approximate 
degree of a symmetric function on the boolean hypercube, for which we can apply Paturi's lower bound.

Interestingly, for small $\eps$ we can show a better upper bound than that given by the argument sketched above.
To get the better upper bound of \thmref{thm:ell_infty}, we take advantage of a recent characterization of the sos degree of a non-negative real-valued boolean function as the quantum query complexity of computing that function \emph{in expectation} (see \secref{ssecqexp}). We explicitly design a quantum algorithm to approximately compute $f_k$ in expectation with query complexity $O(\sqrt{k(n-k)}+\sqrt{n\log(1/\eps)})$, which by the characterization implies the same upper bound on $\sosdeg_\eps(f_k,\ell_\infty)$.  This again parallels the situation for symmetric boolean-valued functions, where the tight upper bound of $O(\sqrt{k(n-k)} + \sqrt{n \log(1/\eps)})$ on $\deg_\eps(g_k)$ was first shown by the construction of a quantum query algorithm~\cite{deW08}.

We also study sos $\ell_1$-approximations of $f_k$:

\begin{theorem}[sos $\ell_1$-approximations of $f_k$]
\label{thm:l1_upperintro}
Let $n$ be odd and $k=\floor{n/2}$.  Then 
\[
\sosdeg_{\delta 2^n}(f_k,\ell_1) \le \left \lceil \frac{3\sqrt{n}}{\sqrt{2}\delta} \ln\left(\frac{1}{\delta} \right) \right \rceil 
\enspace ,
\]
for any $8/\sqrt{2n} \le \delta \le 1/4$.  For $k < 0.49 n$, we have
$\displaystyle
\sosdeg_{\delta 2^n}(f_k,\ell_1) = O\left(\ln \left(\frac{1}{\delta}\right)\right) \enspace.
$
\end{theorem}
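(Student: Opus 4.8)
The plan is to reduce everything to a univariate approximation problem and then construct an explicit low‑degree ``bump'' polynomial, treating the two regimes of $k$ separately.

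\textbf{Reduction.} Since $f_k(x)=p_k(|x|)$ with $p_k(z)=(z-k)(z-k-1)$, and since a univariate polynomial $H$ that is nonnegative on $[0,n]$ gives rise to a sum of squares $H(|x|)$ of polynomials of degree $\le\tfrac12\deg H$ on the cube (write $H=\sigma_0(z)+z(n-z)\sigma_1(z)$ with $\sigma_0,\sigma_1$ sums of squares by Markov--Luk\'acs, and use the identity $|x|(n-|x|)=\sum_{i<j}(x_i-x_j)^2$ on $\{0,1\}^n$), it suffices to produce a polynomial $e$ with $\deg e\le 2d$ such that (i) $e\ge 0$ on $[0,n]$, (ii) $e\ge\tfrac14$ on $[k,k+1]$, and (iii) $\sum_{j=0}^n\binom nj e(j)\le\delta 2^n$. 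Indeed, set $H=p_k+e$: as $p_k(z)=(z-k-\tfrac12)^2-\tfrac14\ge-\tfrac14$, with the deficit confined to $(k,k+1)$, (i)--(ii) force $H\ge0$ on $[0,n]$; and as $p_k(j)\ge0$ at integers and $H(j)\ge p_k(j)=f_k(j)$, (iii) gives $\sum_x|f_k(x)-H(|x|)|=\sum_j\binom nj e(j)\le\delta 2^n$. Writing $X\sim\mathrm{Bin}(n,\tfrac12)$, condition (iii) is exactly $\E[e(X)]\le\delta$.

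\textbf{The case $k<0.49n$.} Here $\Delta:=n/2-k-1=\Omega(n)$ and every point of $[k,k+1]$ is at distance $\ge\Delta$ from $n/2$, so I would take $e(z)=\tfrac14\bigl(\tfrac{z-n/2}{\Delta}\bigr)^{2m}$ with $m=\lceil\log_2(1/\delta)\rceil$. Then $e\ge0$ everywhere (an even power), $e\ge\tfrac14$ on $[k,k+1]$, and, using $(2m-1)!!\le m^m$ together with the bound $\E[(X-n/2)^{2m}]\le(2m-1)!!\,(n/4)^m$ (optimal Khintchine), one gets $\E[e(X)]\le\tfrac14\bigl(\tfrac{Cm}{n}\bigr)^m\le\tfrac14 2^{-m}\le\delta$ for $m\le n/(2C)$; if $\delta$ is so small that $m>n/(2C)$ the trivial bound $\sosdeg\le n$ already reads $O(\log(1/\delta))$. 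Either way $\sosdeg_{\delta2^n}(f_k,\ell_1)\le m=O(\log(1/\delta))$.

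\textbf{The main case, $n$ odd and $k=\floor{n/2}$.} Now $[k,k+1]=[\tfrac{n-1}2,\tfrac{n+1}2]$ straddles $n/2$ and the monomial bump fails, so I would instead realize $e$ as a controlled polynomial approximation of a narrow Gaussian bump centred at $n/2$. Fix a small constant $c$, set $\sigma=c\delta\sqrt n$, $\rho=\tfrac1{10}$, and $\tilde G(z)=\tfrac{1+\rho}4 e^{-(z-n/2)^2/(2\sigma^2)}$; since $\delta\ge 8/\sqrt{2n}$ gives $\sigma^2\ge 32c^2$, one checks $\tilde G\ge\tfrac14$ on $[k,k+1]$. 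Let $q$ be a polynomial of degree $2d$ with $|q(z)-\tilde G(z)|\le\eps:=\delta/10$ for all $z\in[0,n]$, and put $e=q+\eps$, so that $\tilde G\le e\le\tilde G+2\eps$ on $[0,n]$; then $e$ satisfies (i) and (ii), and $\E[e(X)]\le\E[\tilde G(X)]+2\eps$. Using $\E[e^{-\lambda(X-n/2)^2}]\le(1+\tfrac{\lambda n}2)^{-1/2}$ (write $e^{-\lambda s^2}$ as a Gaussian integral of $e^{ius}$ and use $\E[e^{iu(X-n/2)}]=\cos^n(u/2)\le e^{-nu^2/8}$ for $|u|\le 2\pi$) with $\lambda=1/(2\sigma^2)$ gives $\E[\tilde G(X)]\le\tfrac{1+\rho}4\cdot\tfrac{2\sigma}{\sqrt n}=\tfrac{(1+\rho)c\delta}2$, hence $\E[e(X)]\le\delta$ for $c$ small. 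Finally, the Gaussian $e^{-x^2/2}$ can be uniformly approximated to error $\eta$ on $[-M,M]$ by a polynomial of degree $O(M\sqrt{\log(1/\eta)})$ (its Chebyshev coefficients on $[-M,M]$ satisfy $|a_{2m}|=O(\tfrac1M e^{-2m^2/M^2})$, readable off from their Bessel-function form); rescaling $\tilde G$ to $[0,n]$ means $M=\tfrac{n/2}{\sigma}=\Theta(\sqrt n/\delta)$ and $\log(1/\eta)=\Theta(\log(1/\delta))$, so $e$ has degree $2d=O\!\bigl(\tfrac{\sqrt n}{\delta}\sqrt{\log(1/\delta)}\bigr)$, giving $\sosdeg_{\delta2^n}(f_k,\ell_1)\le d$, which for $\delta\le\tfrac14$ is at most $\bigl\lceil\tfrac{3\sqrt n}{\sqrt2\,\delta}\ln\tfrac1\delta\bigr\rceil$ after optimizing constants.

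\textbf{Main obstacle.} The delicate step is constructing $e$ in the main case. A naive bump such as $\bigl(1-\tfrac{\lambda(z-n/2)^2}{D}\bigr)^D$, or a Chebyshev polynomial made tiny on an interval around $n/2$, ``turns around'' and blows up already inside $[0,n]$, destroying the tail of $\sum_j\binom nj e(j)$ and forcing degree $\Omega(n/\delta^2)$; using a genuine polynomial \emph{approximation} of the Gaussian, which stays $\le\tfrac14+\eps$ on all of $[0,n]$ and only grows outside it, is exactly what produces the $\sqrt n/\delta$ scaling, in combination with the sub-Gaussian concentration of $\mathrm{Bin}(n,\tfrac12)$. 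Pinning down the precise constant $3/\sqrt2$ then just requires bookkeeping in the Gaussian-approximation degree bound and the tail estimate.
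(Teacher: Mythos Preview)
Your argument is essentially sound and genuinely different from the paper's. Two points of comparison:

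\textbf{Reduction.} You pass from ``$H\ge 0$ on $[0,n]$'' to ``$H(|x|)$ is sos on the cube'' via Markov--Luk\'acs plus the identity $|x|(n-|x|)=\sum_{i<j}(x_i-x_j)^2$. The paper instead arranges for the univariate $h$ to be \emph{globally} nonnegative, so that $h$ is itself a sum of two squares and $h(|x|)$ is sos with no further work. Your route is slightly more flexible (you only need control on $[0,n]$); theirs is more direct.

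\textbf{Bump construction in the main case.} You approximate a narrow Gaussian centred at $n/2$ by a polynomial, control its $\ell_1$-mass against $\mathrm{Bin}(n,\tfrac12)$ via the characteristic function, and invoke the decay of Chebyshev/Bessel coefficients of $e^{-x^2/2}$ to bound the approximation degree. The paper instead uses a single Chebyshev polynomial composed with a quadratic change of variable: set $p(z)=\eps\,T_d(s(z^2))$, where $s$ affinely maps $[a^2,n^2/4]$ to $[-1,1]$. The substitution $z\mapsto z^2$ converts ``bump in the middle of an interval'' into ``large at one endpoint, small on the rest,'' which is exactly what Chebyshev does cheaply; global nonnegativity of $h=p_k+p$ then falls out for free. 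Your ``Main obstacle'' paragraph dismisses Chebyshev as blowing up inside $[0,n]$, but the paper's $z^2$ trick shows this objection is avoidable; you simply found a different way around it.

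Your Gaussian-approximation route in fact yields $O\!\bigl(\tfrac{\sqrt n}{\delta}\sqrt{\log(1/\delta)}\bigr)$, a $\sqrt{\log}$ saving over the paper's $\tfrac{3\sqrt n}{\sqrt2\,\delta}\ln(1/\delta)$. The flip side is that you have not tracked constants: the Bessel-coefficient estimate $|a_{2m}|=O(M^{-1}e^{-2m^2/M^2})$ holds in the regime $m\ll M^2$ (the tail beyond that is even smaller, so harmless), and your Fourier bound on $\E[e^{-\lambda(X-n/2)^2}]$ sweeps the $|u|>\pi$ contribution under ``for $|u|\le 2\pi$''---both are fine for an $O(\cdot)$ statement but would need care to match the stated constant $3/\sqrt2$ across the whole range $8/\sqrt{2n}\le\delta\le 1/4$. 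The paper obtains that constant directly from explicit Chebyshev inequalities. Your $k<0.49n$ argument via the monomial bump and AM--GM on $(2m-1)!!\le m^m$ is clean and matches the paper's (omitted) sketch.
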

The proof of this theorem follows the same plan sketched above for the upper bound in the $\ell_\infty$ case.  We 
construct a low-degree univariate polynomial $e(z)$ such that $h(z)=(z-k)(z-k-1)+e(z)$ is globally nonnegative and 
$\eps=\sum_{i=0}^n \binom{n}{i} |e(i)|$ is relatively small.  Then $h(|x|)$ gives the desired sos approximation to $f_k$ in $\ell_1$-norm.  We discuss the applications of this theorem to the lower bounds on semidefinite extension complexity 
of \cite{LRS15} below in \secref{sec:LRS}.

%\noindent To prove our upper bounds on the $\ell_1$ sos approximations of $f_k$, we utilize Chebyshev polynomials and leverage bounds that were developed in studying the quantum adversary bound for the ordered search problem. 

\subsection{Applications in complexity theory}
\label{sec:applications}

Here we describe complexity-theoretic consequences of such sos bounds in three different settings.

\subsubsection{Positive semidefinite extension complexity}
\label{sec:LRS}
The approximation of boolean functions by sos polynomials has played an important role in \emph{inapproximability} results.
Our first application is to the analysis of the positive semidefinite extension complexity of polytopes. The recent breakthrough work of Lee, Raghavendra and Steurer~\cite{LRS15} showed that any semidefinite program whose feasible region projects to the \emph{correlation polytope} must have size $2^{\tilde \Omega(n^{2/11})}$. By reduction this in turn implies a $2^{\tilde \Omega(n^{1/11})}$ lower bound for the polytope corresponding to the Traveling Salesman Problem on $n$-vertex graphs, showing (roughly speaking) that TSP cannot be solved by small semidefinite programs.

The argument of \cite{LRS15} shows that lower bounds on the degree of sos
polynomials that approximate a function $f_k(x)= (|x|-k)(|x|-k-1)$ in $\ell_1$-distance on the boolean cube imply lower 
bounds on the semidefinite extension complexity of the correlation polytope.  They build on the work of Grigoriev~\cite{G01} to show that, for odd $n$ and $k=\floor{n/2}$, any sum of squares of degree-$\floor{n/2}$ polynomials has 
$\ell_1$-error at least $2^{n-2}/\sqrt{n}$ in approximating $f_{\floor{n/2}}$.\footnote{The initial version of \cite{LRS15} only claimed a lower bound on the $\ell_1$-error of $\Omega(2^{n}/n^{3/2})$. However, their argument actually shows a bound of $\Omega(2^{n}/\sqrt{n})$ after a computational error is corrected.  This improves the bound on the psd extension complexity of the correlation polytope from the $2^{\tilde \Omega(n^{2/13})}$, of their paper, to the 
$2^{\tilde \Omega(n^{2/11})}$ quoted here.}
Our \thmref{thm:l1_upperintro} shows that this bound is tight, up to logarithmic factors.  Further, our upper bound 
on $\sosdeg_{\delta2^n}(f_k, \ell_1)$ throughout the full range of error implies, roughly speaking, that the quantitative 
bounds of \cite{LRS15} cannot be improved simply by showing better sos degree lower bounds on $f_k$.

\subsubsection{Proof complexity}\label{sec:proofcomplexity}
Our second result is in proof complexity.
Grigoriev and Vorobjov \cite{GV01} introduced a proof system based on the Positivstellensatz \cite{Ste74}. 
We explain this proof system in the context of the knapsack problem.  In this instance, the knapsack problem can be phrased as looking for a solution $x \in \R^n$ to the system of equations
\begin{equation}
\label{eq:knapsack_intro}
\sum_{i=1}^n x_i -r = 0,\, x_1^2-x_1=0, \ldots, x_n^2 - x_n =0
\end{equation}
where $r \in \mathbb{R}$. When $r$ is not an integer, this system obviously has no solution.  One way to certify that there is no solution, is to find polynomials $g, g_1, \ldots, g_n$ and sos polynomial $h$ such that 
\begin{equation}
\label{eq:knapsack_refutation}
g(x) \cdot \left(\sum_{i=1}^n x_i - r \right) + \sum_{i=1}^n g_i(x) \cdot (x_i^2-x_i) = 1 + h(x) \enspace.
\end{equation}
Such a collection of polynomials constitutes a \emph{Positivstellensatz refutation} of the statement 
that~\eqref{eq:knapsack_intro} has a solution: if $a \in \mathbb{R}^n$ satisfied $\sum_{i=1}^n a_i -r = 0$, and 
$a_i^2-a_i=0$ for $i=1, \ldots, n$, then the left-hand side of~\eqref{eq:knapsack_refutation} would evaluate to $0$ on~$a$, while the right-hand side would evaluate to $1 + h(a) \ge 1$, a contradiction. 

%A Positivstellensatz refutation of~\eqref{eq:knapsack_intro} always exists.  
Grigoriev and Vorobjov define the \emph{Positivstellensatz refutation degree} of the system~\eqref{eq:knapsack_intro}
as 
\[
\max\left\{ \deg \left( g(x) \cdot \left(\sum_{i=1}^n x_i - r \right)\right), \max_{i \in [n]} \{\deg(g_i(x) \cdot (x_i^2-x_i))\}, \deg(h(x))\right\},
\]
maximized over all sets of polynomials satisfying~\eqref{eq:knapsack_refutation}.
Grigoriev~\cite{G01} shows that if $k < r < k+1$ for a nonnegative integer $k < (n-3)/2$, then any Positivstellensatz refutation of~\eqref{eq:knapsack_intro} has degree at least $2k+4$. We provide a simple proof of this in~\apref{ap:C} using Blekherman's theorem. Kurpisz et al.~\cite{KurpiszLeppanenMastrolilli16} independently also give an alternative proof of Grigoriev's lower bound, by showing a more general theorem that reduces the analysis of dual certificates for very symmetric sos proof systems (such as for knapsack) to the analysis of univariate polynomials.

Grigoriev's lower bound shows the weakness of the Positivstellensatz-based proof system: even to refute such easy instances it already needs polynomials of fairly high degree. We prove here that Grigoriev's lower bound is exactly tight, answering an open question from~\cite{G01}.

\begin{theorem}
\label{thm:knapsack_upper}
Let $k < r < k+1$ for a nonnegative integer $k$.  The Positivstellensatz refutation degree of~\eqref{eq:knapsack_intro} with this value of $r$ is at most $2k+4$.
\end{theorem}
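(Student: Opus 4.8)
The plan is to exhibit the refutation explicitly, reducing the whole statement to the construction of a single univariate polynomial. Since the equation $\sum_i x_i-r$, the ideal $I=(x_1^2-x_1,\dots,x_n^2-x_n)$, and the sum-of-squares cone are all invariant under permuting coordinates, it is natural to build the refutation from a univariate polynomial $\psi$; concretely I ask for $\psi(t)$ with (i) $\deg\psi\le 2k+4$, (ii) $\psi(r)=-1$, and (iii) the function $x\mapsto\psi(|x|)$ on $\{0,1\}^n$ is a sum of squares of polynomials of degree at most $k+2$. Given such a $\psi$, the polynomial $\gamma(t)=(\psi(t)+1)/(t-r)$ has degree at most $2k+3$, since $\psi(r)+1=0$ means $t-r$ divides $\psi(t)+1$; setting $g(x)=\gamma(\sum_i x_i)$ yields the polynomial identity $g(x)\!\cdot\!(\sum_i x_i-r)=\psi(\sum_i x_i)+1$. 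Let $h(x)=\sum_\ell p_\ell(x)^2$ be the sum of squares from (iii), a genuine sum of squares of degree at most $2k+4$ agreeing with $\psi(|x|)$ on $\{0,1\}^n$. Then $\psi(\sum_i x_i)-h(x)$ has degree at most $2k+4$ and vanishes on $\{0,1\}^n$, hence lies in $I$; the usual reduction modulo the boolean axioms writes it as $\sum_i g_i(x)(x_i^2-x_i)$ with $\deg g_i\le 2k+2$. Rearranging and absorbing a sign into the $g_i$ gives $g(x)(\sum_i x_i-r)+\sum_i g_i(x)(x_i^2-x_i)=1+h(x)$, a Positivstellensatz refutation all of whose terms have degree at most $2k+4$.

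It remains to build $\psi$, and I would take
\[
\psi(t)\;=\;-\,\frac{1}{\binom{r}{k+2}\binom{n-r}{k+2}}\,\binom{t}{k+2}\binom{n-t}{k+2},
\]
which is well defined because $r\notin\Z$ forces both binomials nonzero, has degree exactly $2k+4$, and satisfies $\psi(r)=-1$; so (i) and (ii) hold immediately. For (iii), note that on the cube $\psi(|x|)$ is a scalar multiple of the symmetric polynomial $\binom{|x|}{k+2}\binom{n-|x|}{k+2}$. If $n<2(k+2)$ this polynomial vanishes on $\{0,1\}^n$ (no point has both $k+2$ ones and $k+2$ zeros), so (iii) holds with $h=0$. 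If $n\ge 2(k+2)$, then two facts combine: first, since $r\in(k,k+1)$ exactly one factor of $\binom{r}{k+2}$ is negative while every factor of $\binom{n-r}{k+2}$ is positive, so the scalar in front of $\psi$ is positive; second, $\binom{|x|}{m}\binom{n-|x|}{m}$ is a sum of squares of polynomials of degree at most $m$ on $\{0,1\}^n$. For the latter one can invoke Blekherman's structural theorem for symmetric sums of squares, but it also has a direct proof: on the cube $\binom{|x|}{m}\binom{n-|x|}{m}=\sum_{|U|=2m}\chi_U(x)$, where $\chi_U(x)$ is the indicator that $x$ has exactly $m$ ones among the coordinates in $U$, and for a fixed $2m$-set $U$ one has $\chi_U(x)=\frac{1}{m!}\sum_{M}\big(\prod_{\{a,b\}\in M}(x_a-x_b)\big)^2$, the sum running over the perfect matchings $M$ of $U$: such a matching contributes $1$ exactly when each of its edges joins a coordinate of value $0$ to one of value $1$, which happens in $m!$ ways when $x$ restricted to $U$ is balanced and never otherwise.

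The hard part, and the reason the bound is exactly $2k+4$, is the claim that $\binom{|x|}{k+2}\binom{n-|x|}{k+2}$ is a sum of squares of polynomials of half-degree precisely $k+2$ — no smaller degree works, which is where Grigoriev's matching lower bound enters. The one delicate point is that the tool used (Blekherman's theorem, or the matching identity) really does apply at the boundary $n=2(k+2)$, where there is a single $2m$-set and the polynomial degenerates into the indicator of $|x|=k+2$. The remaining ingredients — the sign check for the normalizing scalar, and verifying that the reduction multipliers $g_i$ have degree at most $2k+2$ — are routine, and the whole construction goes through uniformly in $n$ and $k$.
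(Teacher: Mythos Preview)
Your proof is correct and yields a valid refutation of degree $2k+4$, but it uses a different certificate polynomial than the paper. Both arguments follow the same skeleton: produce a univariate $\psi$ with $\psi(r)<0$ such that $\psi(|x|)$ is, on the cube, a sum of squares of polynomials of degree at most $k+2$, and then divide $\psi(t)-\psi(r)$ by $t-r$ to obtain the multiplier~$g$. The paper takes $\psi(t)=t(t-1)\cdots(t-k-1)$, a polynomial of degree only $k+2$, and invokes a separate lemma (\lemref{lem:key_polys_sos}) to obtain the polynomial identity $A_{k+2}(\mathbf{x})=\sum_i(x_i^2-x_i)g_i(\mathbf{x})+(k+2)!\,e_{k+2}(x_1^2,\ldots,x_n^2)$, the last term being a transparent sum of squares of degree $2k+4$. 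You instead take the degree-$(2k+4)$ polynomial $\psi(t)\propto\binom{t}{k+2}\binom{n-t}{k+2}$ and establish the sos property directly via the perfect-matching identity $\chi_U(x)=\tfrac{1}{m!}\sum_M\bigl(\prod_{\{a,b\}\in M}(x_a-x_b)\bigr)^2$, which is precisely the top-eigenspace component of the Johnson-scheme decomposition underlying Blekherman's theorem (\thmref{Span}). The paper's choice yields a multiplier $g$ of degree only $k+1$ (though the sos and ideal terms still force refutation degree $2k+4$) at the price of a two-claim auxiliary lemma proved in the appendix; your choice is self-contained, handles the range $n<2k+4$ cleanly as the degenerate case $h=0$, and makes the connection to the paper's representation-theoretic machinery explicit.
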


\subsubsection{Quantum query complexity of approximating a function in expectation}\label{ssecqexp}
The third application is to quantum algorithms.
Kaniewski et al.~\cite{KLW14} observed a very close connection between the sos degree of a function $f:\B^n\to\R_+$ and a variant of quantum query complexity: $\sosdeg(f)$ is \emph{exactly equal} to the optimal query complexity among all quantum algorithms with non-negative outputs whose \emph{expected} output on input~$x$ equals $f(x)$.\footnote{To avoid potential confusion: for each fixed~$x$ the expectation is taken over the internal randomness of the algorithm; it is not an expectation over different inputs~$x$.} This model of query complexity in expectation is motivated by similar models of \emph{communication} complexity that arose in the study of extension complexity of polytopes~\cite{fmprw:tspj}.

However, as~\cite{KLW14} note, this model has some intrinsic interest and motivation as well. Suppose we want to approximate $F(x)=\sum_{i=1}^m f_i(x)$, where each $f_i$ is a non-negative function of $x\in\B^n$.  Then we can just compute, for each $i$, a random variable whose expected value is $f_i(x)$ and then output the sum of those random variables.  By linearity of expectation, the output will have the correct expectation~$F(x)$. It will be tightly concentrated around its expectation if the individual random variables have a variance that is not too large. Thus in some cases it suffices to compute the $f_i(x)$ in expectation only, rather than to compute the values $f_i(x)$ themselves (which may be much more expensive). In this example, it is actually not even necessary to compute each $f_i(x)$ \emph{exactly} in expectation.  If the $i$th random variable has an expectation that is within $\eps_i$ of $f_i(x)$, then the expected value of our output is within $\sum_{i=1}^m\eps_i$ of the correct value~$F(x)$. 

The same proofs that Kaniewski et al.~\cite{KLW14} used to equate $\sosdeg(f)$ and quantum query complexity in expectation also work in the approximate case.
For example, $\sosdeg_\eps(f,\ell_\infty)$ is the optimal query complexity among all quantum algorithms with non-negative outputs whose expected output on input~$x$ differs from $f(x)$ by at most $\eps$, for every $x\in\B^n$, and the analogous statements hold for approximation using the other norms.
Accordingly, our above results about approximate sos degree immediately translate to results about quantum query complexity of algorithms that approximate $f$ in expectation.

\subsection{Organization} The rest of the paper is organized as follows. In \secref{sec:ell_infty}, we prove our sos $\ell_\infty$-approximation bounds  (\thmref{thm:ell_infty}). In \secref{sec:ell_1} we prove upper bounds on the degree of sos $\ell_1$-approximations to $f_{\floor{n/2}}$ (\thmref{thm:l1_upperintro}), and show that the lower bound of~\cite{LRS15} on the extension complexity of the correlation polytope cannot be improved by obtaining better sos $\ell_1$-approximate degree lower bounds.  In \secref{sec:proof} we prove \thmref{thm:knapsack_upper}, showing tightness of Grigoriev's knapsack lower bound.

\section{Sum-of-squares approximation in $\ell_\infty$-norm}
\label{sec:ell_infty}
In this section we give lower and upper bounds on the $\ell_\infty$-approximate sos degree of the function 
$f_k(x) = (|x| - k)(|x| - k - 1)$ to prove \thmref{thm:ell_infty}.  

\begin{remark}
\label{rem:flip}
Throughout this section, we will assume that 
$k \le n/2$.  Letting $\bar x$ denote the bitwise complement of $x$, we see that 
$f_k(x) = (|\bar x| - (n-k-1))(|\bar x|-(n-k)) = (|\bar x| - \ell)(|\bar x| - \ell -1)$ where $\ell=n-k-1$.  Thus if $k > n/2$, then 
$\ell \le n/2$ and $f_k(x) = f_\ell(\bar x)$.  For any $h: \{0,1\}^n \rightarrow \R$ the functions $h(x)$ and $h(\bar x)$ have 
the same sos degree, so it suffices to work with $f_k$ for~$k \le n/2$.  
\end{remark}

\subsection{Lower bound preliminaries} \label{sec:lbp} 

The following lemma is implicit in Paturi~\cite{P92}.

\begin{lemma}[Paturi \cite{P92}]
\label{lem:paturi}
Let $p: \R \rightarrow \R$ be a univariate polynomial and suppose that $0 \leq p(i) \le c$ for all $i \in \{0,1,\ldots,n\}$.  If
$|p'(\alpha)| \ge \delta$ for some $0 \le \alpha \le n$, then $\deg(p) = \Omega(\tfrac{\delta}{c}\sqrt{\alpha(n-\alpha)})$.
\end{lemma}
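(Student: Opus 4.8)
The plan is to derive this as a consequence of the classical Markov brothers' inequality, which controls the derivative of a bounded polynomial on an interval. Recall that Markov's inequality states that if $q$ is a degree-$d$ polynomial with $|q(t)| \le M$ for all $t \in [a,b]$, then $|q'(t)| \le \frac{2 d^2}{b-a} M$ on $[a,b]$. Paturi's refinement, and the version we need here, is the observation that near the \emph{endpoints} of the interval the derivative bound is much stronger: roughly $|q'(t)| = O\big(\frac{d}{\sqrt{(t-a)(b-t)}} M\big)$ when $t$ is in the ``bulk'' away from the endpoints (and the Markov bound is recovered as $t \to a$ or $t \to b$). This is the pointwise Bernstein-type inequality, and it is exactly what gives the $\sqrt{\alpha(n-\alpha)}$ behavior rather than a flat $n$.

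First I would set up the problem by rescaling. We are given $p$ with $0 \le p(i) \le c$ on the integer points $\{0,1,\ldots,n\}$, and $|p'(\alpha)| \ge \delta$ for some real $\alpha \in [0,n]$. The subtlety is that the hypothesis only bounds $p$ at integer points, not on the whole interval $[0,n]$; a high-degree polynomial could oscillate wildly between integers. So the key step is to first establish that $|p(t)| = O(c)$ for all \emph{real} $t \in [0,n]$, or more precisely to handle the interplay between the degree $d = \deg(p)$ and this boundedness. If $d \le n$ (say), then a polynomial bounded by $c$ on $n+1 \ge d+1$ equally spaced points is bounded by $O(c)$ on all of $[0,n]$ — this is a standard fact (e.g. via the Lagrange interpolation / Chebyshev-type argument, or Coppersmith--Rivlin). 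If $d > cn/\delta$ or so, there is nothing to prove since the claimed lower bound $\Omega(\frac{\delta}{c}\sqrt{\alpha(n-\alpha)})$ is at most $\Omega(\frac{\delta}{c} \cdot n) $ which would be dominated. So I can assume $d = O(n)$ and conclude $\sup_{t \in [0,n]} |p(t)| = O(c)$.

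Next, with $p$ now known to be bounded by $O(c)$ on the whole interval $[0,n]$, I apply the pointwise Bernstein/Markov inequality at the point $\alpha$: this yields $\delta \le |p'(\alpha)| = O\!\big(\frac{d}{\sqrt{\alpha(n-\alpha)}} \cdot c\big)$, possibly with the understanding that when $\alpha$ is within distance $O(1)$ of an endpoint one should instead use the plain Markov bound $|p'(\alpha)| = O(d^2 c / n)$; but one checks that in all cases this rearranges to $d = \Omega\!\big(\frac{\delta}{c} \sqrt{\alpha(n-\alpha)}\big)$. Rearranging gives exactly $\deg(p) = \Omega(\frac{\delta}{c}\sqrt{\alpha(n-\alpha)})$, as desired.

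The main obstacle, and the only place requiring care, is the step passing from ``bounded on integer points'' to ``bounded on the real interval,'' together with correctly invoking the \emph{pointwise} (endpoint-sensitive) version of the Markov/Bernstein inequality rather than the global one — it is precisely the endpoint-sensitive version that produces the factor $\sqrt{\alpha(n-\alpha)}$ instead of $n$. I would cite Paturi~\cite{P92} (and the standard references therein, e.g. Coppersmith--Rivlin for the discrete-to-continuous boundedness, and Rivlin's book for the pointwise Bernstein inequality) for these two ingredients, since the lemma is stated as being ``implicit in Paturi.'' Everything else is bookkeeping: splitting into the cases $d = O(n)$ versus $d$ large, and checking the rearrangement of the inequality in the regime where $\alpha$ is near an endpoint.
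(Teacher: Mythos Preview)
The paper does not actually prove this lemma: it is stated with the attribution ``implicit in Paturi~\cite{P92}'' and then used as a black box. So there is no proof in the paper to compare against.

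That said, your outline is the standard route and is essentially how Paturi's argument goes: (i) pass from boundedness on the integer grid $\{0,1,\ldots,n\}$ to boundedness on the real interval $[0,n]$ via an Ehlich--Zeller / Coppersmith--Rivlin type statement, and (ii) apply the pointwise Bernstein inequality (not just Markov) to extract the endpoint-sensitive factor $\sqrt{\alpha(n-\alpha)}$. The identification of these two ingredients is correct and is the whole content of the lemma.

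One point of care: your case split ``if $d > cn/\delta$ there is nothing to prove'' is not quite right as written, since nothing in the hypotheses forces $\delta \le c$, and the discrete-to-continuous step (Coppersmith--Rivlin) gives a bound like $\max_{[0,n]}|p| \le a^{d^2/n}\cdot c$, which is only $O(c)$ when $d = O(\sqrt{n})$, not merely $d = O(n)$. The clean way to organize the argument is: assume toward contradiction that $d$ is small enough that the discrete-to-continuous bound is $O(c)$, feed that into Bernstein to get $|p'(\alpha)| = O\big(\tfrac{d\,c}{\sqrt{\alpha(n-\alpha)}}\big)$, and derive the contrapositive. The interplay between the blowup factor in step~(i) and the Bernstein bound in step~(ii) needs to be tracked a bit more carefully than you indicate, but this is exactly the bookkeeping Paturi carries out, and your plan is sound.
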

%\begin{proof}
%A proof of this is implicit in Paturi's work \cite{P92} and is included in the appendix for completeness. 
%\end{proof}

We give a simple, but convenient, application of this lemma to the case where $p$ is bounded on $\{0,1,\ldots,n\}$, except 
possibly for a small set~$S$ near where $p$ is known to be small.

\begin{lemma}
\label{lem:paturi_app}
Let $p: \R \rightarrow \R$ be a univariate polynomial, $S \subseteq \{0,1,\ldots,n\}$, and suppose that the following bounds are known.
\begin{itemize}
\item $|p(i)| \le c$ for all $i \in \{0,1,\ldots,n\} \setminus S$, for a constant~$c$.  
\item $p(\alpha) \le \eps$, for some $\alpha \in \{0,1,\ldots,\floor{n/2}\}$.
\item $p(\beta) \ge a$ where $|\alpha - \beta| \le d_1$, for a constant $d_1$.
\item $\max_{i \in S} |i-\alpha| \le d_2$, for a constant $d_2$.
\end{itemize}
If $a\geq c > \eps$, then $\deg(p) = \Omega(\sqrt{\alpha(n-\alpha)})$, where the constant in the $\Omega(.)$ depends on $c,d_1,d_2$.
\end{lemma}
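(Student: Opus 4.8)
The plan is to reduce to Paturi's Lemma~\ref{lem:paturi} by manufacturing a point $\gamma \in [0,n]$ where the derivative $p'(\gamma)$ is bounded below by a positive constant depending only on $c, d_1, d_2$, together with a uniform $O(1)$ upper bound on $|p|$ over \emph{all} of $\{0,1,\ldots,n\}$ (not just outside $S$). Once we have both, Lemma~\ref{lem:paturi} with $\delta = \Omega(1)$ and this $O(1)$ bound yields $\deg(p) = \Omega(\sqrt{\gamma(n-\gamma)})$, and since $\gamma$ is within a constant $d_2$ (or $d_1$) of $\alpha$ and $\alpha \le n/2$, we have $\gamma(n-\gamma) = \Theta(\alpha(n-\alpha))$, giving the claim.

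First I would extract the derivative lower bound. We know $p(\alpha) \le \eps < c \le a \le p(\beta)$ with $|\alpha - \beta| \le d_1$. By the mean value theorem there is a point $\xi$ between $\alpha$ and $\beta$ with $|p'(\xi)| = |p(\beta) - p(\alpha)|/|\beta - \alpha| \ge (a - \eps)/d_1 \ge (c - \eps)/d_1$. Hmm — but $c - \eps$ need not be bounded below by an absolute constant; the hypothesis only says $a \ge c > \eps$. So instead I should use $a - \eps$ directly: the constant in the conclusion is allowed to depend on the gap, but to keep things clean I would simply carry $\delta := (a-\eps)/d_1$ through, noting $\delta > 0$. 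Actually, re-reading the statement, the cleanest reading is that $c, d_1, d_2$ (and implicitly $a, \eps$, or at least $a - \eps$) are treated as fixed constants, so $\delta = \Omega(1)$ is fine. Thus we obtain a point $\xi$ with $\alpha - d_1 \le \xi \le \alpha + d_1$ and $|p'(\xi)| \ge \delta$.

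The main obstacle is the uniform upper bound on $|p|$: Paturi's lemma requires $|p(i)| \le c$ for \emph{all} integer points $i \in \{0,\ldots,n\}$, whereas our hypothesis exempts the points of $S$. The fix is to enlarge the ``bounded'' interval slightly and only invoke Paturi on a sub-interval that contains $\xi$ but avoids the bad set $S$ — wait, that doesn't directly work either, since Paturi's lemma as stated quantifies over all of $\{0,\ldots,n\}$. Instead I would argue as follows: all points of $S$ lie within distance $d_2$ of $\alpha$, hence within distance $d_2$ of a region where $p$ is known to be small (namely $p(\alpha) \le \eps$). I claim $|p(i)|$ is still $O(1)$ on $S$, but this requires controlling $p$ near $\alpha$, which we cannot do without a degree bound — circular. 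The correct move: apply Paturi's lemma on a \emph{shifted} domain. Let $m = \max_{i \in S}|i - \alpha| \le d_2$ and consider the interval $I = \{i : |i - \alpha| > d_2\} \cap \{0,\ldots,n\}$; on $I$ we have $|p(i)| \le c$, and $I$ misses at most $2d_2 + 1$ consecutive integers around $\alpha$. Rescale: the polynomial $q(t) = p(t)$ restricted to (a linear reparametrization of) $I$ is bounded by $c$ on roughly $n - 2d_2$ consecutive integers, and $\xi \in [\alpha - d_1, \alpha + d_1]$ — which might itself be inside the excluded window if $d_1 \le d_2$. To handle that, note that if $\beta \notin I$ then $\beta$ is within $d_2$ of $\alpha$, but $\beta$ itself satisfies $p(\beta) \ge a > c$, so $\beta$ is automatically one of the (at most $O(1)$) exceptional points, consistent with $S$ being where $p$ is large; in any case $|p'|$ is large somewhere in $[\alpha - d_1, \alpha + d_1] \subseteq [\alpha - \max(d_1,d_2), \alpha + \max(d_1,d_2)]$. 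I would then apply Lemma~\ref{lem:paturi} to $p$ on the domain $\{0, 1, \ldots, n'\}$ obtained by deleting the $O(1)$-size window around $\alpha$ and relabeling, where $\xi$ maps to some $\alpha' \in [0,n']$ with $\alpha'(n' - \alpha') = \Theta(\alpha(n - \alpha))$ (using $n' = n - O(1)$ and $\alpha = \Theta(n')$ or $\alpha = o(n')$ both fine since the excised window has constant size). This yields $\deg(p) = \Omega(\tfrac{\delta}{c}\sqrt{\alpha'(n'-\alpha')}) = \Omega(\sqrt{\alpha(n-\alpha)})$ with the constant depending on $c, d_1, d_2, \delta$, completing the proof. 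The one genuinely delicate point to get right is verifying that excising a constant-width window around $\alpha$ changes $\sqrt{\alpha(n-\alpha)}$ by only a constant factor — which holds because $\alpha \le n/2$, so either $\alpha = O(1)$ (and the bound is trivial) or both $\alpha$ and $n - \alpha$ are $\Omega(1)$ and shrinking each by $O(1)$ preserves the order of magnitude.
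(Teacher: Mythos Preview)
Your derivative lower bound via the mean value theorem is fine and matches the paper's first step. The genuine gap is in how you propose to handle the lack of a bound on $|p(i)|$ for $i\in S$. The ``delete the window and relabel'' move does not work: if you remove the integers in $[\alpha-d_2,\alpha+d_2]$ and bijectively relabel the remaining ones as $\{0,\ldots,n'\}$, the function $i\mapsto p(\sigma^{-1}(i))$ is not a polynomial of degree $\deg(p)$ in the new variable~$i$, so \lemref{lem:paturi} says nothing about $\deg(p)$. Restricting to one of the two contiguous pieces instead does give a genuine polynomial domain, but then your derivative point $\xi\in[\alpha-d_1,\alpha+d_1]$ lies inside the excised window and is outside that piece, so again \lemref{lem:paturi} does not apply.

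The paper's fix is a simple case split that avoids excision entirely. Let $M=\max_{i\in\{0,\ldots,n\}}|p(i)|$. If $M\le c$, then $|p|\le c$ on all of $\{0,\ldots,n\}$ and \lemref{lem:paturi} applies directly with the derivative bound $(a-\eps)/d_1$ coming from the $\alpha$--$\beta$ jump. If $M>c$, then the maximizer $j$ must lie in $S$, hence $|j-\alpha|\le d_2$; the jump from $p(\alpha)\le\eps$ to $|p(j)|=M$ over distance at most $d_2$ gives $|p'(\xi)|\ge (M-\eps)/d_2$ for some $\xi$. Now apply \lemref{lem:paturi} with upper bound $M$ (valid on \emph{all} of $\{0,\ldots,n\}$ by choice of $M$) and this derivative bound: the ratio $\delta/c$ in Paturi becomes $(M-\eps)/(d_2 M)\ge (1-\eps/c)/d_2$, a positive constant. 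This is the idea you were missing --- letting the unknown maximum $M$ play the role of the bound in Paturi, and noting that the \emph{ratio} of derivative to bound stays $\Omega(1)$ because the same $M$ appears in both.
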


\begin{proof}
If $|p(i)|\le c$ for all $i \in S$, then applying Paturi's lemma directly we obtain a bound of
\[
\Omega\left(\frac{a-\eps}{d_1c}\sqrt{(\alpha-d_1)(n-\alpha+d_1)}\right) \enspace.
\]
Otherwise, suppose the maximum of $|p(i)|$ over $i \in \{0,1,\ldots,n\}$ is attained at $j \in S$.  Then the derivative of $p$ is at least 
$(|p(j)|-\eps)/d_2$.  Applying Paturi's lemma in this case gives a bound of 
\[
\Omega\left(\frac{|p(j)|-\eps}{d_2|p(j)|} \sqrt{(\alpha-d_2)(n-\alpha+d_2)}\right) \ge 
\Omega\left(\frac{1}{d_2}\left(1-\frac{\eps}{c}\right) \sqrt{(\alpha-d_2)(n-\alpha+d_2)} \right) \enspace.
\]
\end{proof}

We will also need the following elegant theorem of Blekherman \cite{B15}.  Recall that a \emph{symmetric} 
real-valued boolean function 
$f: \{0,1\}^n \rightarrow \R$ satisfies $f(x) = f(\pi(x))$ for all $x \in \{0,1\}^n$ and $\pi \in S_n$, where the permutation 
$\pi$ acts as $\pi((x_1, \ldots, x_n)) = (x_{\pi(1)}, \ldots, x_{\pi(n)})$.  For any symmetric boolean function $f$ of 
degree $d$, there is 
a univariate polynomial $\tilde f$ of degree $d$ such that $f(x_1, \ldots, x_n) = \tilde f(x_1+ \cdots + x_n)$.
  
\begin{theorem}[Blekherman \cite{B15}]
\label{thm:B}
Let $f:\{0, 1\}^{n} \to \R_{+}$ be a symmetric non-negative real-valued boolean function and $\tilde f$ a univariate polynomial 
such that $f(x_1, \ldots, x_n) = \tilde f(x_1 + \cdots + x_n)$. If $f$ can be written as the sum of squares of $n$-variate polynomials of degree $d \le n/2$, then we can write 
\al{ 
\tilde f(z) &= q_{d}(z) + z(n-z) q_{d-1}(z) + z(z-1)(n-z)(n-1-z) q_{d-2}(z) + \cdots \nl 
&\cdots + z(z-1)\cdots (z-d+1)(n-z)(n-1-z)\cdots(n-d+1-z)q_{0}(z)
}  
where each $q_{t}(z)$ is a univariate sos polynomial with $\sosdeg(q_t) \le t$.
\end{theorem}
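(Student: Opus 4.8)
The plan is to prove this by \emph{symmetry reduction} of the sum-of-squares certificate, combined with the representation theory of $S_n$ acting on the boolean cube. Write $U$ for the $\sum_{j=0}^d\binom{n}{j}$-dimensional space of multilinear polynomials of degree $\le d$, with monomial basis $\{m_T:|T|\le d\}$, $m_T(x)=\prod_{i\in T}x_i$, and let $m(x)\in U$ be the evaluation vector $(m_T(x))_T$. As recalled in the introduction, a degree-$d$ sos representation of $f$ is exactly a psd matrix $Z$ on $U$ with $f(x)=m(x)^TZm(x)$ for all $x\in\B^n$. Since $S_n$ acts on $U$ by permuting the monomial basis (with $m(\pi x)=P_\pi m(x)$ for the corresponding permutation matrix $P_\pi$) and $f$ is symmetric, replacing $Z$ by its group average $\frac1{n!}\sum_{\pi\in S_n}P_\pi^TZP_\pi$ gives an $S_n$-invariant psd matrix still representing $f$, so I may assume $Z$ is $S_n$-invariant.

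First I would block-diagonalize $Z$ using the isotypic decomposition of $U$. As an $S_n$-module $U\cong\bigoplus_{j=0}^d M^{(n-j,j)}$ (the permutation module on $j$-subsets), and because $d\le n/2$, Young's rule gives $M^{(n-j,j)}\cong\bigoplus_{i=0}^j S^{(n-i,i)}$, hence $U\cong\bigoplus_{i=0}^d(S^{(n-i,i)})^{\oplus(d-i+1)}$, the $d-i+1$ copies of $S^{(n-i,i)}$ indexed by $j\in\{i,\ldots,d\}$. By Schur's lemma an $S_n$-invariant symmetric operator on $U$ is block-diagonal over the isotypic components and acts on the $S^{(n-i,i)}$-component as $B_i\otimes\mathrm{Id}$ for a symmetric $(d-i+1)\times(d-i+1)$ matrix $B_i$, and $Z\succeq0$ iff every $B_i\succeq0$. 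Writing $B_i=\sum_\ell\beta_{i,\ell}\beta_{i,\ell}^T$ records the "$i$-th layer" of the certificate as a genuine sum of squares of $(d-i+1)$-dimensional vectors.

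Next I would transport this structure to the univariate side. Since $\tilde f(z)$ depends only on $|x|=z$, average over the slice: $\tilde f(z)=\Tr(ZN_z)$ with $N_z=\E_{|x|=z}[m(x)m(x)^T]$, which is again $S_n$-invariant and so block-diagonal with blocks $C_i(z)$. The crux is to show $C_i(z)=\nu_i(z)\,\rho_i(z)\rho_i(z)^T$, where (i) $\rho_i(z)\in\R^{d-i+1}$ has $j$-th entry proportional to $\binom{z-i}{j-i}$ — so its entries range over polynomials in $z$ of degrees $0,1,\ldots,d-i$, a basis of the degree-$\le d-i$ polynomials — and (ii) $\nu_i(z)$ is a \emph{nonnegative} scalar proportional to $\prod_{j=0}^{i-1}(z-j)(n-z-j)=(z)_i(n-z)_i$, the squared norm of the degree-$i$ harmonic component on the slice $|x|=z$. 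Granting this, $\Tr(B_iC_i(z))=\nu_i(z)\,\rho_i(z)^TB_i\rho_i(z)=\nu_i(z)\sum_\ell\big(\beta_{i,\ell}^T\rho_i(z)\big)^2$, and since each $\beta_{i,\ell}^T\rho_i(z)$ is a real univariate polynomial of degree $\le d-i$, the factor $\rho_i(z)^TB_i\rho_i(z)$ is a univariate sos polynomial of sos-degree $\le d-i$. Summing $\tilde f(z)=\sum_{i=0}^d\dim(S^{(n-i,i)})\Tr(B_iC_i(z))$ over $i$, absorbing the positive constants $\dim(S^{(n-i,i)})$ and the proportionality constant in $\nu_i$ into $q_{d-i}$, and reindexing by $t=d-i$, yields exactly $\tilde f(z)=\sum_{t=0}^d\big(\prod_{j=0}^{d-t-1}(z-j)(n-z-j)\big)q_t(z)$ with each $q_t$ a univariate sos polynomial and $\sosdeg(q_t)\le t$.

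The main obstacle is the explicit form $C_i(z)=\nu_i(z)\rho_i(z)\rho_i(z)^T$, i.e.\ points (i) and (ii): one must identify the degree-$i$ harmonic subspace inside each $M^{(n-j,j)}$, check that the slice-evaluation map scales the degree-$i$ harmonic of $M^{(n-j,j)}$ by a factor proportional to $\binom{z-i}{j-i}$ \emph{with a normalization independent of $x$} (so that the $d-i+1$ layers assemble into a single vector $\rho_i(z)$ of polynomials), and compute the slice-averaged squared norm $\E_{|x|=z}\|\phi_i(x)\|^2\propto(z)_i(n-z)_i$ with a strictly positive constant. These are the standard facts of the harmonic analysis of the Johnson association scheme / the theory of inclusion matrices, but pinning down the normalizations and positivity is where the real work lies; the rest is formal representation theory. (The converse — that every such combination is realized by a symmetric degree-$d$ sos function — follows by running the computation backwards from arbitrary psd $B_i$, and is not needed for the statement above.)
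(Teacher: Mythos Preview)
Your proposal is correct and rests on the same structural fact as the paper's proof: the isotypic decomposition of the degree-$\le d$ multilinear polynomials under $S_n$ into copies of the Specht modules $S^{(n-i,i)}$, together with the computation that the slice norm of the $i$-th harmonic component produces the prefactor $\prod_{j<i}(z-j)(n-z-j)$. The paper carries this out more concretely than you do: rather than invoking Schur's lemma on the Gram matrix $Z$, it introduces the operators $W_t=\sum_i\partial_{x_i}$, identifies the harmonic space $S^{(n-t,t)}$ with $\Ker(W_t)$, writes down an explicit Young-tableau basis for each kernel, and decomposes the polynomial $p$ itself (not $Z$) as $p=\sum_j q_j$ with each $q_j$ a polynomial-in-$|x|$ combination of elements of $\Ker(W_j)$. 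Your rank-one claim $C_i(z)=\nu_i(z)\rho_i(z)\rho_i(z)^T$ and the vanishing across isotypic components are exactly the paper's two key lemmas computing $\Sym(pq)$ for $p\in\Ker(W_t),\,q\in\Ker(W_{t'})$ (giving $\langle p,q\rangle\cdot\text{const}\cdot\prod(|x|-j)(n-|x|-j)$ when $t=t'$ and $0$ otherwise). Your matrix/Schur packaging is conceptually cleaner and makes the sos property of each $q_t$ transparent from $B_i\succeq0$; the paper's explicit operator-kernel route pins down the normalizations you flag as ``the real work'' without citing Johnson-scheme facts as black boxes. They are two presentations of the same argument.
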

\noindent In \apref{ap:B} we include a proof of Blekherman's theorem. In \apref{ap:C}, we use 
Blekherman's theorem to provide a simple proof of Grigoriev's lower bound \cite{G01} on the degree 
of Positivstellensatz refutations for the knapsack problem.  

\subsection{Lower bound for exact sos degree}

To illustrate our proof technique, we first show how the above tools can be used to prove a bound of $\Omega(\sqrt{k(n-k)})$
on the exact sos degree of $f(x) = (|x|-k)(|x|-k-1)$.  In the next section, we will extend this proof to also 
work for the approximate case.

\begin{theorem}
Let $f_k: \{0,1\}^n \to \R_+$ be defined as $f_k(x) = (|x|-k)(|x|-k-1)$ for an integer $1 \le k \le n-2$.  
Then 
$\sosdeg(f_k) = \Omega(\sqrt{k(n-k)})$.
\end{theorem}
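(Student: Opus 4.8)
The plan is to use Blekherman's structural theorem (\thmref{thm:B}) to convert a hypothetical low-degree sos representation of $f_k$ into a univariate identity, and then derive a contradiction using the application of Paturi's lemma (\lemref{lem:paturi_app}). Suppose for contradiction that $f_k$ has an sos representation of degree $d \le n/2$ with $d = o(\sqrt{k(n-k)})$. Since $f_k$ is symmetric and non-negative, \thmref{thm:B} gives us the decomposition
\[
(z-k)(z-k-1) = \tilde f_k(z) = \sum_{t=0}^{d} \left(\prod_{j=0}^{d-t-1} (z-j)(n-j-z)\right) q_t(z),
\]
where each $q_t$ is a univariate sos polynomial with $\sosdeg(q_t)\le t$, so $\deg(q_t)\le 2t \le 2d$. (By \remarkref{rem:flip} we may assume $k\le n/2$.) The key point is that every term in this sum except the $t=d$ term carries a factor of $z(n-z)$, hence vanishes at $z=0$ and $z=n$; and more generally the term indexed by $t$ vanishes at all integer points in $\{0,1,\dots,d-t-1\}$ and $\{n-d+t+1,\dots,n\}$.

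The main idea is then to evaluate this identity at suitable integer points to pin down the behavior of $q_d(z)$, which is a globally non-negative univariate polynomial of degree at most $2d$. First I would argue that $q_d$ is bounded on most of $\{0,1,\dots,n\}$: at integer points $i$ sufficiently far from both endpoints (say $d \le i \le n-d$), all the prefactor products $\prod_{j=0}^{d-t-1}(i-j)(n-j-i)$ are non-negative and at least $1$ in absolute value, so since the left side $\tilde f_k(i)=(i-k)(i-k-1)$ is bounded by roughly $n^2$ and each $q_t(i)\ge 0$, we get $0 \le q_d(i) \le (i-k)(i-k-1) \le n^2$ — in fact we can do better, but a polynomial bound suffices after absorbing it. Meanwhile, evaluating at $z=k$ and $z=k+1$ gives $\tilde f_k(k)=\tilde f_k(k+1)=0$, and since all terms are non-negative there, we conclude $q_d(k)=q_d(k+1)=0$ (assuming $k,k+1$ lie in the safe range, which holds since $1\le k\le n-2$ and we can handle the few edge cases of $k$ near $0$ or $n$ separately, or via \remarkref{rem:flip}). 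But $q_d$ is non-negative and vanishes at the two consecutive integers $k$ and $k+1$; I would then look at its value near the midpoint $k+1/2$ of that interval, or argue that since $(z-k)(z-k-1) < 0$ strictly on $(k,k+1)$ while all other terms of the decomposition are... here I need to be careful about signs of the prefactors inside $(k,k+1)$. A cleaner route: consider $q_d(z)$ minus a shift, or directly use that some $q_t$ with $t<d$ must be nonzero and bounded below at some point near $k$ to "cancel" the strictly negative dip of $(z-k)(z-k-1)$ — but since all $q_t \ge 0$ everywhere and all prefactors are positive on $(k,k+1)$ when $d \le k$ and $d \le n-k-1$, this forces a contradiction with $\tilde f_k < 0$ on that interval unless $d$ is large. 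This last observation is actually the crux: if $d \le \min(k, n-k-1)$, then for $z\in(k,k+1)$ every prefactor $\prod_{j=0}^{d-t-1}(z-j)(n-j-z) > 0$ and every $q_t(z)\ge 0$, so the right-hand side is $\ge 0$, contradicting $(z-k)(z-k-1) < 0$. Hence $d > \min(k,n-k-1) = \Omega(\min(k,n-k))$, and since $\sqrt{k(n-k)} = \Theta(\min(k,n-k)^{1/2}\cdot\max(k,n-k)^{1/2})$... wait, that only gives $d = \Omega(\min(k,n-k))$, which is much stronger than needed for the regime $k$ close to $n/2$ but weaker otherwise. So the easy sign argument alone does not give $\Omega(\sqrt{k(n-k)})$; I expect one genuinely needs Paturi.

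Therefore the real argument: from $q_d(k)=q_d(k+1)=0$ and $q_d \ge 0$ globally, $q_d$ has double roots at... no, simple roots at $k$ and $k+1$ would contradict non-negativity (a non-negative polynomial changes sign at a simple root), so $k$ and $k+1$ are each roots of even multiplicity, hence $q_d(z) = (z-k)^2(z-k-1)^2 s(z)$ for some non-negative $s$. Then I would go back to the identity, isolate the $t=d-1$ and lower terms, and extract a univariate polynomial $p(z)$ (a rescaled version of $q_{d-1}$ or of $(\tilde f_k(z) - (\text{prefactor})\cdot q_d(z))/(z(n-z))$) to which \lemref{lem:paturi_app} applies: $p$ is non-negative and bounded by a constant on $\{0,\dots,n\}$ away from a small set $S$ near $k$, is small ($\le \eps$, in fact $0$) at $z=k$, but must be at least some constant $a$ at a nearby point $z=\beta$ with $|\beta - k|\le d_1$ in order to make up the deficit created by the factor $(z-k)^2(z-k-1)^2 s(z)$ being too small near $k$ — concretely, the term $(z-k)(z-k-1)$ dips to $\approx -1/4$ at $z=k+1/2$ while the sum of other contributions must cancel this, forcing some $q_t$ or combination to be bounded below by a constant near $k$. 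Feeding the bounds $c, d_1, d_2 = O(1)$ into \lemref{lem:paturi_app} yields $\deg(p) = \Omega(\sqrt{k(n-k)})$, and since $\deg(p) \le 2d + O(1)$, we conclude $d = \Omega(\sqrt{k(n-k)})$, contradicting the assumption and completing the proof. For $d > n/2$ there is nothing to prove since $\Omega(\sqrt{k(n-k)}) = O(n)$ trivially... actually one must check $\sqrt{k(n-k)}\le n/2$, which holds by AM-GM, so the $d\le n/2$ case covers everything.

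The main obstacle I anticipate is the bookkeeping in the middle step: correctly identifying which auxiliary univariate polynomial $p$ to feed into \lemref{lem:paturi_app}, verifying that it is bounded on $\{0,\dots,n\}$ outside a constant-sized set $S$ near $k$ (the endpoint behavior, where prefactor products can be large or have varying sign, needs care), and showing it is bounded below by a constant $a$ at some point within constant distance of $k$ — this "cancellation forces a nonzero derivative" argument is where the real content lies. The sign analysis of the prefactor products $\prod_{j=0}^{d-t-1}(z-j)(n-j-z)$ on the interval $(k,k+1)$ versus at integer points is the fiddly part; everything else is a fairly mechanical application of the two cited lemmas.
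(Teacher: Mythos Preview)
Your high-level plan---Blekherman's decomposition followed by \lemref{lem:paturi_app}---is exactly the paper's, and you correctly note that the pure sign argument on $(k,k+1)$ gives only $d > k$ (for $k \le n/2$), which is too weak for small~$k$. But there is a genuine gap at the crucial step: you never actually produce a polynomial to which \lemref{lem:paturi_app} applies with constant parameters.

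The issue is the bound $c$. You observe $0 \le q_d(i) \le (i-k)(i-k-1)$ and say ``a polynomial bound suffices after absorbing it''; it does not. If $c$ grows like $n^2$ while the jump $a$ is $O(1)$, the conclusion of \lemref{lem:paturi_app} (trace the proof back to \lemref{lem:paturi}) degrades to $\Omega(n^{-2}\sqrt{k(n-k)})$, which is vacuous. More fundamentally, nothing forces $q_d$ to be \emph{large} anywhere near $k$: once $d>k$ the sign obstruction is gone and $q_d$ could be identically zero with all the weight carried by other $q_t$. Your fallback candidates---$q_{d-1}$, or $(\tilde f_k - q_d)/(z(n-z))$---have the same defects: no constant upper bound over $\{0,\dots,n\}$ and no guaranteed jump of constant size near~$k$.

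The idea you are missing is the paper's choice of split point together with a normalization. Rather than peel off only $q_d$, group the Blekherman sum as $\tilde f_k = g_1 + g_2$, where $g_1$ collects the first $k+1$ terms (precisely those whose prefactor does \emph{not} yet contain the factor $(z-k)$) and $g_2$ is the remainder. The vanishing argument you gave for $q_d$ applies to \emph{every} $q_t$ in $g_1$, so each has double roots at $k$ and $k+1$; hence $(z-k)^2(z-k-1)^2 \mid g_1$, while $(z-k)(z-k-1) \mid g_2$ via the prefactors. Now divide the whole identity by $(z-k)(z-k-1)$: writing $g_j = (z-k)(z-k-1)h_j$, one gets $h_1 + h_2 \equiv 1$ on $\{0,\dots,n\}\setminus\{k,k+1\}$ with $h_1,h_2 \ge 0$ there. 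This normalization is the whole point: automatically $0 \le h_1(i) \le 1$ for every integer $i\in\{0,\dots,n\}$, $h_1(k)=h_1(k+1)=0$ (from the extra factor left in $g_1$), and $h_1(i)=1$ for all $i \le k-1$ (since every term of $g_2$ carries $z(z-1)\cdots(z-k)$, forcing $h_2(i)=0$ there). Thus $h_1$ jumps from $1$ at $k-1$ to $0$ at $k$ while being bounded by $1$ on all of $\{0,\dots,n\}$, and \lemref{lem:paturi_app} applies directly with $c=a=1$, $\eps=0$, $d_1=d_2=O(1)$ to give $\deg(h_1)=\Omega(\sqrt{k(n-k)})$.
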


\begin{proof}
Following \remarkref{rem:flip}, if we show the theorem for $1 \le k \le n/2$, then it will also imply the theorem for $1 \le k \le n-2$.  We thus assume $1 \le k \le n/2$.

Let $d$ be the sos  degree of $f$.  We may assume $d \le n/2$ as otherwise the theorem holds.  Let 
$\tilde f$ be a univariate polynomial of degree $\le 2d$ such that $\tilde f(x_1+\cdots + x_n)=f(x_1, \ldots, x_n)$.
Write $\tilde f(z) = g_1(z)+g_2(z)$ where 
$g_1(z) = q_d(z) + z(n-z)q_{d-1}(z) + \cdots + z(z-1)\cdots (z-(k-1))(n-z)(n-1-z)(n-(k-1)-z) q_{d-k}(z)$ is the first $k+1$ 
terms in the representation of $\tilde f$ of \thmref{thm:B}, and $g_2(z)$ is the remaining part of that representation.

Our first claim is that $(z-k)$ is a factor of both $g_1$ and $g_2$.
Notice that $\tilde f(k)=g_1(k)+g_2(k)=0$. Furthermore each term of $g_1$ and $g_2$ is nonnegative on integer 
points between $0$ and $n$, which means that each individual term of $g_1$ and $g_2$ must evaluate to $0$ at $k$.

Consider now a general term $z(z-1)\cdots (z-t)(n-z)(n-1-z)(n-t-z) q_{d-t-1}(z)$ of Blekherman's representation.  If $t \ge k$ then this term obviously 
has a factor of $z-k$.  If $t < k$ then the prefactor $z(z-1)\cdots (z-t)(n-z)(n-1-z)(n-t-z)$ is non-zero for $z=k$, so it must be the case that $q_{d-t-1}(k)=0$. Since $q_{d-t-1}(z)$ is a univariate sum-of-squares polynomial, even $(z-k)^2$ divides $q_{d-t-1}(z)$.  

By the choice of the breakpoint between $g_1$ and $g_2$, this shows that $(z-k)^2$ is a factor of $g_1$ and 
$z-k$ is a factor of $g_2$.  By the same argument, $(z-(k+1))^2$ is also a factor of $g_1$, and $(z-(k+1))$ is a factor of $g_2$.  

In light of this, we can write $g_1(z)=(z-k)(z-k-1) h_1(z)$, $g_2(z)=(z-k)(z-k-1) h_2(z)$ so that
\[
(z-k)(z-k-1) = \tilde f(z) = (z-k)(z-k-1)(h_1(z)+h_2(z)) \enspace .
\]
This means that $h_1(i)+h_2(i) = 1$ for all $i \in \{0,1,\ldots,n\} \setminus \{k,k+1\}$.  Furthermore, $h_1(k)=h_1(k+1)=0$ as $h_1$ 
still has roots at $k,k+1$ (as $g_1$ had double roots there), and $h_2(i)=0$ for $i \in \{0,\ldots,k-1\}$ because each term in $h_2$ 
includes the prefactor $z(z-1) \cdots (z-k+1)$.  Combining these observations with the fact that $h_1(i) \ge 0,h_2(i) \ge 0$ 
for all $i \in \{0,1,\ldots,n\} \setminus \{k,k+1\}$ gives the following:
\begin{enumerate}
  \item $ 0 \le h_1(i) \le 1$ for all $i \in \{0,1,\ldots,n\}$.
  \item $h_1(i) = 1$ for $i \in\{0,\ldots,k-1\}$.
  \item $h_1(k)=h_1(k+1) = 0$.
\end{enumerate}
Applying \lemref{lem:paturi_app} to $h_1$ now gives the desired result.
\end{proof}

\subsection{Lower bound for $\ell_{\infty}$-approximate sos degree}
Now we show the lower bound of \thmref{thm:ell_infty}.  
\begin{theorem}\label{th:lowerboundconstanterror}
\label{thm:lower}
Let $f: \{0,1\}^n \rightarrow \R_+$ be defined as $f(x) = (|x|-k)(|x|-k-1)$ for some integer $1 \le k \le n-2$.  Then 
$\sosdeg_{1/50}(f,\ell_\infty) = \Omega(\sqrt{k(n-k)})$.
\end{theorem}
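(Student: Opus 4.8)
The plan is to mimic the proof of the exact‑degree lower bound just given, carrying the error $\eps=1/50$ through each step. By \remarkref{rem:flip} we may assume $k\le n/2$, and we may assume $d:=\sosdeg_{1/50}(f,\ell_\infty)\le n/2$, as otherwise there is nothing to prove. Fix an sos polynomial of degree $d$ that $(1/50)$-approximates $f$ in $\ell_\infty$; averaging it over the action of $S_n$ keeps it a sum of squares, keeps the degree at most $d$, and does not increase the $\ell_\infty$-error because $f$ is symmetric, so we may take the approximant to be symmetric, of the form $h(x)=\tilde h(|x|)$ with $\tilde h$ univariate of degree $\le 2d$ satisfying $\tilde h(i)\ge 0$ and $|\tilde h(i)-f_k(i)|\le 1/50$ for every $i\in\{0,1,\ldots,n\}$.

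Next, apply Blekherman's theorem (\thmref{thm:B}) to $h$ and split $\tilde h=g_1+g_2$ exactly as in the exact proof, with $g_1$ the first $k+1$ terms of the decomposition. The structural facts survive in approximate form: every term of $g_1$ and of $g_2$ is nonnegative on integer points of $[0,n]$; $g_2$ carries the factor $z(z-1)\cdots(z-k)$, so $g_1(i)=\tilde h(i)$ for $i\in\{0,\ldots,k\}$ while $0\le g_1(i)\le\tilde h(i)$ for $i\in\{k,\ldots,n\}$. Hence $0\le g_1(i)\le f_k(i)+1/50$ for all $i$; in particular $0\le g_1(k)\le 1/50$ and $0\le g_1(k+1)\le 1/50$, whereas $g_1(i)\ge f_k(i)-1/50\ge 2-1/50$ for $i\in\{0,\ldots,k-1\}$.

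In the exact case $g_1$ had a double root at $k$ and at $k+1$, so $h_1=g_1/((z-k)(z-k-1))$ was a polynomial with $h_1(k)=h_1(k+1)=0$, $h_1\equiv 1$ on $\{0,\ldots,k-1\}$ and $0\le h_1\le 1$ on $\{0,\ldots,n\}$, which is precisely the input \lemref{lem:paturi_app} wants. Here $g_1$ no longer has those roots, so we replace division by division with remainder: write $g_1(z)=(z-k)(z-k-1)\,p(z)+r(z)$ with $r$ the linear polynomial interpolating $g_1$ at $z=k$ and $z=k+1$. Then $|r(i)|\le\tfrac1{50}(|i-k|+1)$, and since the denominator $(i-k)(i-k-1)$ grows quadratically while $r$ grows only linearly, $|r(i)|/|(i-k)(i-k-1)|=O(1/50)$ for every $i\in\{0,\ldots,n\}\setminus\{k,k+1\}$. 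Thus $p$ has degree $\le 2d-2$ and, off $\{k,k+1\}$, $p(i)=g_1(i)/((i-k)(i-k-1))+O(1/50)$, which lies in $[-O(1/50),\,1+O(1/50)]$ for all such $i$ and in $[1-O(1/50),\,1+O(1/50)]$ for $i\in\{0,\ldots,k-1\}$. We then apply \lemref{lem:paturi_app} (or its proof, which only needs the relevant ratios to be $\Omega(1)$) to $p$ with $\alpha=k$, $\beta=k-1$, exceptional set $S=\{k,k+1\}$, and constants $c$ slightly above $1$ and $a$ slightly below $1$ — admissible precisely because $1/50$ is small — to conclude $2d-2\ge\deg p=\Omega(\sqrt{k(n-k)})$, which is the asserted bound.

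The step that needs real care, and which I expect to be the main obstacle, is verifying the hypothesis of \lemref{lem:paturi_app} at the excluded point $\alpha=k$ (and $k+1$): division with remainder controls $p$ everywhere on $\{0,\ldots,n\}$ except at $k,k+1$, where $g_1/((z-k)(z-k-1))$ is of indeterminate $0/0$ type and $p(k)=r'(k)-g_1'(k)$ is a priori governed by $g_1'(k)$, which the bounds above do not directly pin down. To control $g_1'(k)$ one should use more than the nonnegativity of the $q_t$ in Blekherman's decomposition, namely that they are \emph{sums of squares}: writing $q_t=\sum_i\ell_{t,i}^2$, the bound $q_t(k)=\sum_i\ell_{t,i}(k)^2\le 1/50$ gives $|\ell_{t,i}(k)|\le\sqrt{1/50}$, and the expansion $q_t(z)=q_t(k)+2(z-k)\mu_t(z)+(z-k)^2\nu_t(z)$ with $\nu_t$ a sum of squares and $\mu_t$ obeying the Cauchy--Schwarz bound $|\mu_t(z)|\le\sqrt{q_t(k)}\,\sqrt{\nu_t(z)}$ yields quantitative control on $g_1$ near $k$ which, together with $g_1(k-1)$ being bounded, keeps $|g_1'(k)|$ (hence $|p(k)|$) small enough; alternatively one can bypass $p(k)$ by bounding $|p'(\cdot)|$ directly on $[k-1,k]$ via the same expansion. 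Calibrating the numerical constant $1/50$ so that all these accumulated error terms stay under the thresholds demanded by \lemref{lem:paturi_app} then finishes the proof.
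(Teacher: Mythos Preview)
Your overall architecture—symmetrize, apply Blekherman's \thmref{thm:B}, split $\tilde h=g_1+g_2$ at the $(k{+}1)$st term, and feed a quotient to \lemref{lem:paturi_app}—matches the paper. The divergence is exactly at the step you flag as the main obstacle, and there your sketch does not close.

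Your Cauchy--Schwarz bound $|\mu_t(z)|\le\sqrt{q_t(k)}\,\sqrt{\nu_t(z)}$ is correct but gives nothing without an independent bound on $\nu_t(k)$, and pointwise data at integers near $k$ do not supply one: a univariate sos polynomial can be $\le\eps$ at $k$ and $k\pm1$ yet have arbitrarily large derivative at $k$ (e.g.\ $q=\ell^2$ with $\ell(z)=\sqrt{\eps}+M(z-k)(z-k+1)$). The appeal to $g_1(k-1)$ being bounded does not help for the $t=k$ term, since its prefactor $z(z-1)\cdots(z-k+1)(n-z)\cdots(n-k+1-z)$ vanishes at $z=k-1$. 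Even the ``easy'' half of $g_1'(k)$ coming from differentiating the prefactors is not $O(1)$: writing $P_t$ for the $t$th prefactor, one has $\max_{t\le k}P_t'(k)/P_t(k)=\sum_{j=0}^{k-1}\bigl(\tfrac1{k-j}-\tfrac1{n-j-k}\bigr)\sim H_k$, so that contribution alone is of order $(\log k)/50$. Thus getting $|p(k)|$ down to a constant via your route would require substantially more than what you wrote, and it is not clear it can be done at all.

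The paper bypasses derivatives entirely by using one further consequence of the Blekherman decomposition that you state but do not exploit: $g_1(z)\ge0$ for \emph{all real} $z\in[k-1,\,n-k+1]$, since every prefactor with $t\le k$ is nonnegative on that interval and each $q_t$ is globally nonnegative. Given this, either $g_1(k+3/2)>\eps$ (Case~1), so the minimum $\delta\in[0,\eps]$ of $g_1$ on $[k-1,k+3/2]$ is attained at an interior point $\alpha$ with $g_1'(\alpha)=0$, whence $g_1-\delta$ has a genuine double root at $\alpha$ and one divides by $(z-\alpha)^2$; or $g_1(k+3/2)\le\eps$ (Case~2), in which case adding the ``smile'' $8\eps(z-k-1)(z-k-2)$ keeps the result $\ge0$ at $k+1$ and $k+2$ but makes it $\le-\eps$ at $k+3/2$, forcing two real roots in $[k+1,k+2]$, and one divides by their product. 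In either case the quotient meets the hypotheses of \lemref{lem:paturi_app} with absolute constants, and no estimate on $g_1'(k)$ is ever needed.
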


\begin{proof}
We now describe how the above proof can be modified to work for $\ell_{\infty}$-approximate sum-of-squares degree.  
We again assume $1 \le k \le n/2$.  Suppose that 
$h : \{0,1\}^n \rightarrow \R$ is a sum of squares of degree-$d \le n/2$ polynomials that satisfies $|h(x)-f(x)| \le \eps$ for 
all $x\in \{0,1\}^n$, 
for some $\eps < 1/4$ to be determined later.  Let $\tilde h$ be the univariate polynomial such that 
$\tilde h(x_1 + \cdots + x_n)=h(x_1, \ldots, x_n)$. Note that $\tilde h$ satisfies 
$|\tilde h(i) - (i-k)(i-k-1)| \le \eps$ for all $i \in \{0,1,\ldots,n\}$.  We again use Blekherman's theorem to decompose 
$\tilde h(z)=g_1(z) + g_2(z)$ where, as before, 
$g_1(z) = q_d(z) + z(n-z)q_{d-1}(z) + \cdots + z(z-1)\cdots (z-k+1)(n-z)(n-1-z)(n-k+1-z) q_{d-k}(z)$.  The polynomial $g_1$ has the following properties.
\begin{enumerate}
\item $g_{1}(i) = \tilde h(i)$ for $i \in \{0,1,\ldots,k\}$, because all terms of $g_2$ are zero on these points.
\item $g_{1}(i) \leq \tilde h(i)$ for $i \in \{0,1,\ldots,n\}$.  This follows as $g_2(i)$ is nonnegative on integer points in $\{0,1,\ldots,n\}$.
\item $g_1(i) \ge 0$ for $i \in [k-1,n-k+1]$.  Each term of $g_1$ is nonnegative in this interval because the prefactor is.
\end{enumerate}
We will consider two cases based on the value of $g_1(k+3/2)$.  First consider the case $g_1(k+3/2) > \eps$.  In 
this case, consider a point $\alpha \in \argmin_z \{g_1(z): k-1 \le z \le k+3/2\}$.  Let $g_1(\alpha)=\delta$.
By item~(3) above and as $g_1(k-1), g_1(k+3/2) > \eps$ and $g_1(k), g_1(k+1) \le \eps$ we have 
$0 \le \delta \le \eps$ and also $g_1'(\alpha)=0$. 

Now consider the function $p_1=g_1-\delta$.  As $p_1(\alpha) = p_1'(\alpha)=0$ it follows that $p_1$ has a double root 
at $\alpha$.  Define $q_1$ by $p_1(z) = (z-\alpha)^2 q_1(z)$.  Note that $q_1$ has the following properties.
\begin{enumerate}
\item $q_1(i) \le 6 + \eps$ for $i \in \{0,1,\ldots,n\} \setminus \{k-1, k,k+1, k+2\}$.
\item $q_1(k-1) \ge \tfrac{2-2\eps}{9}$.
\item As either $|\alpha -k| \ge 1/2$ or $|\alpha - k-1| \ge 1/2$ we have either $q_1(k) \le 4\eps$ or 
$q_1(k+1) \le 4\eps$.  
\end{enumerate}
Applying \lemref{lem:paturi_app} then gives the desired lower bound in this case as long as $\eps < 1/19$. 

Now we consider the second case, that $g_1(k+3/2) \le \eps$.  In this case, we modify $g_1$ by adding a function that is shaped like a ``smile.''  Let 
$p_1(z)= g_1(z) + 8\eps (x-k-1)(x-k-2)$.  Note that $p_1$ satisfies 
$p_1(k+1) \ge 0$, $p_1(k+3/2) \le  -\eps$, and $p_1(k+2) \ge 0$.  Thus 
$p_1(z)$ has two roots $\alpha,\beta$ in $[k+1,k+2]$, with $\alpha\leq\beta$.  Let $p_1(z) = (z-\alpha)(z-\beta) r_1(z)$.  Then $r_1$ satisfies the following properties.
\begin{enumerate}
\item $r_1(i) \le 2$ for $i \in \{0,1,\ldots,n\} \setminus \{k+1, k+2\}$.
\item $r_1(k-1) \ge \frac{2}{9}+5\eps$.
\item $|r_1(k)|\le 16\eps$.
\end{enumerate}
Applying \lemref{lem:paturi_app} then gives the desired lower bound as long as $\eps < 2/99$. 
\end{proof}

\subsection{Upper bound for $\ell_\infty$-approximate sos degree}
In this section we show that the lower bound in \thmref{thm:lower} is tight.  To do this, we use the characterization of 
the sos degree of a function $f: \{0,1\}^n \rightarrow \R_+$ as the quantum query complexity of computing $f$ in expectation~\cite{KLW14}. In this model, a quantum algorithm $A$ makes a number $T$ of quantum queries to the hidden input $x$, and outputs a non-negative real number. We say that the algorithm $A$ \emph{computes $f$ in expectation} if the expected value of the output of the algorithm $A$ on input $x$ is exactly equal to $f(x)$. We will use $\QE(f)$ to denote the minimum number of quantum queries $T$ needed by such an algorithm to compute $f$ in expectation. Kaniewski et al.~\cite{KLW14} show that $\QE(f)$ exactly captures the sos degree of $f$:

\begin{theorem}[\cite{KLW14}]
\label{thm:klw}
Let $f : \{0,1\}^n \to \R_+$. Then $\QE(f) = \sosdeg(f)$.
\end{theorem}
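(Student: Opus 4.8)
\textbf{Proof proposal for \thmref{thm:klw}.}
The plan is to prove this equivalence by two inequalities, exhibiting a transformation in each direction between a degree-$d$ sos representation of $f$ and a $d$-query quantum algorithm computing $f$ in expectation. First I would recall the standard setup: a $T$-query quantum algorithm is a sequence of unitaries $U_T O_x U_{T-1} O_x \cdots U_1 O_x U_0$ acting on a fixed starting state, where $O_x$ is the phase oracle encoding the bits of $x$, followed by a measurement whose outcome (a non-negative real) is the algorithm's output. The key classical fact I would use is that after $T$ queries, the amplitude of each basis state of the final state is a multilinear polynomial in $x$ of degree at most $T$ (this is the standard Beals--Buhrman--Cleve--Mosca--de Wolf observation, adapted to the $\{0,1\}$ cube where we may take polynomials to be multilinear of degree equal to the query count).

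For the direction $\sosdeg(f) \le \QE(f)$: given a $T$-query algorithm computing $f$ in expectation, write the final state as $\sum_j p_j(x)\ket{j}$ where each $p_j$ has degree $\le T$, possibly with complex coefficients; splitting real and imaginary parts, the expected output equals a non-negative combination $\sum_j w_j |p_j(x)|^2 = \sum_j w_j (\mathrm{Re}\,p_j(x))^2 + w_j(\mathrm{Im}\,p_j(x))^2$, where the $w_j \ge 0$ are the (fixed) output values assigned to measurement outcomes. Absorbing $\sqrt{w_j}$ into each polynomial, this exhibits $f$ as a sum of squares of real polynomials of degree $\le T$, so $\sosdeg(f) \le T$. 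Taking $T = \QE(f)$ gives the bound. For the reverse direction $\QE(f) \le \sosdeg(f)$: given $f = \sum_{i=1}^r h_i(x)^2$ with $\deg h_i \le d$, I would construct an algorithm that, in superposition over $i \in [r]$ and over the monomials appearing in $h_i$, uses $d$ queries to produce a state in which the amplitude of a designated ``accept'' register is proportional to $h_i(x)$; the standard way to get a degree-$\le d$ monomial $\prod_{j \in S} x_j$ as an amplitude using $|S| \le d$ queries, together with amplitude manipulation to inject the real coefficients of $h_i$, does this. One then measures and outputs a suitably scaled non-negative value so that the expected output is exactly $\sum_i h_i(x)^2 = f(x)$; a global scaling constant (depending only on $f$, not on $x$) handles normalization. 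This uses $d = \sosdeg(f)$ queries.

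The main obstacle — and the step deserving the most care — is the reverse direction: realizing an arbitrary real polynomial $h_i$ of degree $d$ as an amplitude using only $d$ queries, \emph{with exact expectation} and with non-negative outputs. One has to be careful that the normalization factors are uniform in $x$ (so that rescaling the output recovers $f$ exactly, not just up to an $x$-dependent factor), and that complex phases introduced by the oracle are cancelled so the relevant amplitude is genuinely (a real multiple of) $h_i(x)$ rather than merely $|h_i(x)|$ up to sign issues; handling the signs of coefficients of $h_i$ is the delicate bookkeeping. Since \cite{KLW14} already carry this out, I would follow their construction; the approximate versions ($\ell_\infty$ and $\ell_1$) then follow by the identical argument with ``$f(x)$'' replaced by ``a function within $\eps$ of $f(x)$ in the relevant norm'' throughout, as the excerpt notes.
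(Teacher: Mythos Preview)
The paper does not actually contain a proof of \thmref{thm:klw}: it is stated with attribution to \cite{KLW14} and then used as a black box. So there is no ``paper's own proof'' to compare against here.

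That said, your sketch is the right shape and matches the argument in \cite{KLW14}. The direction $\sosdeg(f)\le \QE(f)$ is exactly the polynomial-method observation you describe: amplitudes after $T$ queries are degree-$\le T$ polynomials, and the expected output is a nonnegative combination of their squared moduli. For the converse you correctly flag the real work: realizing each $h_i(x)$ (with arbitrary real coefficients and signs) as an amplitude after exactly $d$ queries, with an $x$-independent normalization so that rescaling the output yields $f(x)$ exactly. Your description (``in superposition over $i$ and over monomials, use $d$ queries to get $\prod_{j\in S}x_j$ as an amplitude, then combine with the coefficients'') is the right intuition but is not yet a proof; the precise construction in \cite{KLW14} handles the normalization and sign bookkeeping you mention. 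Since you explicitly defer to that construction, your proposal is best read as a correct high-level outline rather than a self-contained proof.
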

Thus, in order to prove an upper bound on the (approximate) sos degree of a function $f$, it suffices to construct a quantum query algorithm that (approximately) computes $f$ in expectation. The only knowledge of quantum query complexity needed to understand the algorithm is \thmref{thm:klw} above, and the existence of the following quantum algorithms, all of which are variants of Grover search.
\begin{itemize}
\item \emph{Regular Grover}~\cite{grover:search,bhmt:countingj}: If $|x| \ge t$ then there is a quantum algorithm 
(depending on $t$) using $O(\sqrt{n/t})$ queries that finds an $i$ such that $x_i=1$ with probability at least $1/2$. 
\item \emph{$\eps$-error Grover}~\cite{bcwz:qerror}: There is a quantum algorithm using $O(\sqrt{n\log(1/\eps)})$ queries that finds an $i$ such that $x_i=1$ with probability at least $1-\eps$ if $|x| \ge 1$. 
\item \emph{Exact Grover}~\cite{bhmt:countingj}: If $|x|=t$ then there is a quantum algorithm (depending on $t$) using $O(\sqrt{n/t})$ queries that finds an $i$ such that $x_i=1$ \emph{with certainty}. 
\end{itemize}

The algorithm consists of three subroutines, which we now describe.  We begin with the simplest procedure, SAMPLE$(x,S)$, which motivates the basic plan of the algorithm.

\begin{algorithm}
\caption{Given $x \in \{0,1\}^n$ and $S \subseteq [n]$, samples two entries of $x$ outside of $S$}
\begin{algorithmic}[1]
\Procedure{Sample}{$x,S$}
\State Randomly choose $i \ne j \in [n] \setminus{S}$.  Output $x_i x_j \cdot (n-|S|)(n-|S|-1)$
\EndProcedure
\end{algorithmic}
\label{alg:sample}
\end{algorithm}

\begin{claim}
\label{claim:sample}
The procedure SAMPLE$(x,S)$ makes two queries and the expected value of its output is $(|x|-|S|)(|x|-|S|-1)$. 
\end{claim}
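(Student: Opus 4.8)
The plan is to unwind the definition of SAMPLE and compute the expectation directly; there is no real difficulty here, just a bookkeeping point about $S$ worth flagging.

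The query count is immediate: the procedure reads exactly the two bits $x_i$ and $x_j$ of the input, and the random choice of the distinct pair $(i,j)\in[n]\setminus S$ uses only internal randomness, so it makes two queries. For the expectation, write $m=n-|S|$ for the number of coordinates lying outside $S$, so that $(i,j)$ is uniform over the $m(m-1)$ ordered pairs of distinct elements of $[n]\setminus S$ and the output is the random variable $x_ix_j\cdot m(m-1)$. Since $x_ix_j=1$ exactly when $x_i=x_j=1$,
\[
\E[\text{output}] \;=\; m(m-1)\cdot\Pr_{i\ne j}\bigl[x_i=x_j=1\bigr] \;=\; \#\bigl\{(i,j): i\ne j,\ i,j\in[n]\setminus S,\ x_i=x_j=1\bigr\}.
\]
I would then use that every index of $S$ is a $1$-coordinate of $x$ --- which holds for each call SAMPLE makes inside the main algorithm, since there $S$ is built up from $1$-locations found by Grover search --- so that the number of $1$-coordinates outside $S$ is exactly $|x|-|S|$, and hence the number of ordered distinct pairs of them is $(|x|-|S|)(|x|-|S|-1)$, as claimed. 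The degenerate small cases are consistent: when $|x|-|S|\in\{0,1\}$ there is no valid pair, the output is identically $0$, and $(|x|-|S|)(|x|-|S|-1)=0$ as well.

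The one thing worth a sentence is exactly that hypothesis on $S$: without the assumption that $S$ consists of $1$-coordinates, the quantity ``$|x|-|S|$'' would instead have to be read as ``the number of $1$-coordinates outside $S$'', which equals $|x|-|S|$ only when $S\subseteq\{i:x_i=1\}$. I do not expect any genuine obstacle beyond making this convention explicit.
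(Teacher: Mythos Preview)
Your proof is correct; the paper itself leaves this claim without proof, so there is no approach to compare against beyond ``immediate from the definition.'' Your observation that the identity requires $S\subseteq\{i:x_i=1\}$ is well taken: the general expectation is $w(w-1)$ with $w$ the number of $1$-coordinates of $x$ outside $S$, and this equals $(|x|-|S|)(|x|-|S|-1)$ precisely under that hypothesis, which the paper tacitly assumes since every call to SAMPLE in the algorithm uses an $S$ built from $1$-locations found by Grover search.
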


The procedure SAMPLE suggests the following high-level idea for an algorithm for computing $f_k(x)=(|x|-k)(|x|-k-1)$.  
For simplicity we describe the 
high-level idea for the case where $k \le n/2$ and where we want to compute a constant-error $\eps$-approximation of 
$f_k$, in expectation.  

First we try to find a set $S$ of $k$ ones in $x$ assuming that $|x| > 2k$, using a procedure HIGH.  If we find such a set 
$S$ then we run SAMPLE$(x,S)$ and output $f(x)$ exactly, in expectation.  If the procedure HIGH fails to find such a set $S$, then we 
run a procedure LOW.  This uses exact Grover search to determine the Hamming weight of $x$ with certainty if 
$|x| \le 2k$.  Once we know the Hamming weight of $x$ we can correctly output $f(x)$, deterministically.  Both the 
procedures HIGH and LOW can be done with $O(\sqrt{kn})$ queries, in the constant error $\eps$ case.
The only case where the algorithm may err is if $|x| > 2k$ but the procedure HIGH fails to find $k$ ones in $x$.  The most subtle part of the algorithm is tuning the parameters such that this error is at most $\eps$ in 
expectation.  We now describe the procedures HIGH and LOW.

\begin{algorithm}
\caption{Find $k$ ones in $x$ with probability $1-\delta$, assuming $|x| \ge t > 2k$}
\begin{algorithmic}[1]
\Procedure{High}{$x,t,\delta$}
\State $S = \emptyset$
\State $\ell =1$
\While{$\ell \le 5\max(k,\lceil\log(1/\delta)\rceil)$ and $|S| < k$}
\State $\ell \gets \ell+1$
\State Grover search assuming $|x| \ge t/2$. 
\If{find $x_i=1$} $S \gets S \cup {i}$, $x \gets x \setminus x_i$
\EndIf
\EndWhile
\State \Return $S$
\EndProcedure
\end{algorithmic}
\end{algorithm}

\begin{lemma}
\label{lem:high}
Fix $\delta$ and let $M=\max\{k, \lceil\log(1/\delta)\rceil\}$.  Suppose that $|x| \ge t > 2k$.  Then procedure HIGH($x,t,\delta$) 
makes $O(M\sqrt{n/t})$ queries and returns a set $S$ with $|S|=k$ and $x_i=1$ for all $i \in S$ with probability at least $1-\delta$.  
\end{lemma}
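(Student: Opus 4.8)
The plan is to analyze procedure HIGH directly, using as the only external input the stated guarantee on ``Regular Grover'' (and one extra query per iteration to verify the index it returns).

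\textbf{Query bound and the key invariant.} First I would record the invariant that drives everything: as long as $|S|<k$, the current input still has more than $t/2$ ones. Indeed, we begin from $|x|\ge t$, every iteration that updates $S$ deletes exactly one $1$-coordinate, and the loop runs only while $|S|<k$, so at most $k-1$ ones have been removed and the current Hamming weight is at least $t-(k-1)>t/2$, where the last step uses $t>2k$. Hence every call to ``Grover search assuming $|x|\ge t/2$'' is made under a satisfied promise, so by the Regular Grover guarantee it costs $O(\sqrt{n/t})$ queries and returns, after one verification query, an index $i$ with $x_i=1$ with probability at least $1/2$; if verification fails we just treat the iteration as unsuccessful. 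The while loop performs at most $5M$ iterations, so the total query count is $O(M)\cdot O(\sqrt{n/t})=O(M\sqrt{n/t})$, and by construction every index placed in $S$ satisfies $x_i=1$, which gives that part of the claim.

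\textbf{Success probability.} It remains to show $|S|=k$ on termination except with probability at most $\delta$. HIGH fails to return a size-$k$ set only if, among its $5M$ Grover iterations, fewer than $k$ succeed. Conditioned on any history in which $|S|<k$ still holds at that point, the next iteration succeeds with probability at least $1/2$, by the invariant above; since we only run iterations while $|S|<k$, this lets us couple the sequence of iteration-outcomes so that the number of successes stochastically dominates $\mathrm{Bin}(5M,1/2)$, and the failure probability is at most $\Pr[\mathrm{Bin}(5M,1/2)<k]$. Because $M\ge k$ we have $k\le M=\tfrac15(5M)$, far below the mean $\tfrac12(5M)$, so a Chernoff/entropy tail bound gives $\Pr[\mathrm{Bin}(5M,1/2)<k]\le e^{-\Omega(M)}$; using $M\ge\lceil\log(1/\delta)\rceil$ and the constant $5$ in the loop length (any sufficiently large constant works) makes this at most $\delta$, finishing the proof.

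\textbf{Main obstacle.} The only delicate points are the two flagged above: (1) verifying the Hamming weight never reaches the Grover threshold, which is precisely where $t>2k$ (strict) is used rather than $t\ge 2k$; and (2) forcing the binomial lower-tail bound below $\delta$ with the stated loop length --- this pins down why one takes a constant like $5$, and one can make the Chernoff computation completely routine by first amplifying each Grover call to success probability $9/10$ (still $O(\sqrt{n/t})$ queries) before applying a crude tail bound. Everything else is bookkeeping.
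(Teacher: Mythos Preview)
Your proposal is correct and follows essentially the same approach as the paper: establish the invariant that at least $t/2$ ones remain while $|S|<k$ (using $t>2k$), so each Grover call succeeds with probability $\ge 1/2$ at cost $O(\sqrt{n/t})$, and then bound the probability of fewer than $k$ successes in $5M$ trials by a binomial tail bound. The paper makes the final tail bound explicit via the entropy inequality $\sum_{i<k}\binom{5M}{i}2^{-5M}\le 2^{-(1-H(1/5))5M}\le 2^{-M}\le\delta$, whereas you invoke a generic Chernoff bound; your added remarks about verification queries and stochastic domination are slightly more careful than the paper but not materially different.
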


\begin{proof}
As each Grover search takes $O(\sqrt{n/t})$ queries, in total the 
procedure makes $O(M\sqrt{n/t})$ queries.  Let us now estimate the probability that it exits without finding 
a set $S$ of size $k$.

As we are given that initially $|x|\ge t > 2k$, if less than $k$ ones are found then throughout the algorithm there remain 
at least $t/2$ ones in $x$.  Thus each run of Grover has probability of success at least 
$1/2$.  The probability to have fewer than $k$ successes among the $5M$ runs is therefore at most
\[
\frac{1}{2^{5M}}\sum_{i=0}^{k-1}{5M\choose i}\leq 2^{-(1-H(k/5M))5M}\leq 2^{-M}\leq 2^{-\log(1/\delta)}=\delta,
\]
where $H(\cdot)$ denotes binary entropy, and we used that $1-H(k/5M)\geq 1-H(1/5)\geq 1/5$.
\end{proof}

Next we give the algorithm LOW.

\begin{algorithm}
\caption{Outputs $(|x|-k)(|x|-k-1)$ with certainty if $|x| \le t$}
\begin{algorithmic}[1]
\Procedure{Low}{$x,t$}
\State $S=\emptyset$
\For{$i=t$ to $1$}
\State Exact Grover search assuming $|x|=i$
\If{find $x_i=1$} $S \gets S \cup {i}$, $x \gets x \setminus x_i$
\EndIf
\EndFor
\State Output $(|S|-k)(|S|-k-1)$.
\EndProcedure
\end{algorithmic}
\end{algorithm}

\begin{claim}
\label{claim:low}
If $|x| \le t$, then LOW(x,t) outputs $(|x|-k)(|x|-k-1)$ and makes $O(\sqrt{tn})$ queries.  
\end{claim}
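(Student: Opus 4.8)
The plan is to verify the two assertions separately: first the query complexity, then the correctness of the output. For the query bound, observe that the \textsc{Low} procedure executes the main loop for $i = t, t-1, \ldots, 1$, and in each iteration runs one Exact Grover search that assumes the current Hamming weight equals~$i$. By the stated property of Exact Grover, a single such search uses $O(\sqrt{n/i})$ queries when the assumed weight~$i$ is correct. I would bound the total number of queries by $\sum_{i=1}^{t} O(\sqrt{n/i}) = O(\sqrt{n}) \cdot \sum_{i=1}^{t} i^{-1/2} = O(\sqrt{n}) \cdot O(\sqrt{t}) = O(\sqrt{tn})$, using the standard estimate $\sum_{i=1}^t i^{-1/2} = \Theta(\sqrt{t})$. (A minor subtlety: when the assumed weight~$i$ does not match the true remaining weight, one should still be able to cap the number of queries of that Grover call at $O(\sqrt{n/i})$ by simply halting it after that many steps; this does not affect the asymptotic bound.)

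For correctness, I would argue by downward induction on~$i$ that the procedure removes the ones of~$x$ one at a time, exactly, so that after the loop finishes the set $S$ contains precisely the positions of the ones of the original input, hence $|S| = |x|$. The key point is that Exact Grover, when run under the \emph{correct} assumption on the current Hamming weight, finds a $1$-entry with certainty. So I would track the invariant: just before the iteration with loop index~$i$, if $|x| \le t$ originally then the number of ones currently remaining in (the modified)~$x$ is at most~$i$, and moreover it equals~$i$ exactly at the first iteration where a $1$ is still present. Concretely, since the true initial weight is some $w \le t$, the iterations $i = t, t-1, \ldots, w+1$ are run with an over-estimate; I need to check that in these iterations the Grover search either finds nothing or does no harm — this is where I would need to be slightly careful, but since the assumed weight exceeds the true weight, the natural behavior is that the search fails to find a one (or we can design it to), leaving $x$ unchanged. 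Then at $i = w$ the assumption is correct, a one is found with certainty and removed; at $i = w-1$ the remaining weight is exactly $w-1$, again the assumption is correct, and so on down to $i=1$. Thus all $w$ ones get collected into $S$, giving $|S| = w = |x|$, and the output $(|S|-k)(|S|-k-1) = (|x|-k)(|x|-k-1)$ as claimed.

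The main obstacle I anticipate is handling the iterations where the assumed weight strictly exceeds the true remaining weight (the ``warm-up'' phase $i > w$): one must confirm that Exact Grover under a too-large weight assumption does not spuriously report a $1$-position (which would corrupt~$S$) and does not blow up the query count. This is really a statement about the precise guarantee of the Exact Grover primitive cited from~\cite{bhmt:countingj}, so I would resolve it either by invoking the exact-counting/amplitude-estimation form of that primitive (which lets one detect the ``no solution under this assumption'' case) or by noting that once $S$ already has size equal to the true weight, every later iteration simply searches an all-zero string and returns nothing. Given that, the rest is the routine induction and the harmonic-type sum above.
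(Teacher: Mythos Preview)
Your proposal is correct and follows essentially the same approach as the paper: the query bound via $\sum_{i=1}^{t} O(\sqrt{n/i}) = O(\sqrt{tn})$, and correctness via the invariant that the loop index $i$ is always at least the current remaining Hamming weight.

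The only difference is in how the ``warm-up'' phase is handled. You worry about what Exact Grover does when its assumed weight exceeds the true weight, and suggest resolving this by appealing to properties of the primitive from~\cite{bhmt:countingj}. The paper avoids this entirely: since the algorithm only updates $S$ and $x$ when it genuinely \emph{finds $x_i=1$} (a verified hit), it does not matter whether Exact Grover returns garbage, returns nothing, or happens to return a real one under a wrong assumption. In every case the invariant $i \ge (\text{current weight})$ is preserved: if a real one is found both sides drop by~$1$; otherwise only $i$ drops, and we had $i > (\text{current weight})$ strictly (since equality would have guaranteed a hit). Then either $i$ meets the current weight at some point---after which every subsequent assumption is exactly right and all remaining ones are found with certainty---or the loop ends with current weight~$0$. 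This dichotomy disposes of your anticipated obstacle without any further analysis of the Grover primitive.
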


\begin{proof}
The number of queries is
\[
\sum_{i=1}^t O(\sqrt{n/i}) = O(\sqrt{tn}) \enspace.
\]
Next we show that if $|x| \le t$, then LOW$(x,t)$ will find all of the ones in $x$ (this is similar to~\cite{graaf&wolf:qyao}).  Initially the index $i=t$ and thus 
$i \ge |x|$.  This invariant is maintained throughout the algorithm.  If ever $i=|x|$ then we will find all the remaining 
ones in $x$ as our guess for the number of ones is always correct after this point.  On the other hand, if the algorithm 
terminates with $i=1>|x|$ then we have found all the ones in the original input $x$.
\end{proof}

With these procedures in place, we can describe the main algorithm and prove its correctness.
\begin{algorithm}
\caption{Main}
\begin{algorithmic}[1]
\Procedure{Main}{$x,\eps$}
\State $m=\max(k,\ceil{\log(1/\eps)})$
\For{$i=1$ to $\floor{\log (n/m)}$}
\State $t \gets 2^i m$
\State $\delta \gets \eps/(4t^2)$
\State $S$=HIGH$(x,t,\delta)$
\If{$|S|= k$} 
\State SAMPLE$(x,S)$
\State Exit
\EndIf
\EndFor
\State LOW$(x,2m)$
\EndProcedure
\end{algorithmic}
\label{alg:main}
\end{algorithm}

\begin{theorem}
\label{thm:alg}
For every $x \in \{0,1\}^n$, the expected value of Main$(x,\eps)$ differs from 
$(|x|-k)(|x|-k-1)$ by at most $\eps$.  The algorithm makes at most 
$O(\sqrt{kn} + \sqrt{n \log(1/\eps)})$ queries.  
\end{theorem}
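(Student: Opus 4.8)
The plan is to analyze Main$(x,\eps)$ directly, controlling its expected output and its worst-case query count by chaining together the guarantees of its three subroutines --- \claimref{claim:sample}, \lemref{lem:high}, and \claimref{claim:low}; via \thmref{thm:klw} this then yields the upper bound of \thmref{thm:ell_infty}(2). Throughout write $m=\max\{k,\ceil{\log(1/\eps)}\}$, and at stage $i$ of the for-loop write $t_i=2^i m$ and $\delta_i=\eps/(4t_i^2)$. One degenerate case is immediate: if $m\ge n/2$ the loop is empty and Main just runs LOW$(x,2m)$, which by \claimref{claim:low} outputs $f_k(x)$ exactly (since $2m\ge n\ge|x|$) using $O(\sqrt{mn})=O(\sqrt{kn}+\sqrt{n\log(1/\eps)})$ queries (cap LOW's loop at $|x|$). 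So assume $m<n/2$, in which case the loop runs over $i=1,\dots,L:=\floor{\log(n/m)}\ge1$.

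The structural heart of the correctness argument is that Main can err in \emph{only one} way. Whenever Main reaches its SAMPLE step it is because HIGH returned a set $S$ with $|S|=k$, and since HIGH only ever enlarges $S$ by a coordinate it has just queried and seen to equal $1$, this $S$ genuinely consists of $1$-coordinates of $x$; so by \claimref{claim:sample} the expected output is \emph{exactly} $(|x|-|S|)(|x|-|S|-1)=f_k(x)$, no matter which stage produced $S$. Hence Main's expectation can deviate from $f_k(x)$ only when Main falls through all $L$ stages and runs LOW$(x,2m)$ --- and by \claimref{claim:low} even that is exactly correct whenever $|x|\le 2m$. So everything comes down to the case $|x|=w>2m$.

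This is the crux. Let $i^\ast$ be the largest index with $t_{i^\ast}\le w$; then $1\le i^\ast\le L$, stage $i^\ast$ is executed unless Main exits earlier (which is harmless), and $t_{i^\ast}>w/2$ because $t_{i^\ast+1}=2t_{i^\ast}>w$ by maximality. Since $w\ge t_{i^\ast}\ge 2m\ge 2k$, \lemref{lem:high} applies at stage $i^\ast$ (at the boundary $t_{i^\ast}=2k$ one checks its argument still goes through because $w>2k$), so HIGH returns a size-$k$ set --- causing Main to exit --- except with probability at most $\delta_{i^\ast}$; therefore $\Pr[\text{Main reaches LOW}]\le\delta_{i^\ast}=\eps/(4t_{i^\ast}^2)<\eps/w^2$. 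On the other hand, conditioned on reaching LOW, Main outputs $(|S'|-k)(|S'|-k-1)$ for some $|S'|\le 2m$ (LOW$(x,2m)$ performs $2m$ iterations, each adding at most one index), a nonnegative quantity that is at most $4m^2\le w^2$ (using $w>2m$), while $f_k(x)=(w-k)(w-k-1)\le w^2$ is also nonnegative; since both lie in $[0,w^2]$ they differ by at most $w^2$. Hence the expected output differs from $f_k(x)$ by at most $\Pr[\text{reach LOW}]\cdot w^2<\eps$. The point to be most careful about is exactly this matching of scales: the schedule $\delta_i=\eps/(4t_i^2)$ is engineered so that failure at the stage whose threshold is comparable to $w$ happens with probability $\lesssim\eps/w^2$, precisely cancelling the $w^2$-scale error a misfired LOW on a weight-$w$ input can otherwise produce.

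For the query count (worst-case, not just in expectation): SAMPLE costs $O(1)$; a call to LOW$(x,2m)$, if it happens, costs $O(\sqrt{mn})$ by \claimref{claim:low}; and HIGH at stage $i$ costs $O(M_i\sqrt{n/t_i})$ by \lemref{lem:high} with $M_i=\max\{k,\ceil{\log(1/\delta_i)}\}=O(k+\log t_i+\log(1/\eps))=O(k+i+\log m+\log(1/\eps))$. The one thing to get right is not to bound $\log t_i$ by the crude $\log n$: writing $\sqrt{n/t_i}=\sqrt{n/m}\,2^{-i/2}$, the geometric factor $2^{-i/2}$ makes $\sum_{i\ge1}2^{-i/2}$ and $\sum_{i\ge1}i\,2^{-i/2}$ both $O(1)$, so $\sum_{i=1}^L M_i\sqrt{n/t_i}=O\!\left((k+\log m+\log(1/\eps))\sqrt{n/m}\right)$; since $k+\log(1/\eps)\le 2m$ and $\log m\le m$ this is $O(\sqrt{mn})$. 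Altogether Main makes $O(\sqrt{mn})=O(\sqrt{kn}+\sqrt{n\log(1/\eps)})$ queries.
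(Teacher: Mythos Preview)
Your proof is correct and follows essentially the same approach as the paper's: both isolate the single failure mode (reaching LOW when $|x|>2m$), bound the probability of reaching LOW by $\delta_{i^\ast}$ at the stage with $t_{i^\ast}\le|x|<2t_{i^\ast}$, and match this against an $O(|x|^2)$ bound on the conditional error; the query count is likewise handled by the same geometric summation over stages. Two minor cosmetic differences: the paper invokes \remarkref{rem:flip} to assume $k\le n/2$ rather than treating the degenerate $m\ge n/2$ case separately (your parenthetical ``cap LOW's loop at $|x|$'' should really say ``at $n$'', since $|x|$ is unknown), and the paper bounds the LOW output by $f_k(x)$ directly rather than placing both in $[0,w^2]$ --- but these lead to the same conclusion.
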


\begin{proof}
Following \remarkref{rem:flip} we may assume that $k \le n/2$.  First we verify the stated complexity of the algorithm.  Note 
that by definition of $m$ in the main \algoref{alg:main}, 
it suffices to show that the 
algorithm makes $O(\sqrt{nm})$ queries.  By \claimref{claim:low} the call to LOW$(x,2m)$ makes 
$O(\sqrt{mn})$ queries, and by \claimref{claim:sample} there are at most 2 queries made by SAMPLE as this is called at most 
once.  Finally, the number of queries in the call to HIGH when $t=2^i m$ and $\delta = \eps/(4t^2)$  is at most
\[
O\left( k\sqrt{\frac{n}{2^i m}} + \log(2^{2i+2}m^2/\eps)\sqrt{\frac{n}{2^i m}}\right)
=O\left( \sqrt{\frac{kn}{2^i}} + \sqrt{\frac{n \log(1/\eps)}{2^i}} + \log(2^{2i+2} m^2) \sqrt{\frac{n}{2^i m}} \right)
\]
where we have used the fact that $m \ge k$ and $m \ge \log(1/\eps)$.  The sum of the first two terms over $i \ge 1$ is
$O(\sqrt{kn} + \sqrt{n \log(1/\eps)})$ as desired.  As for the sum of the third term, we have 
\[
\sum_{i \ge 1} O\left( \log(2^{2i+2} m^2) \sqrt{\frac{n}{2^i m}} \right) = O \left( \log(m) \sqrt{\frac{n}{m}} \right) = O(\sqrt{n}) 
\enspace .
\]
We now verify correctness.  If $|x| \le 2m$ then the algorithm will output $(|x|-k)(|x|-k-1)$ in expectation exactly: if $k$ 
ones are found in $x$ by a call to HIGH then this will be done by SAMPLE, otherwise all ones in $x$ will be found 
with certainty by LOW, which will then output correctly.  If $|x| > 2m$ and a call to HIGH succeeds in finding $k$ ones in 
$x$, the algorithm will also output $(|x|-k)(|x|-k-1)$ exactly, in expectation.  Let $p$ be the probability that this does not 
happen, i.e., that the output on $x$ is given by the procedure LOW.  Then the expected value of the output on $x$ is 
\[
(1-p) (|x|-k)(|x|-k-1) +p \cdot \Ex[\mathrm{LOW}(x,2m)] 
 \enspace,
\]
and the deviation from the desired output $(|x|-k)(|x|-k-1)$ is
\[
p \cdot (\Ex[\mathrm{LOW}(x,2m)] - (|x|-k)(|x|-k-1)) \enspace .
\]
Now $\mathrm{LOW}(x,2m)$ will always output a value $0 \le (\ell-k)(\ell-k-1)$ for some $\ell \in [2m]$, which is always 
at most the correct value $(|x|-k)(|x|-k-1)$ as $|x| \ge 2 m > k$.  Therefore the largest difference between these 
is when $\mathrm{LOW}(x,2m)$ outputs $0$, giving
\[
|p \cdot (\Ex[\mathrm{LOW}(x,2m)] - (|x|-k)(|x|-k-1))| \le p\cdot (|x|-k)(|x|-k-1) \le p\cdot |x|^2 \enspace .
\]
We now finally upper bound this error by giving an upper bound on $p$.  
Let $i$ and $t=2^i m$ be such that $t \le |x| < 2t$.  For this value of $t$ and $\delta=\eps/(4t^2)$ the call to 
HIGH$(x,t,\delta)$ fails to find a set $S$ of size $k$ with probability at most $\delta \le \eps/ |x|^2$.  Thus 
$p\cdot |x|^2 \le \delta\cdot |x|^2 \le\eps$, as desired.
\end{proof}

By \thmref{thm:klw}, the characterization of sos degree in terms of quantum query complexity in expectation (\thmref{thm:alg}) gives the upper bound in~\thmref{thm:ell_infty}

%%%%%%%%%%%%%%%%%%%%%%%%%%%%%%

\section{Sum-of-squares approximation in $\ell_1$-norm}
\label{sec:ell_1}
In this section, we show upper bounds on the sos degree of polynomials to approximate $f_k$ in $\ell_1$-norm. In this section we focus 
on the case where $k$ is $\floor{n/2}$.  
 When $k < 0.49n$ the function $f_k$ is quite easy to approximate in $\ell_1$-norm: there is an sos polynomial of degree $O(\ln(1/\delta))$ which gives a $\delta 2^n$-approximation. We omit the details.%  
\footnote{One way to construct such an sos polynomial is to construct a polynomial $e$ as mentioned after \thmref{thm:l1_upperintro} from a classical sampling algorithm: query $O(\ln(1/\delta))$ randomly chosen input bits; output some large number if the observed ratio of 1s is very close to $k/n$, output~0 otherwise. This induces an sos polynomial with the right properties.}  Our main result on $\ell_1$-approximation is the following.
\begin{theorem}
\label{thm:l1_upper}
Let $n$ be odd and $k=\floor{n/2}$. Then for any $8/\sqrt{2n} \le \delta \le 1/4$
\[
\sosdeg_{\delta 2^n}(f_k,\ell_1) \le \left \lceil \frac{3\sqrt{n}}{\sqrt{2}\delta} \ln\left(\frac{1}{\delta} \right) \right \rceil \enspace .
\]
\end{theorem}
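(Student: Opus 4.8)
The plan is to follow the strategy outlined after the statement of Theorem~\ref{thm:l1_upperintro}: construct an explicit low-degree univariate polynomial $e(z)$ so that $h(z) = (z-k)(z-k-1) + e(z)$ is globally nonnegative on $\R$, and so that $\eps := \sum_{i=0}^n \binom{n}{i}|e(i)|$ is at most $\delta 2^n$. Since $h$ is then a globally nonnegative univariate polynomial, it is a sum of squares, and $h(|x|)$ is an sos polynomial on $\{0,1\}^n$ of sos degree $\lceil \deg(h)/2 \rceil$ that $\ell_1$-approximates $f_k$ to within $\eps$. The only place $(z-k)(z-k-1)$ is negative is the open interval $(k,k+1)$, where it dips to a minimum of $-1/4$ at $z = k + 1/2$; so the job of $e$ is to add at least $1/4$ of ``lift'' in that tiny interval while contributing little weighted mass at the integers $0,1,\dots,n$. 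The natural choice is a scaled, shifted, and squared Chebyshev-type bump: take $e(z) = c \cdot T(z)^2$ (or a product of a bump with $(z-k)(z-k-1)$ plus a correction) where $T$ is chosen so that $T$ is large on $(k,k+1)$ and small in absolute value away from a window around $k+1/2$, of width governed by $\delta$. Because $n$ is odd and $k = \lfloor n/2 \rfloor$, the interval $(k,k+1)$ is centered exactly at $n/2$, which is the symmetric ``hard'' point, and this symmetry will be used to simplify the estimates.

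Concretely, I would proceed in the following steps. First, fix the degree parameter $D := \lceil \frac{3\sqrt{n}}{\sqrt 2 \,\delta}\ln(1/\delta)\rceil$ and define a univariate polynomial that is essentially a Chebyshev polynomial rescaled to have its ``active'' region be an interval of length $\Theta(\delta\sqrt n)$ around $n/2$; squaring it makes it nonnegative, and scaling by an appropriate constant $c = \Theta(1)$ ensures $e(z) \ge 1/4$ on all of $(k,k+1)$, so that $h = (z-k)(z-k-1)+e \ge 0$ everywhere (one checks nonnegativity separately on $(k,k+1)$ and outside it, where $(z-k)(z-k-1)\ge 0$ already and $e \ge 0$). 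Second, bound $|e(i)|$ for integer $i$: the key quantitative input is that a degree-$D$ Chebyshev polynomial rescaled so that $[-1,1]$ maps to an interval of half-width $w$ around $n/2$ decays like $\bigl(\tfrac{|i - n/2|}{w} + \sqrt{(i-n/2)^2/w^2 - 1}\bigr)^{-D}$ for $|i - n/2| > w$, i.e.\ exponentially in $D$ times a quantity growing with distance from the center. Third, sum $\binom{n}{i}|e(i)|$ over $i$: split into the $O(\delta\sqrt n)$ integers near $n/2$ (where $|e(i)| = O(1)$ and $\binom{n}{i} = O(2^n/\sqrt n)$, contributing $O(\delta 2^n)$) and the far integers, where the exponential decay of $e$ beats the binomial growth; choosing the constant $3/\sqrt2$ in $D$ and the relation $w = \Theta(\delta\sqrt n)$ makes the far-tail sum at most a constant fraction of $\delta 2^n$. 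The constraint $\delta \ge 8/\sqrt{2n}$ is exactly what guarantees the ``near'' window contains at least one integer (i.e.\ $w \ge 1$) and that the Gaussian approximation to the binomial coefficients is valid on that window; the constraint $\delta \le 1/4$ keeps $D$ from collapsing and keeps the lift estimate ($e \ge 1/4$) clean.

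The main obstacle I anticipate is the second-moment-type bookkeeping in the tail sum: one needs the decay exponent of the rescaled squared Chebyshev polynomial, namely roughly $2D \ln\bigl(1 + \tfrac{|i-n/2|}{w}\bigr)$-ish in the log, to dominate the log of the binomial coefficient, which near the center behaves like $\ln\binom{n}{i} \approx n\ln 2 - \tfrac{(i-n/2)^2}{2(n/4)} - \tfrac12\ln(\pi n/2)$. Matching these requires that $D$ be at least about $\tfrac{\sqrt n}{\delta}$ times a logarithmic factor, and getting the constant $\tfrac{3}{\sqrt2}$ (rather than something worse) out of the comparison between a quadratic decay of $\ln\binom{n}{i}$ and the at-least-linear-in-distance growth of $2D\cdot(\text{arccosh})$ is the delicate part; it amounts to choosing $w$ optimally and carefully truncating the Chebyshev tail. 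Once the tail is shown to be $o(\delta 2^n)$ and the central window contributes at most (say) $\tfrac12 \delta 2^n$, the theorem follows, since $\deg(h) = \max(2, 2D) = 2D$ gives $\sosdeg \le D$, which is the claimed bound. For the easy regime $k < 0.49n$, the footnote's classical-sampling construction of $e$ (query $O(\ln(1/\delta))$ random bits, output a large value when the empirical density is near $k/n$) directly yields the $O(\ln(1/\delta))$ bound, and I would just state that the resulting sampling polynomial has the required global nonnegativity and small weighted $\ell_1$ mass by a Chernoff estimate.
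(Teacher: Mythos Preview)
Your overall plan---build a univariate bump $e$ so that $h=(z-k)(z-k-1)+e$ is globally nonnegative, hence a univariate sos, and then split the $\ell_1$-error into a ``near $n/2$'' window and a ``far'' tail---is exactly the paper's strategy. The near-window estimate you sketch (about $\delta\sqrt{n}$ integers, each with $|e|=O(1)$ and $\binom{n}{i}=O(2^n/\sqrt n)$, giving $O(\delta 2^n)$) is also the paper's.

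Where your proposal has a real gap is the Chebyshev construction itself. You write that you will take ``a degree-$D$ Chebyshev polynomial rescaled so that $[-1,1]$ maps to an interval of half-width $w$ around $n/2$'' and that this polynomial then \emph{decays} like $\bigl(\tfrac{|i-n/2|}{w}+\sqrt{(\cdot)^2-1}\bigr)^{-D}$ outside that window. That is backwards: $T_D$ is bounded by $1$ on $[-1,1]$ and \emph{grows} like $\bigl(|u|+\sqrt{u^2-1}\bigr)^{D}$ for $|u|>1$. With your rescaling, $e=cT^2$ would be small on the window around $n/2$ (where you need it to be $\ge 1/4$) and explode on both tails (where you need it small). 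Your whole tail analysis---``exponential decay of $e$ beats the binomial growth''---is therefore built on a false premise; with the construction as stated there is no decay to exploit.

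The difficulty your construction runs into is exactly the one the paper isolates: the bump must sit \emph{between} two intervals on which $e$ is small, and a single affine rescaling of a Chebyshev polynomial can only make it small on one interval. The paper's fix is to first substitute $z\mapsto (z-n/2)^2$, turning the two-sided requirement into a one-sided one: it sets $p(z)=\eps\,q\bigl((z-n/2)^2\bigr)$ with $q=T_d\circ s$, where $s$ affinely sends the interval $[a^2,(n/2)^2]$ to $[-1,1]$. Then $q$ is bounded by $1$ on $[a^2,(n/2)^2]$ and is large at $1/4$ (so $p$ is $\ge 1/4$ on $(k,k+1)$), and choosing $d$ even keeps $q\ge 1$ beyond $(n/2)^2$ as well. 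A side benefit you should note: with this construction one gets a \emph{uniform} bound $|e(i)|\le \eps$ on the entire far region, so the tail contributes at most $\eps\,2^n$ with no comparison of decay rates to binomial growth needed. Your proposed ``second-moment-type bookkeeping'' is unnecessary once the construction is done correctly, and as written it cannot be carried out because the sign of the exponent is wrong.
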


Lee, Raghavendra, and Steurer~\cite{LRS15}, building on work of Grigoriev~\cite{G01}, show that in this case 
$\sosdeg_{2^n/\sqrt{n}}(f, \ell_1) \ge (n-1)/2$.  This lower bound was then plugged into their general theorem to 
lift $\ell_1$-approximate sos degree lower bounds to lower bounds on semidefinite extension complexity.  
By taking $\delta = 3\ln(n)/\sqrt{2n}$, \thmref{thm:l1_upper} shows that this lower bound on the 
$\ell_1$-error is tight, up to a logarithmic factor.  Also, taking $\delta$ to be a small additive constant shows that there is a 
degree-$O(\sqrt{n})$ sos polynomial which, on average, disagrees with $f_k$ by only a small constant.  Taken as a whole, 
\thmref{thm:l1_upper} implies that the quantitative bounds on the semidefinite extension complexity of 
the correlation polytope of \cite{LRS15} cannot be improved simply by improving the sos degree lower bounds on the $f_k$.  We now describe the connection to \cite{LRS15} in greater detail.

\subsection{The theorem of Lee, Raghavendra, and Steurer}
For a function $f: \{0,1\}^n \rightarrow \mathbb{R}$ and an integer $N \ge n$, let 
$M_N^f: \binom{N}{n} \times \{0,1\}^N \rightarrow [0,1]$ be the matrix where $M_N^f(S,x)=f(x|_S)$.  The 
\emph{pattern matrix} of $f$, introduced in the work of Sherstov \cite{She09b}, is a submatrix of $M_N^f$.
The main theorem of Lee, Raghavendra, and Steurer is the following statement.
Here a \emph{degree-$d$ pseudo-density} $D$ is a function $D:\{0,1\}^n \rightarrow \mathbb{R}$ such that $ \Ex_x[D(x)] = 1$ and $\Ex_x[D(x) g(x)^2] \geq 0$ for all polynomials $g$ of degree at most $d/2$ on the boolean cube, with the expectation over a uniformly random $x \in \B^n$. We use $\|D \|_\infty$ to denote $\max_{x \in \B^n} |D(x)|$.

\begin{theorem}[\cite{LRS15}]
\label{thm:LRS}
Let $f: \{0,1\}^n \rightarrow [0,1]$.
  If there exists an 
$\eps \in (0,1]$ and a degree-d pseudo-density $D:\{0,1\}^n \rightarrow \mathbb{R}$ satisfying 
$\mathbb{E}_x [D(x) f(x)] < -\eps$, then for every $N \ge 2n$
\[
\psdrk(M_N^f) \ge \left( \frac{c \eps N}{dn^2 \|D\|_\infty \log n}\right)^{d/4} 
\left(\frac{\eps}{\|D\|_\infty} \right)^{3/2} \sqrt{\mathbb{E}_x f(x)} \enspace ,
\]
where $c > 0$ is a universal constant.
\end{theorem}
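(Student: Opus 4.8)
The plan is to run the ``junta method for semidefinite programming lower bounds'': convert a small psd factorization of $M_N^f$ into an honest low-degree sum-of-squares certificate for $f$ on the cube $\B^n$, and then contradict the dual witness $D$. For Step~1, suppose $\psdrk(M_N^f)=r$ and fix psd matrices $A_S,B_x\in\R^{r\times r}$ with $f(x|_S)=\langle A_S,B_x\rangle$ for all $S\in\binom{[N]}{n}$ and $x\in\B^N$. After rescaling we may assume $\Tr(B_x)\le 1$, and since $f\le 1$ the $A_S$ are bounded as well. Diagonalizing, $f(x|_S)=\sum_{a,b}|\langle u^S_a\mid v^x_b\rangle|^2$ is a conic combination of squared inner products of vectors in an $r$-dimensional space; equivalently, $x\mapsto B_x$ is a width-$r$ ``quantum message'' from which one must recover $f(x|_S)$ for each of the $\binom{N}{n}$ restrictions~$S$. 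Passing to the pattern-matrix submatrix of $M_N^f$ makes the relevant combinatorics cleaner, and the restriction structure is what will let us transfer a certificate for $M_N^f$ back to a certificate for $f$ on $\B^n$.

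Step~2 is the main step. Here the hypothesis $N\ge 2n$ is used crucially: for a uniformly random $n$-subset $S$, $x|_S$ is uniform on $\B^n$, yet all $\binom{N}{n}$ maps $x\mapsto B_x$ live in the same fixed $r$-dimensional space. This over-determination should force the ``useful content'' of $B_x$ to be carried by the degree-$\le d$ part of $x$. Making this quantitative is the matrix-valued analogue of the Chan--Lee--Raghavendra--Steurer junta theorem: using hypercontractivity on the large cube $\B^N$, a carefully chosen net of restrictions, and a union bound, one extracts degree-$\le d/2$ polynomials $g_1,\ldots,g_m$ on $\B^n$ and coefficients $\lambda_i\ge 0$ with
\[
\textstyle\sum_i \lambda_i\, g_i(x)^2 \;\le\; f(x)\ \ \text{for all }x\in\B^n, \qquad \Ex_x\Big|\,f(x)-\textstyle\sum_i\lambda_i g_i(x)^2\,\Big|\;\le\;\frac{\eps}{2\|D\|_\infty}.
\]
The price of this extraction is a lower bound on $r$: the combinatorics of the degree-$d$ net costs the factor $\big(cN/(dn^2\|D\|_\infty\log n)\big)^{d/4}$ --- the exponent $d/4$ because the $r$-dimensional message must ``beat'' a degree-$d$ net, the $n^2$ from the quadratic width of the restriction functions, the $\log n$ from the union bound, and $\|D\|_\infty$ because the certificate error must be measured against $D$ rather than merely in $\ell_1$ --- while rounding the trace-$\le 1$ quantum estimator into an actual sampling procedure of additive accuracy $\eps$ on a function of average value $\Ex_x f(x)$ costs the further factor $(\eps/\|D\|_\infty)^{3/2}\sqrt{\Ex_x f(x)}$.

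For Step~3, pair the certificate with $D$: since $\Ex_x[D(x)g(x)^2]\ge 0$ for every $g$ of degree $\le d/2$, summing over $i$ gives $\Ex_x[D(x)\sum_i\lambda_i g_i(x)^2]\ge 0$, and $|\Ex_x[D(x)(f(x)-\sum_i\lambda_i g_i(x)^2)]|\le \|D\|_\infty\cdot\frac{\eps}{2\|D\|_\infty}=\eps/2$, so $\Ex_x[D(x)f(x)]\ge -\eps/2 > -\eps$, contradicting $\Ex_x[D(x)f(x)]<-\eps$; hence $r$ exceeds the claimed quantity, and unwinding the budget of Step~2 yields exactly the stated inequality. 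The clear main obstacle is Step~2: proving that a width-$r$ psd factorization of a pattern matrix really does yield an \emph{honest} degree-$d$ sum of squares on $\B^n$ (not merely some nonnegative function), with the stated dependence on $N$, $r$, $n$, and $\|D\|_\infty$. This requires hypercontractivity on $\B^N$ with the right choice of restriction family so that the union bound costs only $\log n$; keeping every extracted operator positive semidefinite throughout the rounding so that the output is genuinely a sum of squares; and tight amplitude bookkeeping to land the delicate exponents $d/4$ and $3/2$ together with the polynomial factor $n^2$. Steps~1 and~3 are then essentially normalization and duality bookkeeping.
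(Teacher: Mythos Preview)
This theorem is not proved in the paper at all: it is stated with the attribution \cite{LRS15} in the theorem header and introduced as ``The main theorem of Lee, Raghavendra, and Steurer is the following statement.'' The paper invokes it as a black box to explain why upper bounds on $\sosdeg_{\delta 2^n}(f_k,\ell_1)$ limit what \thmref{thm:LRS} can yield, but supplies no argument for the theorem itself. There is therefore no ``paper's own proof'' against which to compare your proposal.

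As to the proposal on its own merits: the outer shell (Steps~1 and~3) is the right duality picture --- a small psd factorization of $M_N^f$ should yield a low-degree sos under-approximation of $f$ on $\B^n$, and pairing that with the pseudo-density $D$ gives the contradiction. But Step~2, which you correctly flag as the crux, is not a proof; it is a list of plausible-sounding attributions for each factor in the bound (``$d/4$ because the message must beat a degree-$d$ net,'' ``$\log n$ from the union bound,'' etc.) without any mechanism that actually produces degree-$d/2$ polynomials $g_i$ from the psd factors $A_S,B_x$. The actual \cite{LRS15} argument goes through a quantitative \emph{degree-reduction} theorem for psd factorizations (random restriction plus approximation theory to show high-degree Fourier mass of the factors must be small), not a ``net plus union bound'' as you suggest, and the specific exponents arise from that analysis rather than from the heuristics you give. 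So the proposal identifies the right contradiction framework but does not contain the substantive idea that makes Step~2 go through.
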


We do not formally define here the positive semidefinite (psd) rank of a matrix (denoted by $\psdrk$ above), but remark that psd rank lower bounds are equivalent to semidefinite extension complexity lower bounds. Lee, Raghavendra, and Steurer proved \thmref{thm:LRS} en route to their breakthrough result on superpolynomial size lower bounds for semidefinite programming relaxations of hard optimization problems. 

The more pertinent aspect of \thmref{thm:LRS} to us is the role of the degree-$d$ pseudo-density~$D$.  Note that, 
once we fix the degree of the pseudo-density, the bound only depends on the ratio 
$\Ex_x [D(x) f(x)] / \|D\|_\infty$.  The largest 
such ratio that a degree-$d$ pseudo-density can achieve is closely related to the best $\ell_1$-approximation of $f$ by degree-$d/2$ sos polynomials.
The following claim follows from strong duality of semidefinite programming.

\begin{claim}
\label{clm:ell_1_duality}
Let $f : \{0,1\}^n \rightarrow \mathbb{R}$.  Then $\sosdeg_{\delta 2^n}(f,\ell_1) >d$ if and only if there exists a ``witness'' 
function $\psi: \{0,1\}^n \rightarrow \R$ satisfying $\Ex_x[f(x) \psi(x)] > \delta$, and $\Ex_x [ p^2(x) \psi(x)] \le 0$ for all 
polynomials $p$ of degree at most $d$, and $\|\psi\|_\infty = 1$.  
\end{claim}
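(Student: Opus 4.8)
The plan is to set up the existence of a degree-$d$ sos polynomial that $\delta 2^n$-approximates $f$ in $\ell_1$-distance as the optimal value of a semidefinite program, and then appeal to SDP strong duality to extract the witness $\psi$. First I would recall from the footnote in \secref{sec:applications} the standard correspondence: every sos polynomial $h$ of degree $\le d$ on $\{0,1\}^n$ is exactly of the form $h(x) = \Tr(Z M_x)$ for a psd matrix $Z$, where $M_x = m_x m_x^T$ and $m_x$ is the vector of monomials of degree $\le d$ evaluated at $x$. So the question of whether $\sosdeg_{\delta 2^n}(f,\ell_1) \le d$ is the question of whether there is a psd matrix $Z$ with $\sum_{x} |f(x) - \Tr(Z M_x)| \le \delta 2^n$. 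I would linearize the absolute values in the usual way by introducing slack variables $u_x \ge f(x) - \Tr(Z M_x)$ and $u_x \ge \Tr(Z M_x) - f(x)$ for each $x$, and minimize $\frac{1}{2^n}\sum_x u_x$ over $Z \succeq 0$ and $u \ge 0$. This is a genuine SDP (the psd constraint on $Z$ plus linear constraints), and $\sosdeg_{\delta 2^n}(f,\ell_1) \le d$ holds iff its optimal value is $\le \delta$; equivalently $\sosdeg_{\delta 2^n}(f,\ell_1) > d$ iff the optimal value is strictly greater than $\delta$.

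Next I would compute the dual. Attaching a dual variable $\psi(x)$ (of unrestricted sign, scaled appropriately by $2^{-n}$) to the equality that ties $\Tr(Z M_x)$ to the slack decomposition, and nonnegative multipliers to the inequalities $u_x \ge \pm(f(x)-\Tr(ZM_x))$, the dual works out to: maximize $\Ex_x[f(x)\psi(x)]$ subject to $\|\psi\|_\infty \le 1$ (this comes from the constraint dual to the $u_x$ variables, which are bounded only by being nonnegative and contributing $2^{-n}$ each to the objective) and $\sum_x \psi(x) M_x \preceq 0$ — the last being dual to $Z \succeq 0$. The matrix condition $\sum_x \psi(x) M_x \preceq 0$ is precisely the statement that $\Ex_x[p(x)^2 \psi(x)] \le 0$ for every polynomial $p$ of degree $\le d$, since $p(x)^2 = \Tr(pp^T M_x)$ and the set of matrices of the form $pp^T$ spans (by taking combinations) all psd matrices. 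Strong duality for SDPs — which holds here because the primal is strictly feasible (take $Z$ to be a large multiple of the identity, or simply $Z=0$ together with large enough $u_x$, giving a Slater point) — then tells me the optimal primal and dual values coincide. Hence $\sosdeg_{\delta 2^n}(f,\ell_1) > d$ iff the dual optimum exceeds $\delta$, i.e.\ iff there is a feasible $\psi$ with $\Ex_x[f(x)\psi(x)] > \delta$; and one may rescale $\psi$ so that $\|\psi\|_\infty = 1$ exactly (scaling up can only increase the objective, which stays $>\delta$ if it was, and the matrix and norm constraints are invariant under scaling down).

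The main obstacle I anticipate is verifying a Slater-type condition carefully enough to justify strong duality with no duality gap — SDPs can have gaps — and making sure the two formulations ("$\le d$" as optimal value $\le \delta$ versus "$>d$" as a strict inequality) line up with the strict inequalities in the claim statement. One has to be a little careful that the approximation error is attained (the feasible region of $Z$ is not compact a priori, but one can restrict $Z$ to a large bounded ball without loss, since an optimal $h$ can be taken with bounded coefficients on the finite domain $\{0,1\}^n$), so that "inf" is "min" and the clean iff holds. The identification of $\sum_x \psi(x) M_x \preceq 0$ with the condition "$\Ex_x[p^2(x)\psi(x)] \le 0$ for all $\deg p \le d$" is routine linear algebra once the monomial-matrix dictionary is in place. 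I would present these as short verifications rather than belabor them, since the conceptual content is entirely the SDP duality.
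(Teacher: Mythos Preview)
Your proposal is correct and follows precisely the route the paper indicates: the paper does not give a detailed proof but simply states that the claim ``follows from strong duality of semidefinite programming,'' and your argument is the natural unpacking of that assertion. Your attention to the Slater condition and to attainment of the primal (so that the strict inequality $>\delta$ transfers cleanly) is appropriate and handles the only genuine subtleties.
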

For $n$ odd and $k = \floor{n/2}$ Lee et al.\ \cite{LRS15}, building on work of Grigoriev \cite{G01}, show that there is a 
degree-$(n-1)$ pseudo-density~$D$ such that $\Ex_x [D(x) f_{\floor{n/2}}]/\|D\|_\infty < -\tfrac{1}{4\sqrt{n}}$.  Plugging 
$f =f_{\floor{n/2}}/n^2$ (with this normalization the range is in $[0,1]$) into \thmref{thm:LRS} gives a lower bound of
$2^{\widetilde{\Omega}(N^{2/11})}$ on the psd rank of $M_N^{f}$ for $N=\widetilde{O}(n^{11/2})$.  As $M_N^{f}$ is a 
submatrix of the slack matrix of the correlation polytope, this gives the desired lower bound on the semidefinite extension 
complexity of the correlation polytope. 

In light of \claimref{clm:ell_1_duality}, if $\sosdeg_{\delta 2^n}(f,\ell_1) \le d$, then there can be no degree-$2d$ 
pseudo-density with $\Ex_x [D(x) f(x)] / \|D\|_\infty < - \delta$.  The $\ell_1$-approximate sos 
degree upper bounds of \thmref{thm:l1_upper} therefore imply the non-existence of pseudo-densities with good 
properties for \thmref{thm:LRS}.  It 
can be verified that $2^{\widetilde{\Omega}(N^{2/11})}$ is in fact the best quantitative bound that 
\thmref{thm:LRS} can show on $\psdrk(M_N^{f_k})$ over all the functions $f_k$ and tradeoffs between $\delta$ and 
$\sosdeg_{\delta 2^n}(f_k,\ell_1)$.

\subsection{Proof of~\thmref{thm:l1_upper}}
Throughout this proof we set $f=f_{\floor{n/2}}$.
The main idea of the proof of~\thmref{thm:l1_upper} is to construct a univariate polynomial $p$ such that 
$h(z)=(z-\floor{n/2})(z-\ceil{n/2})+p(z)$ is 
globally nonnegative (and therefore sos) and $\sum_{i=0}^n \binom{n}{i} |p(i)|$, which is the $\ell_1$-error of $h(|x|)$ in approximating $f$, is 
reasonably small.  We will construct $p$ using Chebyshev polynomials.  Similar constructions to what we need have 
been done before, see for example \cite{She09};  as our requirements are somewhat specific, however, we do the 
construction from scratch.

Let $T_d$ be the Chebyshev polynomial of 
degree $d$.  We first recall some basic facts about Chebyshev polynomials \cite{Riv69}.
\begin{fact} 
\label{fac:cheb}
Let $T_d(z)$ be the Chebyshev polynomial of degree $d$.  Then
\begin{enumerate}
\item $|T_d(z)| \le 1$ for $z \in [-1,1]$. 
\item $T_d(z) = \tfrac{1}{2}\left( (z-\sqrt{z^2-1})^d + (z+\sqrt{z^2-1})^d \right)$.
\item $T_{d+1}(z) = 2zT_{d}(z) - T_{d-1}(z)$.
\item $T_d(z)$ is monotonically increasing for $z \ge 1$, and if $d$ is even $T_d(z) \ge 1$ for 
$z \in \R \setminus [-1,1]$.
\end{enumerate}
\end{fact}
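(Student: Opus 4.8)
The plan is to adopt the standard trigonometric characterization $T_d(\cos\theta)=\cos(d\theta)$ as the working definition of the degree-$d$ Chebyshev polynomial and to derive all four items from it, treating the recurrence~(3) first since it is what certifies that $T_d$ is genuinely a polynomial. First I would record the elementary identity $\cos((d+1)\theta)+\cos((d-1)\theta)=2\cos\theta\cos(d\theta)$; setting $z=\cos\theta$ this says $T_{d+1}(z)+T_{d-1}(z)=2z\,T_d(z)$ for every $z\in[-1,1]$, and since both sides are polynomials in $z$ that agree on an infinite set they agree identically, which is item~(3). A trivial induction from $T_0\equiv1$ and $T_1(z)=z$ then shows $T_d$ is a polynomial of degree exactly $d$, so the recurrence also legitimizes speaking of ``the Chebyshev polynomial $T_d$'' in the first place. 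Item~(1) is now immediate: given $z\in[-1,1]$, choose $\theta$ with $\cos\theta=z$, so $|T_d(z)|=|\cos(d\theta)|\le1$.

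For item~(2) I would write $P_d(z)=\tfrac12\big((z-\sqrt{z^2-1})^d+(z+\sqrt{z^2-1})^d\big)$ and first note, by the binomial theorem, that $P_d(z)=\sum_{j\ \mathrm{even}}\binom{d}{j}z^{d-j}(z^2-1)^{j/2}$: the terms carrying an odd power of $\sqrt{z^2-1}$ occur with opposite signs in the two summands and cancel. Hence $P_d$ is an honest polynomial in $z$, and in particular does not depend on the choice of branch of the square root. To identify $P_d$ with $T_d$ it suffices to check equality on $[-1,1]$, where $z=\cos\theta$ gives $z^2-1=-\sin^2\theta$, so with a convenient branch $\sqrt{z^2-1}=i\sin\theta$ and $z\pm\sqrt{z^2-1}=e^{\pm i\theta}$, whence $P_d(\cos\theta)=\tfrac12(e^{id\theta}+e^{-id\theta})=\cos(d\theta)=T_d(\cos\theta)$; two polynomials agreeing on $[-1,1]$ are equal. (Alternatively one may verify that $P_d$ satisfies the same three-term recurrence and the same values at $d=0,1$.)

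For item~(4) I would use the hyperbolic substitution: for $z\ge1$ pick $t\ge0$ with $\cosh t=z$; then $\sqrt{z^2-1}=\sinh t$ and $z\pm\sqrt{z^2-1}=e^{\pm t}$, so item~(2) gives $T_d(z)=\tfrac12(e^{dt}+e^{-dt})=\cosh(dt)\ge1$. Since $t\mapsto\cosh(dt)$ is nondecreasing on $[0,\infty)$ and $z\mapsto t$ is increasing on $[1,\infty)$, $T_d$ is monotonically increasing for $z\ge1$. Finally, from the closed form in~(2) (or from the recurrence by induction) one reads off the parity relation $T_d(-z)=(-1)^dT_d(z)$; hence for $d$ even and $z\le-1$ we get $T_d(z)=T_d(-z)=T_d(|z|)\ge1$ by the case just handled, which together with the case $z\ge1$ covers all of $\mathbb R\setminus[-1,1]$.

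None of this is hard --- all four items are classical and appear in Rivlin~\cite{Riv69} --- so there is no real obstacle; the single point that warrants a little care is the square-root bookkeeping in item~(2), namely checking both that $P_d$ is branch-independent (settled by the binomial cancellation) and that the branch chosen on $[-1,1]$ reproduces $\cos(d\theta)$ with the correct sign.
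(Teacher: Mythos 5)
Your proposal is correct, and there is nothing in the paper to compare it against: the paper states these four items as a Fact and simply cites Rivlin~\cite{Riv69} without proof. Your derivation from the trigonometric characterization $T_d(\cos\theta)=\cos(d\theta)$ is the standard one, and you handle the only delicate point (branch-independence of the closed form in item~(2), via cancellation of odd powers of $\sqrt{z^2-1}$) correctly; the sole cosmetic caveat is that for $d=0$ the ``monotonically increasing'' claim in item~(4) holds only in the weak sense, which is all the paper ever uses.
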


\begin{theorem}
\label{thm:middle_const}
Let $n \ge 1$ be an integer, $\eps \in (0,1/4]$ an error parameter, and let $a \in \mathbb{R}$ satisfy
$1/\sqrt{2} \le a \le \sqrt{n}/8$.  There is a polynomial $p$ of degree 
at most 
\[
\left \lceil \frac{3n}{4\sqrt{2}\,a} \ln\left (\frac{1}{2\eps}\right) \right  \rceil +1 \enspace
\] 
with the following properties:
\begin{enumerate}
\item $p(z) \ge \tfrac{1}{4}-z^2$ for all $z$.
\item $|p(z)| \le \eps$ for $z \in [-\tfrac{n}{2}, -a] \cup [a, \tfrac{n}{2}]$.
\item $|p(z)| \le 2$ for $z \in [-\tfrac{n}{2}, -\tfrac{1}{2}] \cup [\tfrac{1}{2}, \tfrac{n}{2}]$.
\end{enumerate}
\end{theorem}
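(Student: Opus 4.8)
The plan is to build the required $p$ from Chebyshev polynomials, treating the problem as a one-variable ``bump'' construction. Note first that Property~1 says exactly that $h(z):=z^2-\tfrac14+p(z)$ is globally nonnegative, hence (being univariate) a sum of squares, while Properties~2 and~3 say that $p$ -- the error $h(z)-(z^2-\tfrac14)$ -- is tiny on $[a,\tfrac n2]\cup[-\tfrac n2,-a]$ and merely bounded on $[\tfrac12,\tfrac n2]\cup[-\tfrac n2,-\tfrac12]$. Since every constraint is invariant under $z\mapsto-z$, I would take $p$ even, $p(z)=P(z^2)$, so that the set $\{a\le|z|\le\tfrac n2\}$ collapses to the single interval $z^2\in[a^2,\tfrac{n^2}4]$.

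The natural construction is to let $P$ be a scaled Chebyshev polynomial composed with the affine map $v(w):=\frac{2w-(a^2+n^2/4)}{n^2/4-a^2}$ that carries $[a^2,\tfrac{n^2}4]$ onto $[-1,1]$, i.e.\ $p(z)=\eps\,T_d\!\big(v(z^2)\big)$ for a suitable even degree $d$. Property~2 is then immediate from $|T_d|\le1$ on $[-1,1]$ (Fact~\ref{fac:cheb}, item 1). On the strip $\tfrac12\le|z|\le a$ one has $v(z^2)\le-1$, so by item~4 of Fact~\ref{fac:cheb}, $T_d(v(z^2))=T_d(|v(z^2)|)\ge1$, and by monotonicity of $T_d$ on $(-\infty,-1]$ one gets $0<p(z)\le p(\tfrac12)$; choosing $d$ near the minimum that still forces Property~1 keeps $p(\tfrac12)$ near the parabola's peak $\tfrac14<2$, and on $[a,\tfrac n2]$ Property~3 follows from Property~2 since $\eps\le\tfrac14$. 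For Property~1 the range $|z|\ge\tfrac12$ is easy: there $\tfrac14-z^2\le0\le p$ on $[\tfrac12,a]$ (item 4, $d$ even), $p\ge-\eps\ge\tfrac14-z^2$ on $[a,\tfrac n2]$ (as $a\ge\tfrac1{\sqrt2}$ forces $z^2\ge\tfrac12\ge\tfrac14+\eps$), and $|z|$ large is trivial. The degree bookkeeping uses item~2 of Fact~\ref{fac:cheb}: $T_d(1+x)=\cosh\!\big(d\,\mathrm{arccosh}(1+x)\big)$, together with $|v(0)|=1+\tfrac{2a^2}{n^2/4-a^2}$ and an explicit estimate $\mathrm{arccosh}(1+x)\ge c\sqrt{2x}$ valid on the relevant range $x=O(a^2/n^2)$; forcing $p(0)=\eps T_d(|v(0)|)\ge\tfrac14$ needs $d=O\!\big(\tfrac na\ln\tfrac1\eps\big)$, and with the factor~$2$ from the fold $z\mapsto z^2$ and careful handling of the ceilings this should be pushed to $\big\lceil\frac{3n}{4\sqrt2\,a}\ln\frac1{2\eps}\big\rceil+1$.

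The step I expect to be the real obstacle is verifying Property~1 throughout the whole interval $[-\tfrac12,\tfrac12]$ rather than only at $z=0$. Pinning $p(0)\ge\tfrac14$ fixes $d$, but the Chebyshev spike $\eps\,T_d(v(z^2))$ is decreasing in $|z|$ and becomes very narrow when $\eps$ is small relative to $a$ -- its curvature at the origin satisfies $|p''(0)|\approx p(0)\cdot\tfrac{2d}{an}$ -- so a bare Chebyshev term can drop below $\tfrac14-z^2$ just off the origin. I would control this either by arranging $|p''(0)|\le2$, i.e.\ $d=O(an)$, which already suffices over a wide parameter range, or, to cover small $\eps$ in general, by replacing the single spike with a ``flat-topped'' bump -- for instance a normalized average/product of shifted Chebyshev terms, or a one-sided (from above) polynomial approximation of $(\tfrac14-z^2)_+$ with error $\eps$ on $[a,\tfrac n2]$ -- shaped so that its graph follows the parabola down across $[0,\tfrac12]$ and only afterwards decays to $\eps$. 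Threading the quantitative estimates so that the flat-top requirement, the $\le2$ bound on $[\tfrac12,a]$, and the $\le\eps$ bound on $[a,\tfrac n2]$ are all met simultaneously while keeping the degree at $\frac{3n}{4\sqrt2\,a}\ln\frac1{2\eps}+O(1)$ -- in particular extracting the constant $\tfrac3{4\sqrt2}$ from the interplay of the fold, the $\mathrm{arccosh}$ estimate, and the loss incurred by flattening the top -- is the delicate part.
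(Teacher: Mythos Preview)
Your construction is exactly the paper's: $p(z)=\eps\,T_d(s(z^2))$ with $s$ affine taking $[a^2,n^2/4]$ onto $[-1,1]$ and $d$ even. Properties~2 and~3 on $[a,n/2]$, and Property~1 for $|z|\ge\tfrac12$, go just as you outline.

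The step you flag as the ``real obstacle'' --- Property~1 on $(-\tfrac12,\tfrac12)$ --- is not an obstacle, and your proposed workarounds (flat-topped bumps, averaged or shifted Chebyshevs, bounding $p''(0)$) are unnecessary and would only make the constants harder to extract. The paper simply pins the spike at $z=\tfrac12$ rather than at $z=0$: take $d$ to be the least even integer with $\eps\,T_d\!\big(s(\tfrac14)\big)\ge\tfrac14$. Since $s$ is decreasing and $T_d$ is increasing on $[1,\infty)$, the map $|z|\mapsto\eps\,T_d(s(z^2))$ is nonincreasing on $[0,a]$; hence for every $|z|\le\tfrac12$ one has $p(z)\ge p(\tfrac12)\ge\tfrac14\ge\tfrac14-z^2$. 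No comparison of the spike's shape with the parabola is needed --- only the single scalar inequality $p(\tfrac12)\ge\tfrac14$.

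Pinning at $\tfrac12$ also gives Property~3 on $[\tfrac12,a]$ cleanly: minimality of $d$ means $T_{d-2}\!\big(s(\tfrac14)\big)<\tfrac1{4\eps}$, and the recurrence bound $T_d(y)\le4y^2T_{d-2}(y)$ for $y\ge1$ together with $s(\tfrac14)^2\le2$ (this is where $a\le\sqrt n/8$ enters) yields $p(\tfrac12)\le2$; monotonicity then bounds $p$ on all of $[\tfrac12,a]$. The constant $\tfrac3{4\sqrt2}$ falls out of the elementary estimates $T_d(1+\mu)\ge\tfrac12(1+\sqrt{2\mu})^d$, $\ln(1+y)\ge\tfrac{2y}{2+y}$, and $a^2-\tfrac14\ge\tfrac{a^2}{2}$ (from $a\ge\tfrac1{\sqrt2}$), with no loss from any ``flattening.''
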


\begin{proof}
Note that we require in particular that $p(0) \ge 1/4$.  Roughly speaking, $p$ should have a `peak' around $0$ and 
then quickly calm down and be bounded on either side of this peak once $|z| \ge a$.  
The difficulty in constructing $p$ is that its peak 
is \emph{in between} the intervals on which is bounded.  To get around this, we note that is suffices to let
$p(z) = \eps q(z^2)$, where $q$ has the properties
\begin{enumerate}
\item $q(z) \ge \tfrac{1}{4\eps}-\tfrac{z}{\eps}$ for $z \ge 0$.
\item $|q(z)| \le 1$ for $z \in [a^2, \tfrac{n^2}{4}]$.
\item $|q(z)| \le \tfrac{2}{\eps}$ for $z \in [\tfrac{1}{4},\tfrac{n^2}{4}]$.
\end{enumerate}

Now $q(z)$ is ripe for a construction with Chebyshev polynomials, and this is what we do.  For notational convenience, 
let $L=n/2$.  Define the mapping $s(z)=-2(z-a^2)/(L^2-a^2)+1$ that takes the interval $[a^2, L^2]$ to $[-1,1]$.
Note that this mapping takes $L^2$ to $-1$ and $a^2$ to $1$.  
Let $T_d$ be the Chebyshev polynomial of \emph{even} degree $d$ (to be chosen later) and define
\[
q(z) = T_d(s(z)) \enspace .
\]  
As $|T_d(z)| \le 1$ for $z \in [-1,1]$ by \factref{fac:cheb} item~(1), it follows that $q(z)$ satisfies condition~$(2)$.  

We now turn to item~(1) and handle the easy cases first.  For $z \ge \tfrac{1}{4}$ we have 
$\tfrac{1}{4\eps}-\tfrac{z}{\eps} \le 0$, so in this region we just need to check that $q(z)$ is not too negative.  
If $z \in [\tfrac{1}{4},a^2]$, then $s(z) \ge 1$ and therefore $q(z) \ge 1$.  Likewise, as we take $d$ to be even, 
$q(z) \ge 1$ for $z \ge L^2$.  For $z \in [a^2,L^2]$, we have $|q(z)| \le 1$.  Thus item~(1) will be satisfied in this region 
so long as $a^2 \ge \tfrac{1}{4}+\eps$.  This holds as in the theorem statement $\eps \le 1/4$ and $a \ge 1/\sqrt{2}$.  

With these easy cases taken care of, we turn to verify the first item for $z \in [0,\tfrac{1}{4}]$.  To do this it suffices to 
choose $d$ such that $q(1/4) \ge \tfrac{1}{4\eps}$ as $q(z)$ is monotonically decreasing in the interval $[0,1/4]$, since 
$T_d(y)$ is monotonically increasing for $y \ge 1$ by \factref{fac:cheb} item~(4).  
This condition is at odds with item~(3).  As the maximum of $q(z)$ in the interval $[1/4,L^2]$ is attained at $z=1/4$, we 
can simultaneously satisfy item~(3) by ensuring $q(1/4) \le \tfrac{2}{\eps}$.  Thus we choose $d=d^*$ to be the least 
even number such that 
\[
q(1/4) = T_{d^*}\left( 1 + \frac{2(a^2-1/4)}{L^2-a^2}\right) \ge \frac{1}{4\eps} \enspace.
\]
By this choice, item~(1) is now satisfied.  To verify item~(3), we use \factref{fac:cheb} item~(3) to see 
the inequality $T_{s+2}(z) \le 4z^2 T_s(z)$, valid for $z \ge 1$.  Applying this we have
\[
T_{d^*}\left( 1 + \frac{2(a^2-1/4)}{L^2-a^2}\right) \le \frac{1}{\eps} \left( 1 + \frac{2(a^2-1/4)}{L^2-a^2}\right)^2 
\le \frac{2}{\eps} \enspace,
\]
as $T_{d^*-2}\left( 1 + \frac{2(a^2-1/4)}{L^2-a^2}\right) < \frac{1}{4\eps}$ by definition, and $a \le \sqrt{n}/8$.

Finally, we upper bound $d^*$.  Let $\mu = 2(a^2-1/4)/(L^2-a^2)$.  We want $T_d(1+\mu) \ge \tfrac{1}{4\eps}$.  
Using the fact that $T_d(1+\mu) \ge (1/2)(1+\sqrt{2\mu})^d$ for $\mu \ge 0$ by \factref{fac:cheb} item~(2), it suffices 
to take $d \ge \ln(\tfrac{1}{2\eps})/\ln(1+\sqrt{2\mu})$.  

As $\ln(1+y) \ge 2y/(2+y)$ for $y \ge 0$ and $\sqrt{2\mu} \le 1$, it suffices to take 
$d \ge 3 \ln(\tfrac{1}{2\eps})/(2\sqrt{2\mu})$. 
Since $a \ge 1/\sqrt{2}$, and therefore $a^2-1/4 \ge a^2/2$, we have $\mu\geq a^2/L^2=4a^2/n^2$. Hence there is a $d$ such that $T_d(1+\mu) \ge \tfrac{1}{4\eps}$ satisfying
\[
d \le \left \lceil \frac{3n}{4\sqrt{2}\,a} \ln\left (\frac{1}{2\eps}\right) \right  \rceil \enspace.
\] 
We add $1$ in the theorem statement for the additional requirement that the degree is even.
\end{proof}

\begin{proof}[Proof of \thmref{thm:l1_upper}]
Fix $8/\sqrt{2n} \le \delta \le 1/4$, and let $\eps =\delta/2$ and $a=\eps \sqrt{n}/4$.  
Note that $1/\sqrt{2} \le a \le \sqrt{n}/8$ with these 
choices.  Thus by \thmref{thm:middle_const}, there is a polynomial $p$ of degree at most 
$\ceil{\tfrac{6\sqrt{n}}{\sqrt{2}\delta}\ln(\tfrac{1}{\delta})}+1$ satisfying the three conditions of \thmref{thm:middle_const} 
with this value of $a, \eps$.  

Let $g(z) = (z-n/2)^2-1/4 +p(z-n/2)$ be a univariate polynomial, and consider the approximation to 
$f$ given by $g(|x|)$.  By construction $g$ is globally nonnegative and thus (as it is univariate)
is a sum of squares of polynomials of degree at most $\ceil{\tfrac{3\sqrt{n}}{\sqrt{2}\delta}\ln(\tfrac{1}{\delta})}$.  
Let us examine the $\ell_1$-error of the function $g(|x|)$ in 
approximating $f$.  We divide the error into two cases: the error on strings whose Hamming weight is at most 
$n/2-a$ or at least $n/2+a$ (type I), and those whose Hamming  weight is in the interval $[n/2-a,n/2+a]$ (type II).  

As $p$ is bounded by $\eps$ for $z \in [-n/2,-a] \cup [a, n/2]$ the $\ell_1$-error over type I inputs is at most 
$\eps \cdot 2^n$.  The number of type II inputs is at most $(2a/\sqrt{n})2^n$, and the 
error on each is at most $2$ as $p(z) \le 2$ for $z \in [-n/2, n/2]$.  Thus the total $\ell_1$-error is 
$\displaystyle
2^n \left( \eps + \frac{4a}{\sqrt{n}}\right)=2^n \cdot 2\eps = \delta 2^n.
$
\end{proof}

%%%%%%%%%%%%%%%%%%%%%%%%%%%%%%%%%%%%%%%%%%%%%%%%%%%%%%%%%%%

\section{Proof complexity: Positivstellensatz refutations}
\label{sec:proof}

Say that we have a system of polynomial equalities 
\begin{equation}
\label{eq:sys}
f_1 = \cdots = f_m = 0, \ x_1^2-x_1 = \cdots = x_n^2-x_n =0
\end{equation}
where each $f_i \in \mathbb{R}[x_1, \ldots, x_n]$.  Because of the presence of the 
equalities $x_i^2 -x_i=0$ (which force $x_i\in\{0,1\}$), this is referred to as the \emph{boolean} setting.  

The Positivstellensatz~\cite{Ste74} implies that the system~(\ref{eq:sys}) has no common solutions in $\mathbb{R}^n$ if 
and only if there are polynomials $g_1, \ldots, g_{m+n} \in \mathbb{R}[x_1, \ldots, x_n]$ and a sos
polynomial $h$ such that 
\begin{equation}
\label{eq:ref}
\sum_{i=1}^m f_i g_i  + \sum_{i=1}^n (x_i^2-x_i) g_{m+i}=1+h \enspace .
\end{equation}
Grigoriev and Vorobjov \cite{GV01} define a proof system based on this principle.  
\begin{defn}
A \emph{Positivstellensatz refutation} 
of the system~(\ref{eq:sys}) is given by a set of polynomials $\{g_1, \ldots, g_{m+n},h\}$ which satisfy (\ref{eq:ref}) and 
where $h$ is a sum of squares.  The \emph{degree} of this refutation is 
\[
\max\{\deg(h), 
\max_{i \in [m]} \deg(f_i g_i), \max_{i \in [n]} \deg((x_i^2-x_i)g_{m+i})\} \enspace.
\]  
\end{defn}

By the Positivstellensatz, this proof system is sound and complete: a system is unsatisfiable if and only if it has a 
refutation of a certain degree.  One may view the degree of a refutation as a measure of complexity.

\subsection{Knapsack}
The knapsack system is given by the equations
\begin{equation}
\label{eq:knapsack}
f=\sum_i x_i -r =0, \ x_j^2 - x_j = 0 \mbox{ for } j=1, \ldots,n \enspace .
\end{equation}
If $r$ is not an integer then this system has no solution: Grigoriev \cite{G01} shows the following theorem.

\begin{theorem}[Grigoriev \cite{G01}]
\label{thm:grig}
Let $0 \le k \le (n-3)/2$ be an integer.  If $k < r < n-k$, then any Positivstellensatz refutation of the 
system~(\ref{eq:knapsack}) has degree at least $2k+4$.
\end{theorem}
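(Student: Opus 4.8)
The plan is to prove the contrapositive: I will show that a Positivstellensatz refutation of~(\ref{eq:knapsack}) of degree at most $2k+3$ cannot exist. Since $k\le(n-3)/2$ we have $2k+3\le n$, a bound used repeatedly below. So suppose $g\cdot\big(\sum_i x_i-r\big)+\sum_i g_i\cdot(x_i^2-x_i)=1+h$ is such a refutation, with $h$ a sum of squares and every summand of degree $\le 2k+3$. Restricting this polynomial identity to the boolean cube kills the terms $(x_i^2-x_i)$, leaving $g(x)(|x|-r)=1+h(x)$ for every $x\in\{0,1\}^n$. Because $\deg h\le 2k+3$ and $h$ is a sum of squares of polynomials, $\deg h$ is even and hence at most $2k+2$, so $h$ is a sum of squares of polynomials of degree $\le k+1$; consequently $h$ restricted to the cube has an sos representation of degree $\le k+1$, and $\deg g\le 2k+2$.

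Next I would symmetrize. Averaging the cube identity over all $\pi\in S_n$ and using $|\pi x|=|x|$ yields $\bar g(x)(|x|-r)=1+\bar h(x)$ on the cube, where $\bar g,\bar h$ are the averages of $g,h$ over $S_n$. The function $\bar h$ is symmetric, nonnegative on the cube, and still has an sos representation of degree $\le k+1$ (an average of sos functions is sos, of the same degree), while $\bar g$ has multilinear degree $\le 2k+2$. Writing $\bar g(x)=\tilde g(|x|)$ and $\bar h(x)=\tilde h(|x|)$ for univariate polynomials with $\deg\tilde g\le 2k+2$, we obtain $\tilde g(z)(z-r)=1+\tilde h(z)$ for every integer $z\in\{0,1,\ldots,n\}$. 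Now I invoke Blekherman's structure theorem (\thmref{thm:B}): since $\bar h$ is symmetric, nonnegative, and has an sos representation of degree $d\le k+1\le n/2$, we may write $\tilde h(z)=\sum_{s=0}^{d}\big(\prod_{j=0}^{s-1}(z-j)(n-j-z)\big)\,q_{d-s}(z)$ with each $q_t$ a univariate sos polynomial of $\sosdeg$ at most $t$; in particular $\deg\tilde h\le 2d\le 2k+2\le n-1$.

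Now both sides of $\tilde g(z)(z-r)=1+\tilde h(z)$ are univariate polynomials of degree at most $n$ that agree on the $n+1$ points $\{0,1,\ldots,n\}$, hence they are equal as polynomials in $\R[z]$. Evaluating this identity at $z=r$ annihilates the left-hand side and forces $\tilde h(r)=-1$. On the other hand, for each $s\le d\le k+1$ and each index $0\le j\le s-1\le k$, the hypothesis $k<r<n-k$ gives $r-j\ge r-k>0$ and $n-j-r\ge n-k-r>0$, so every prefactor $\prod_{j=0}^{s-1}(r-j)(n-j-r)$ is strictly positive; since each $q_{d-s}$ is a sum of squares and therefore nonnegative on all of $\R$, we get $\tilde h(r)\ge 0$, contradicting $\tilde h(r)=-1$. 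Hence no refutation of degree $\le 2k+3$ exists, so every Positivstellensatz refutation of~(\ref{eq:knapsack}) has degree at least $2k+4$.

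I expect the main obstacle to be the degree bookkeeping in the first two paragraphs rather than any single hard idea: one must correctly propagate the single bound $2k+3$ on the refutation degree into the bound $\le k+1$ on the sos degree of $h|_{\text{cube}}$ (so that \thmref{thm:B} applies with $d\le n/2$), into $\deg\tilde g\le 2k+2$ and $\deg\tilde h\le 2k+2$, and then observe that these bounds are precisely what lets ``agreement on $\{0,\ldots,n\}$'' be upgraded to a genuine polynomial identity — the step that legitimizes substituting the possibly non-integer value $z=r$. The symmetrization step, though routine, is essential because \thmref{thm:B} needs a symmetric function and the multipliers $g_i$ of the $(x_i^2-x_i)$ need not survive symmetrization in any usable form (they do not need to, as they already vanish on the cube).
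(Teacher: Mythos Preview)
Your proposal is correct and is essentially the same proof as the paper's, just organized differently: the paper packages the ``restrict to the cube, symmetrize, pass to the univariate polynomial, evaluate at $r$'' chain into a single linear functional $\G_r(p)=\Sym^{uni}(p)(r)$ and then verifies that $\G_r$ kills the left-hand side and is $\ge 1$ on the right, invoking \thmref{thm:B} for the nonnegativity of $\G_r(h)$; you unpack the same steps explicitly and reach the contradiction $\tilde h(r)=-1$ versus $\tilde h(r)\ge 0$. The one place your write-up is arguably cleaner is in using the degree bound to upgrade ``agreement on $\{0,\ldots,n\}$'' to a polynomial identity before substituting $z=r$, whereas the paper verifies $\G_r\!\left(g\cdot(\sum_i x_i-r)\right)=0$ by a direct monomial computation; both arguments are equivalent and use the same hypothesis $2k+3\le n$.
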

We provide a simple proof of this in \apref{ap:C} using Blekherman's theorem.

Note that the equations for non-integer $r$ correspond to a trivially easy (and obviously unsatisfiable) instance of the 
knapsack problem, where all items have weight~1.  As mentioned in the introduction, this shows the weakness of the Positivstellensatz-based 
proof system: even to refute such easy instances it already needs polynomials of fairly high degree.

Grigoriev asked if this upper bound of $2k+4$ was tight.  Later work of Grigoriev et al.~\cite{GHP02} showed that the proof technique of \cite{G01} could not show a larger lower bound than $2k+4$.  We show that there actually exist 
Positivstellensatz refutations of~(\ref{eq:knapsack}) of degree $2k+4$.
\begin{theorem}
\label{thm:grig_upper}
Let $0 \le k \le n/2$ be an integer.  For $k < r < k+1$, the system~(\ref{eq:knapsack}) has a Positivstellensatz refutation of degree $2k+4$.  
\end{theorem}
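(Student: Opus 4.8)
The plan is to reduce the construction to a single univariate polynomial, in the same spirit as the $\ell_\infty$ and $\ell_1$ upper bounds but now dualizing Blekherman's theorem (\thmref{thm:B}). First I would symmetrize: since the knapsack system~\eqref{eq:knapsack} is $S_n$-invariant, averaging any refutation over the action of $S_n$ again gives a refutation, of no larger degree, so it suffices to find \emph{symmetric} $g,g_1,\dots,g_n$ and sos $h$. On the cube $\sum_i x_i=|x|$, so writing $g(x)=\tilde g(|x|)$ and $h(x)=\tilde h(|x|)$, and using the standard fact that any degree-$D$ polynomial vanishing on $\{0,1\}^n$ can be written $\sum_i g_i(x)(x_i^2-x_i)$ with each $\deg(g_i(x_i^2-x_i))\le D$, the whole task becomes: produce a univariate $\tilde g$ of degree $\le 2k+3$ and an $n$-variate sum of squares $h$ of degree $\le 2k+4$ with $h(x)=\tilde h(|x|)$ on the cube, such that $\tilde g(z)(z-r)=1+\tilde h(z)$ holds as a polynomial identity. (Given such data, $g(x):=\tilde g(x_1+\cdots+x_n)$ has degree $\le 2k+3$, $g(x)(\sum_i x_i-r)-1-h(x)$ vanishes on the cube with degree $\le 2k+4$, and rearranging its ideal representation gives a refutation in the form~\eqref{eq:ref} of degree $2k+4$.)

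For the main regime $n\ge 2k+4$, I would take $\tilde h(z)=c\cdot[z]_{k+2}\,[n-z]_{k+2}$ with $[z]_t=z(z-1)\cdots(z-t+1)$, choosing the constant $c$ so that $1+\tilde h(r)=0$; then $\tilde g(z):=(1+\tilde h(z))/(z-r)$ is a genuine polynomial of degree $2k+3$. Two verifications are needed. (i) \emph{Sign of $c$:} since $r\in(k,k+1)$, exactly one factor of $[r]_{k+2}$ (namely $r-(k+1)$) is negative, while \emph{every} factor of $[n-r]_{k+2}$ is positive when $n\ge 2k+4$ (the smallest is $n-k-1-r>n-2k-2\ge 2$); hence $[r]_{k+2}[n-r]_{k+2}<0$ and $c=-1/([r]_{k+2}[n-r]_{k+2})>0$. (ii) \emph{$\tilde h$ arises from a degree-$(2k+4)$ sos:} I would prove the combinatorial identity that, on the cube,
\[
[|x|]_t\,[n-|x|]_t=\sum\Big(\prod_{m=1}^{t}(x_{a_m}-x_{b_m})\Big)^{2},
\]
the sum over all ordered tuples of $t$ pairwise-disjoint $2$-subsets $\{a_m,b_m\}$ of $[n]$. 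This generalizes $|x|(n-|x|)=\sum_{i<j}(x_i-x_j)^2$ and follows by expanding $[|x|]_t=\sum_{(i_1,\dots,i_t)\text{ distinct}}\prod_m x_{i_m}$, discarding (using $x_i(1-x_i)=0$ on the cube) every product term with a repeated index, and pairing each surviving monomial with its conjugate via $x_a(1-x_b)+x_b(1-x_a)=(x_a-x_b)^2$. Thus $h(x):=c\sum\big(\prod_m(x_{a_m}-x_{b_m})\big)^2$ is literally a sum of squares of degree-$(k+2)$ polynomials, so $\deg h\le 2k+4$, and $h(x)=c\,[|x|]_{k+2}[n-|x|]_{k+2}=\tilde h(|x|)$ on the cube, completing this case.

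The complementary regime is $n\le 2k+3$ (equivalently $2k+4>n$), which contains the case $n=2k+1$, $k=\floor{n/2}$ most relevant to \cite{G01,LRS15}. Here $[z]_{k+2}[n-z]_{k+2}$ vanishes identically on the cube, and in fact the sign computation above fails, so the previous construction is not used. Instead I would take $h=0$ and let $\tilde g$ be the Lagrange interpolant of $z\mapsto 1/(z-r)$ through $z=0,1,\dots,n$ (well defined since $r\notin\mathbb Z$); it has degree $\le n\le 2k+3$, and $g(x):=\tilde g(x_1+\cdots+x_n)$ satisfies $g(x)(\sum_i x_i-r)=1$ on the cube, so $g(x)(\sum_i x_i-r)-1$ lies in $\langle x_i^2-x_i\rangle$ via polynomials of degree $\le n-1$, yielding a refutation of degree $n+1\le 2k+4$.

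The main obstacle I anticipate is step (ii): controlling the \emph{degree} of the sum-of-squares representation. The naive remark that a $0/1$-valued monomial equals its own square only produces squares of degree $2k+4$, i.e.\ $\deg h=4k+8$, which is far too large; the conjugate-pairing identity is exactly what brings the sos-degree of the falling-factorial product $[|x|]_t[n-|x|]_t$ down to $t$, matching the structure in Blekherman's theorem. A secondary subtlety, already visible above, is that the sign of $c$ — hence whether $h$ is genuinely a sum of squares rather than its negative — is favorable only when $n\ge 2k+4$, which is precisely why the two regimes must be separated and why the ``hard'' instance $n=2k+1$ is handled by the elementary interpolation argument rather than the sos construction.
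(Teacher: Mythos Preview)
Your argument is correct, but it takes a different route than the paper. The paper works with the single falling factorial $A_{k+2}(\x)=|\x|(|\x|-1)\cdots(|\x|-k-1)$: by \lemref{lem:key_polys_sos} this equals $(k+2)!\,e_{k+2}(x_1^2,\ldots,x_n^2)$ modulo the ideal, which is already a sum of squares of degree-$(k+2)$ monomials. Since $A_{k+2}(r)<0$ for $k<r<k+1$, one sets $b=-A_{k+2}(r)>0$, factors $A_{k+2}(\x)+b=g(\x)(|\x|-r)$ with $\deg g=k+1$, and divides by $b$. This works uniformly for all $0\le k\le n/2$ with no case split, and the multiplier $g$ has degree only $k+1$ rather than your $2k+3$.

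Your construction instead uses the product $[z]_{k+2}[n-z]_{k+2}$ and the sos certificate $\sum\big(\prod_m(x_{a_m}-x_{b_m})\big)^2$, which is exactly the ``$j=k+2$ layer'' of Blekherman's decomposition (\thmref{thm:B}); your combinatorial identity is a restatement of \lemref{lem2} summed over the basis of \thmref{Span}. The price is that the sign of $c$ is only favorable when $n\ge 2k+4$, forcing a separate (and elementary) interpolation argument for $2k\le n\le 2k+3$. Both approaches yield refutations of degree $2k+4$; the paper's is shorter and uniform, while yours makes the connection to the kernel-of-$W_t$ structure in Blekherman's theorem more explicit.
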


Let $\x=(x_1, \ldots, x_n)$ and $|\x| = \sum_{i=1}^n x_i$.  A key role in the proof will be played by the polynomials 
\[
A_k(\x)= |\x|(|\x|-1)(|\x|-2)\cdots (|\x|-k+1) \enspace.
\]
The function $A_k$ can be computed with $k$ queries by a natural extension of the Sampling \algoref{alg:sample} and thus can be written as a sum-of-squares on the boolean cube of total degree $2k$.  
We go ahead and record this formally in the next lemma. Recall that the $k$th elementary symmetric polynomial is defined as
\[
e_k(x_1,\ldots, x_n) = \sum_{i_1 < i_2 < \cdots < i_k} x_{i_1} x_{i_2} \cdots x_{i_k} \enspace .
\]

\begin{lemma}
\label{lem:key_polys_sos}
There exist polynomials $g_i(\mathbf{x})$ of degree at most $2k-2$ such that
\[
A_k(\mathbf{x}) = \sum_{i=1}^n (x_i^2-x_i) g_i(\mathbf{x}) + (k!) e_k(x_1^2, \ldots, x_n^2) \enspace .
\]
\end{lemma}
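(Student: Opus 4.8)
The plan is to recognize the claimed identity as the assertion that the polynomial $P(\mathbf{x}) := A_k(\mathbf{x}) - (k!)\,e_k(x_1^2,\dots,x_n^2)$ lies in the ideal generated by $x_1^2-x_1,\dots,x_n^2-x_n$, together with a degree bound on the cofactors coming from the standard multilinearization procedure.

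First I would record the two elementary combinatorial identities valid for every $x\in\{0,1\}^n$. Since $x_i^2=x_i$ on the cube, $e_k(x_1^2,\dots,x_n^2)=e_k(x_1,\dots,x_n)$; and $e_k(x_1,\dots,x_n)$ counts the $k$-element subsets of the support of $x$, while $A_k(\mathbf{x})=|\mathbf{x}|(|\mathbf{x}|-1)\cdots(|\mathbf{x}|-k+1)$ counts the \emph{ordered} $k$-tuples of distinct elements of that support; these differ by a factor of $k!$. Hence $A_k(\mathbf{x})=(k!)\,e_k(x_1^2,\dots,x_n^2)$ for all $x\in\{0,1\}^n$, i.e.\ $P$ vanishes identically on $\{0,1\}^n$.

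Next I would invoke the fact that a polynomial vanishing on $\{0,1\}^n$ lies in the ideal $\langle x_1^2-x_1,\dots,x_n^2-x_n\rangle$, made concrete by multilinearization: repeatedly replacing a factor $x_i^{a}$ (with $a\ge 2$) by $x_i+(x_i^2-x_i)(x_i^{a-2}+\cdots+1)$ rewrites any monomial of total degree $d$ as its multilinear reduction plus a combination $\sum_i (x_i^2-x_i)g_i$ in which each cofactor $g_i$ has degree at most $d-2$. Applying this to $P$, whose multilinear reduction is the zero polynomial, and noting $\deg P \le 2k$ (since $\deg A_k=k$ while $\deg e_k(x_1^2,\dots,x_n^2)=2k$), yields polynomials $g_i$ of degree at most $2k-2$ with $P(\mathbf{x})=\sum_{i=1}^n (x_i^2-x_i)g_i(\mathbf{x})$. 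Rearranging this is exactly the statement of the lemma.

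There is no genuine obstacle here; the only point requiring care is the degree bookkeeping — verifying that the multilinearization substitution never increases a monomial's degree and produces a cofactor of degree exactly two less, and confirming that it is the degree-$2k$ term $(k!)\,e_k(x_1^2,\dots,x_n^2)$ that governs $\deg P$, so that the cofactors indeed come out at degree $\le 2k-2$ as claimed.
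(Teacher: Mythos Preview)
Your proof is correct and takes a genuinely different route from the paper. The paper proves the lemma via two explicit constructions: first an induction on $k$ establishing $A_k(\mathbf{x}) = \sum_i (x_i^2-x_i)g_i(\mathbf{x}) + k!\,e_k(\mathbf{x})$ with $\deg g_i \le k-1$, and then a telescoping identity $\prod_{i=1}^d x_i^2 - \prod_{i=1}^d x_i = \sum_{j}(x_j^2-x_j)\prod_{i<j}x_i\prod_{i>j}x_i^2$ to show that $e_k(x_1,\dots,x_n)$ and $e_k(x_1^2,\dots,x_n^2)$ differ by an ideal element with cofactors of degree $\le 2k-2$; the two pieces are then combined. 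Your argument bypasses both explicit steps by appealing to a single general principle: since $P=A_k - k!\,e_k(x_1^2,\dots,x_n^2)$ vanishes on the cube and has degree at most $2k$, the standard multilinearization procedure automatically places it in the ideal with cofactors of degree $\le 2k-2$. This is shorter and makes clear that the only substantive content is the combinatorial identity $A_k = k!\,e_k$ on the cube plus degree bookkeeping; the paper's approach, in exchange, produces fully explicit formulas for the $g_i$.
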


We give the proof of this in \apref{app:key_polys_sos}.

\begin{proof}[Proof of \thmref{thm:grig_upper}]
Rearranging \eqnref{eq:ref}, we are looking for functions $g, g_1, \ldots, g_n$ of low degree and a low-degree sum-of-squares $h$ such that 
\[
g(\mathbf{x})(|\x| -r) - 1 = h + \sum_i g_i(\mathbf{x}) (x_i^2-x_i) \enspace .
\]
Notice that, for any $g$, the left-hand side will be negative when $|\x|=r$.
By \lemref{lem:key_polys_sos}, $A_{k+2}$ is of the form of the right-hand side.  Since $A_{k+2}$ has degree $2k+4$, and is also negative when $|\x|=r$, we try to find a polynomial $g(\mathbf{x})$ of degree at most $2k+3$ such that 
\[
g(\mathbf{x})\left(|\x| - r \right) -b= A_{k+2}(\mathbf{x})
\]
for a positive constant~$b$.  Dividing $g$ and $A_{k+2}$ by $b$ will then give us the required solution.  Let $b= -r(r-1)\cdots (r-k)(r-k-1) >0$.  
Then $|\x| -r$ divides $A_{k+2}(\x) + b$ and we can write $A_{k+2}(\x) + b=g(\mathbf{x})\left(|\x| - r \right)$ for some polynomial $g$ of degree~$2k+3$.
\end{proof}

\section{Future work}

We list a few questions for future work:
\begin{itemize}
\item Can we improve the lower bound of \thmref{thm:ell_infty} for small~$\eps$?  To match the upper bound for all~$k$, it would suffice to show that $\sosdeg_\eps(f_1,\ell_\infty)=\Omega(\sqrt{n\log(1/\eps)})$, which is very plausible by analogy with what is known for the $n$-bit OR function.
\item Can we extend our results to all symmetric quadratic functions, or to even larger classes of symmetric functions?
\item Can we find more applications of Blekherman's theorem (\thmref{thm:B}), in complexity theory, in quantum computing, or in optimization? Kurpisz et al.~\cite[Section~5]{KurpiszLeppanenMastrolilli16} used their general reduction to univariate polynomials (already mentioned in~\secref{sec:proofcomplexity}), to show that strengthening the knapsack polytope with Wolsey's ``Knapsack Covering Inequalities''  and applying nearly $\log n$ rounds of the Lasserre hierarchy does not produce an SDP with integrality gap below~$2-o(1)$ (which is the integrality gap of the natural LP relaxation). Similar results may be obtainable using Blekherman's theorem.
\end{itemize}

\subsection*{Acknowledgments}
We thank Mario Szegedy for early discussions on lower bounding the sos degree of~$f_1$, Srinivasan Arunachalam for 
many useful discussions and comments, Greg Blekherman for discussions about~\cite{B15}, and Adam Kurpisz for alerting us to~\cite{KurpiszLeppanenMastrolilli16}.   We thank the authors of the software SDPT3~\cite{TTT99} and YALMIP~\cite{Lof04}, which we used to compute the error of sos polynomials in approximating $f_k$ on small examples.  TL would like to thank Savaroo Ranch for their 
hospitality and BBQ during the course of this research.

%\bibliographystyle{alpha}
%\bibliography{sos.bib}

\newcommand{\etalchar}[1]{$^{#1}$}

\appendix

\section{Proof of \lemref{lem:key_polys_sos}}
\label{app:key_polys_sos}
Recall that 
\[
A_k(x_1, \ldots, x_n) = |\x|(|\x|-1) \cdots (|\x|-k+1) \enspace .
\]

We first prove two claims.
\begin{claim}
\label{claim1}
There exist polynomials $g_i(\mathbf{x})$ of degree at most $k-1$ such that
\[
A_k(x_1, \ldots, x_n) = \sum_{i=1}^n (x_i^2-x_i) g_i(\x) + (k!) e_k(\x) \enspace .
\]
\end{claim}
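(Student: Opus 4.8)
The plan is to prove the claim by induction on $k$, setting up the recursion so that the degree bound on the cofactors propagates for free. The combinatorial content behind the identity is that on $\{0,1\}^n$ the polynomial $A_k(\x)=|\x|(|\x|-1)\cdots(|\x|-k+1)$ counts the ordered $k$-tuples of distinct coordinates all set to $1$, hence equals $k!\binom{|\x|}{k}=k!\,e_k(\x)$; so $A_k-k!\,e_k$ vanishes on the cube and therefore lies in the ideal generated by $x_1^2-x_1,\dots,x_n^2-x_n$. This soft argument yields the membership but gives no control on the degrees of the cofactors, which is the whole point of the claim, so I would instead run the induction explicitly. The base case $k=1$ is immediate: $A_1(\x)=|\x|=e_1(\x)=1!\,e_1(\x)$, so one may take every $g_i=0$.

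For the inductive step I would combine the recursion $A_{k+1}(\x)=(|\x|-k)\,A_k(\x)$ with an analogous identity for the elementary symmetric polynomials. Writing $m_S=\prod_{i\in S}x_i$ and letting $S$ range over size-$k$ subsets of $[n]$, one expands
\[
|\x|\,e_k(\x)=\sum_{j=1}^n x_j\sum_{|S|=k} m_S=\sum_{|S|=k}\sum_{j\in S} x_j^2\,m_{S\setminus\{j\}}+\sum_{|S|=k}\sum_{j\notin S} m_{S\cup\{j\}} .
\]
In the first double sum substitute $x_j^2=(x_j^2-x_j)+x_j$: the $x_j$-part gives $x_j m_{S\setminus\{j\}}=m_S$ for each of the $k$ indices $j\in S$, summing to $k\,e_k(\x)$, while the $(x_j^2-x_j)$-part is $\sum_j (x_j^2-x_j)\,h_j(\x)$ with $h_j(\x)=\sum_{S\ni j,\,|S|=k} m_{S\setminus\{j\}}$, a polynomial of degree $k-1$. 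The second double sum equals $(k+1)\,e_{k+1}(\x)$, since each size-$(k+1)$ set $T$ arises as $S\cup\{j\}$ in $k+1$ ways. Hence
\[
(|\x|-k)\,e_k(\x)=\sum_{j=1}^n (x_j^2-x_j)\,h_j(\x)+(k+1)\,e_{k+1}(\x),\qquad \deg h_j\le k-1 .
\]
Assuming $A_k(\x)=\sum_i (x_i^2-x_i)g_i(\x)+k!\,e_k(\x)$ with $\deg g_i\le k-1$, multiplying by $(|\x|-k)$ and substituting the last identity gives
\[
A_{k+1}(\x)=\sum_i (x_i^2-x_i)\bigl((|\x|-k)g_i(\x)+k!\,h_i(\x)\bigr)+(k+1)!\,e_{k+1}(\x),
\]
and the new cofactor has degree at most $\max\{1+(k-1),\,k-1\}=k=(k+1)-1$, completing the induction.

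I expect the only real obstacle to be the degree bookkeeping, not the identity itself: a naive multilinearization of $A_k$ can route through intermediate monomials of inconveniently large degree, so the substance of the claim is that the reduction can be organized to keep every $g_i$ of degree $\le k-1$. The inductive scheme above is tailored for this, since the step $k\to k+1$ multiplies a degree-$(k-1)$ cofactor by the degree-$1$ factor $(|\x|-k)$ and adds the degree-$(k-1)$ correction $h_i$, landing exactly at the permitted bound; carrying the constant $k!$ through is harmless. A minor point to check along the way is that $A_k$, $e_k$, and the $h_j$ are genuine polynomials, so that all manipulations are honest polynomial identities rather than identities of functions on the cube.
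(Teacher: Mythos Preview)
Your proof is correct and follows essentially the same approach as the paper: induction on $k$ with base case $A_1=e_1$, the recursion $A_{k+1}=(|\x|-k)A_k$, and the key identity $(|\x|-k)e_k(\x)=\sum_j(x_j^2-x_j)h_j(\x)+(k+1)e_{k+1}(\x)$ with $\deg h_j\le k-1$, combined so that the new cofactors have degree at most $k$. The only cosmetic difference is that the paper expands $e_k\cdot(e_1-k)$ by grouping over the size-$k$ subsets $S$ directly, whereas you first expand $|\x|e_k$ and then subtract $k\,e_k$; the resulting terms and degree bookkeeping are identical.
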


\begin{proof}
We prove the claim by induction on~$k$.  When $k=1$ then $A_1 = e_1$, and the claim follows by setting all $g_i$ to be 
$0$.  

Now suppose the claim is true up to $k$.  Then, using the induction hypothesis to rewrite $A_k$, 
\begin{align*}
A_{k+1}(\x) = A_k(\x) \cdot (e_1(\x) - k) &= \left( \sum_{i=1}^n (x_i^2 - x_i) g_i(\x) + k! e_k(\x) \right) 
(e_1(\x) - k) \\
&= \sum_{i=1}^n (x_i^2-x_i)h_i(\x) + k! e_k(\x)( e_1(\x) -k) \enspace ,
\end{align*}
where each $h_i(\x)=g_i(\x) \cdot ( e_1(\x) -k)$ is of degree at most $k$.  We now focus on
\[
e_k(\x)( e_1(\x) -k) = \sum_{\substack{S \subseteq [n] \\ |S|=k}} \prod_{i \in S} x_i \cdot ( e_1(\x) -k) \enspace .
\]
A term in this sum corresponding to the subset $S$ can be rewritten as
\begin{align*}
\prod_{i \in S} x_i \left(\sum_{i \in S} x_i + \sum_{i \not \in S} x_i -k \right) =
\sum_{i \in S} (x_i^2 - x_i) \prod_{j \in S, j \ne i} x_j + \sum_{i \not \in S} x_i \prod_{j \in S} x_j
\end{align*}
Summing over all terms of this form gives $(k+1)! e_{k+1}(\x) + \sum_i (x_i^2-x_i) \cdot f_i(\x)$, where $f_i(\x)$ is of 
degree at most $k-1$, proving the claim.
\end{proof}

To complete the proof, we now need to show that $e_d(\x)$ is a sum of squares of total degree $2d$ modulo the ideal 
$\langle x_1^2 -x_1, \ldots, x_n^2-x_n\rangle$.  To do this, it suffices to show the same for $\prod_{i=1}^d x_i$, 
which we do in the next claim.
\begin{claim}
\label{claim:sos}
Fix a natural number $d$.  Then there are polynomials $g_i \in \mathbb{R}[x_1, \ldots, x_d]$ for $i =1, \ldots, d$ such 
that
\[
\prod_{i=1}^d x_i^2 - \prod_{i=1}^d x_i = \sum_{i=1}^d (x_i^2 - x_i) g_i(x_1, \ldots, x_d) \enspace,
\]
and each $g_i$ is of degree at most $2d-2$.
\end{claim}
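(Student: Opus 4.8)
The plan is to prove this by a simple telescoping identity that writes $\prod_{i=1}^d x_i^2 - \prod_{i=1}^d x_i$ as a sum of $d$ terms, each visibly divisible by some $x_j^2 - x_j$. First I would introduce the partial products
\[
P_j = \left(\prod_{i=1}^{j} x_i^2\right)\left(\prod_{i=j+1}^{d} x_i\right), \qquad j = 0, 1, \ldots, d,
\]
so that $P_0 = \prod_{i=1}^d x_i$ and $P_d = \prod_{i=1}^d x_i^2$, and then telescope:
\[
\prod_{i=1}^d x_i^2 - \prod_{i=1}^d x_i = P_d - P_0 = \sum_{j=1}^d (P_j - P_{j-1}).
\]

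Next I would note that $P_j$ and $P_{j-1}$ differ only in their $j$-th factor (one has $x_j^2$, the other $x_j$), so
\[
P_j - P_{j-1} = \left(\prod_{i=1}^{j-1} x_i^2\right)(x_j^2 - x_j)\left(\prod_{i=j+1}^{d} x_i\right).
\]
Hence, setting
\[
g_j(x_1,\ldots,x_d) = \left(\prod_{i=1}^{j-1} x_i^2\right)\left(\prod_{i=j+1}^{d} x_i\right)
\]
gives exactly the representation $\prod_{i=1}^d x_i^2 - \prod_{i=1}^d x_i = \sum_{j=1}^d (x_j^2 - x_j)\, g_j$ claimed in the statement.

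Finally I would verify the degree bound: $\deg g_j = 2(j-1) + (d-j) = d + j - 2 \le 2d - 2$, with equality at $j = d$. There is no real obstacle here; the only care needed is in the bookkeeping of the empty products at the two endpoints (for $j=1$ the first product is empty, giving $g_1 = \prod_{i=2}^d x_i$; for $j=d$ the second product is empty, giving $g_d = \prod_{i=1}^{d-1} x_i^2$) and in confirming that $d + j - 2$ never exceeds $2d-2$ over $j \in \{1,\ldots,d\}$.
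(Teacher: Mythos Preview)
Your proof is correct and is essentially the same telescoping argument as the paper's: the paper writes the difference as $\sum_{j=1}^d (x_j^2 - x_j)\prod_{i<j} x_i \prod_{i>j} x_i^2$, which is exactly your construction with the roles of ``squared'' and ``linear'' blocks swapped (equivalently, the variables reindexed by $i \mapsto d+1-i$). The degree bound verification is likewise identical, with the paper's multipliers having degree $2d-j-1 \le 2d-2$ in place of your $d+j-2 \le 2d-2$.
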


\begin{proof}
We write $x_1^2 x_2^2 \cdots x_d^2-x_1 x_2 \cdots x_d$ as a telescoping sum.  We use the convention that the product 
over the empty set is $1$.
\begin{align*}
\prod_{i=1}^d x_i^2 - \prod_{i=1}^d x_i &= \sum_{j=1}^d \left(\prod_{i<j} x_i \prod_{i \ge j} x_i^2 - 
\prod_{i\le j} x_i \prod_{i > j} x_i^2 \right) \\
&= \sum_{j=1}^d (x_j^2 - x_j) \prod_{i<j} x_i \prod_{i > j} x_i^2 
\end{align*}
This is of the desired form, and it can be seen that each multiplier of $x_j^2-x_j$ is of degree at most $2d-2$.  
\end{proof}

We put these claims together to prove \lemref{lem:key_polys_sos}, which we restate here.
\begin{lemma}
There exist polynomials $g_i(\mathbf{x})$ of degree at most $2k-2$ such that
\[
A_k(\mathbf{x}) = \sum_{i=1}^n (x_i^2-x_i) g_i(\mathbf{x}) + (k!) e_k(x_1^2, \ldots, x_n^2) \enspace .
\]
\end{lemma}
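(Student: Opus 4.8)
The plan is to combine \claimref{claim1} and \claimref{claim:sos}, which together already do essentially all the work. \claimref{claim1} expresses $A_k(\x)$ as $\sum_{i=1}^n(x_i^2-x_i)g_i(\x)+k!\,e_k(\x)$ with each $g_i$ of degree at most $k-1$, so the only remaining task is to replace the ``linear'' elementary symmetric polynomial $e_k(\x)=e_k(x_1,\ldots,x_n)$ by the ``quadratic'' one $e_k(x_1^2,\ldots,x_n^2)$, modulo the ideal $\langle x_1^2-x_1,\ldots,x_n^2-x_n\rangle$, while keeping track of the degrees of the resulting ideal multipliers.

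First I would apply \claimref{claim:sos} separately to each subset $S\subseteq[n]$ with $|S|=k$: after relabeling variables, it produces polynomials $g_{S,i}$ (one for each $i\in S$) of degree at most $2k-2$ with $\prod_{i\in S}x_i^2-\prod_{i\in S}x_i=\sum_{i\in S}(x_i^2-x_i)g_{S,i}(\x)$. Summing this identity over all $\binom{n}{k}$ such subsets and using $e_k(x_1^2,\ldots,x_n^2)=\sum_{|S|=k}\prod_{i\in S}x_i^2$ and $e_k(\x)=\sum_{|S|=k}\prod_{i\in S}x_i$, I obtain
\[
e_k(x_1^2,\ldots,x_n^2)-e_k(\x)=\sum_{i=1}^n(x_i^2-x_i)\,\Big(\sum_{\substack{S\ni i\\ |S|=k}}g_{S,i}(\x)\Big),
\]
where each inner sum has degree at most $2k-2$.

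Substituting this into the identity from \claimref{claim1} and collecting the coefficient of each $x_i^2-x_i$ yields
\[
A_k(\x)=\sum_{i=1}^n(x_i^2-x_i)\,G_i(\x)+k!\,e_k(x_1^2,\ldots,x_n^2),\qquad G_i:=g_i-k!\sum_{\substack{S\ni i\\ |S|=k}}g_{S,i}.
\]
Since $\deg g_i\le k-1\le 2k-2$ (for $k\ge 1$) and $\deg g_{S,i}\le 2k-2$, each $G_i$ has degree at most $2k-2$, which is exactly the bound in the lemma. The argument has no analytic content; the one place to be careful is the degree bookkeeping — in particular, verifying that applying \claimref{claim:sos} to a size-$k$ subset really produces multipliers of degree only $2k-2$ (this is where the telescoping structure of that claim matters), and that combining these with the degree-$(k-1)$ multipliers coming from \claimref{claim1} still does not exceed $2k-2$.
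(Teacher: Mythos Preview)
Your proof is correct and follows essentially the same approach as the paper: apply \claimref{claim1}, then use \claimref{claim:sos} on each monomial of $e_k$ to replace $e_k(\x)$ by $e_k(x_1^2,\ldots,x_n^2)$ modulo the ideal, and check the degree bound $\max(k-1,2k-2)=2k-2$. The only difference is that you spell out the summation over size-$k$ subsets explicitly, whereas the paper states the resulting identity for $e_k$ directly.
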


\begin{proof}
By \claimref{claim1} we can write $A_k(x_1, \ldots, x_n) = \sum_{i=1}^n (x_i^2-x_i) g_i(\x) + (k!) e_k(\x)$ where each 
$g_i(\x)$ is of degree at most $k-1$.  Now by \claimref{claim:sos}
\[
e_k(\x) = e_k(x_1^2, \ldots, x_n^2) + \sum_{i=1}^n (x_i^2-x_i) \cdot f_i(\x) \enspace ,
\]
where each $f_i(\x)$ is of degree at most $2k-2$.  This proves the lemma.
\end{proof}

\section{Blekherman's theorem} \label{ap:B}  

Blekherman and Riener~\cite{BR12} made a general study of the relationship between symmetric nonnegative forms and symmetric sums of squares.  Subsequently, Blekherman~\cite{B15} considered the special case of polynomials that are nonnegative on the hypercube, and gave a very useful decomposition of such polynomials.  We include a proof here of a special case of his theorem. 

The technique used for our proof is a novel decomposition of functions on the hypercube using the kernels of certain differential 
operators. A similar decomposition was independently discovered by Filmus and Mossel~\cite{FM16} who use it to prove an invariance principle for low-degree functions on slices of the boolean hypercube. 

Let $[n]$ denote the set of integers $\{1,2,\ldots,n\}$. The ideal $\mathcal{I}:= \langle x_{i}^{2} - x_{i}:i\in[n] \rangle$ consists of polynomials that are identically zero on the hypercube $\mathcal{H}=\{0,1\}^{n}$. Let $L_{t}=\R[x]_{t}/\mathcal{I}$ be the  space of degree-$t$ homogeneous multilinear polynomials on $n$ variables. The $\binom{n}{t}$ monomials $x^{S}$ where 
$S \subseteq [n], |S| = t$, form a basis for $L_{t}$. The correspondence between set $S$ and monomial $x^{S}$ can be used to 
map degree-$t$ polynomials to linear combinations of $t$-subsets of $[n]$. 
Here polynomial $p(x)$ corresponds to the $\binom{n}{t}$-dimensional vector of the coefficients of its monomials, say in lexicographic order.

Let $M_{t}=\R[x]_{\leq t}/\mathcal{I}$ denote the space of $n$-variate polynomials of degree at most $t$ on the hypercube. Given $x \in \R^{n}$, the sum $\sum_{i \in [n]} x_{i}$ is denoted by $|x|$.

\subsection{Kernels of the operators $W_{t}$} \label{sec1} 

For $t\geq 0$, define the linear operator $W_{t}$ that acts by summing over the partial derivatives of a degree-$t$ polynomial, 
\all{ 
W_{t} p(x) = \en{ \sum_{i\in [n]} \part{}{x_{i} } } p(x) \enspace .
} {one} 
For $t\geq 1$ the operator $W_{t}:L_{t} \to L_{t-1}$ is represented by a matrix with rows and columns indexed by $S, T \subset [n]$ with $|S|=t-1$ and $|T|=t$ respectively, and entry $(W_{t})_{S, T}=1$ if $S \subset T$ and $0$ otherwise. 
The adjoint operator $W_{t}^{T}: L_{t-1} \to L_{t}$ acts as multiplication by $|x|-t+1$ on each degree-$(t-1)$ monomial (and by linear extension on all of $L_{t-1}$),
\all{ 
W_{t}^{T} x^{S}  = \sum_{i\not \in S} x^{S \cup \{i\}} = x^{S} ( |x| - t+1) \enspace .
} {two} 
Note that we used the hypercube constraints $x_{i}^{2} = x_{i}$ to derive the second equality in \eqnref{two}. 
Our goal in this section is to bound the dimension of $\Ker(W_{t})$ and find an explicit basis for these spaces. 

We relate $\Ker(W_{t})$ to the eigenspaces of the Johnson graphs. The Johnson graph $J(n,t)$ has $\binom{n}{t}$ vertices corresponding to the $t$-subsets $S \subset [n], |S|=t$, with subsets $S, T$ connected by an edge if and only if $|S \cap T|= t-1$. The adjacency matrix of $J(n,t)$ is denoted by 
$A_{J}(n, t)$. The following lemma computes the spectrum of $A_{J}(n,t)$; it can be found in Godsil's notes~\cite{G10}, 
but we include a proof here as these notes are no longer online.

\begin{theorem} 
The eigenvalues of $A_{J}(n, t)$ are $t(n-t) - i(n+1-i)$ with multiplicity $\binom{n}{i} - \binom{n}{i-1}$ for $i=\{0,1,\ldots,t\}$
and $t\leq (n+1)/2$. 
\end{theorem}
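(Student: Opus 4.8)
The plan is to realize the Johnson adjacency matrix $A_J(n,t)$, acting on $L_t$ with the monomial basis $\{x^S:|S|=t\}$ taken as orthonormal, in terms of the ``down'' operator $W_t:L_t\to L_{t-1}$ and the ``up'' operator $W_t^T:L_{t-1}\to L_t$, and then to compute the full spectrum by induction on $t$ using the elementary fact that $M^TM$ and $MM^T$ have the same nonzero eigenvalues with the same multiplicities.

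The first step is two combinatorial identities. In the monomial bases $W_t$ has entries $1$ exactly when $S\subset T$ (for $|S|=t-1$, $|T|=t$), and counting common sub/super-sets gives, on $L_t$,
\[
W_t^T W_t = t\, I + A_J(n,t), \qquad W_{t+1}W_{t+1}^T = (n-t)\, I + A_J(n,t).
\]
Indeed the diagonal entries are $t$ (the number of $(t-1)$-subsets of a $t$-set) and $n-t$ (the number of $(t+1)$-supersets), and an off-diagonal entry is $1$ precisely when $|S\cap S'|=t-1$, i.e. $S,S'$ are adjacent in $J(n,t)$. Write $B_t:=W_t^TW_t=tI+A_J(n,t)$; this is positive semidefinite with $\Ker B_t=\Ker W_t$. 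Note also that $W_tW_t^T = (n-t+1)I+A_J(n,t-1)$ on $L_{t-1}$, so $W_tW_t^T$ (on $L_{t-1}$) and $B_t$ (on $L_t$) have identical nonzero spectra with multiplicities; and $W_tW_t^T = W_{t-1}^TW_{t-1} + (n-2t+2)I = B_{t-1} + (n-2t+2)I$.

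The second step is the induction. Set $\theta_{t,i}:=t(n-t)-i(n+1-i)$. I claim, for $0\le t\le (n+1)/2$, that $B_t$ has eigenvalue $0$ with multiplicity $\binom nt-\binom n{t-1}$ together with, for $i=0,\dots,t-1$, the eigenvalue $t+\theta_{t,i}$ with multiplicity $\binom ni-\binom n{i-1}$; subtracting $tI$ and using $\theta_{t,t}=-t$ then gives the theorem. The base case $t=0$ is $L_0=\R\cdot 1$, $A_J(n,0)=0$. For the step, the inductive hypothesis describes the eigenvalues of $B_{t-1}$, so via $W_tW_t^T = B_{t-1}+(n-2t+2)I$ the eigenvalues of $W_tW_t^T$ are those of $B_{t-1}$ shifted up by $n-2t+2$; since $B_{t-1}\succeq 0$ and $n-2t+2\ge 1$ when $t\le (n+1)/2$, \emph{all} $\binom n{t-1}$ eigenvalues of $W_tW_t^T$ are strictly positive, hence are exactly the nonzero eigenvalues of $B_t$. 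This forces $\mathrm{rank}(W_t)=\binom n{t-1}$, so $\dim\Ker W_t=\binom nt-\binom n{t-1}$ supplies the remaining eigenvalue $0$ of $B_t$; and a short algebraic computation shows the shifted values equal $t+\theta_{t,i}$, completing the induction. Finally one checks that $\theta_{t,0},\dots,\theta_{t,t}$ are distinct (since $i\mapsto i(n+1-i)$ is strictly increasing on $[0,(n+1)/2]$) and that the multiplicities sum to $\binom nt=\dim L_t$.

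The routine identities for $W_t^TW_t$ and $W_tW_t^T$ are straightforward; the main thing to get right is the bookkeeping in the inductive step — transporting multiplicities correctly through the $M^TM\leftrightarrow MM^T$ correspondence, accounting for the extra eigenvalue $0$ of $B_t$ via the rank count, using the hypothesis $t\le(n+1)/2$ exactly where it is needed to keep the transported eigenvalues positive (so they do not collide with the kernel), and verifying the identity $\theta_{t-1,i}+(n-2t+2)=\theta_{t,i}$. Conceptually, all of this is the statement that $W_t^T$ and $W_t$ together with the grading operator generate an $\mathfrak{sl}_2$-action on $\bigoplus_t L_t$, whose decomposition into irreducibles yields the multiplicities $\binom ni-\binom n{i-1}$ and whose standard weight-space formula for the action of $ef$ yields the eigenvalues; I would present the elementary induction above but may remark on this underlying structure.
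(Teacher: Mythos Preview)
Your proof is correct and follows essentially the same route as the paper: both arguments express $W_t^TW_t$ and $W_tW_t^T$ in terms of $A_J(n,t)$ and $A_J(n,t-1)$, transport eigenvalues through the $M^TM\leftrightarrow MM^T$ correspondence, and pick up the new eigenspace from $\Ker W_t$; your version is slightly more careful with the bookkeeping (distinctness, multiplicity count) and starts the induction at $t=0$ rather than $t=1$. One small slip: the identity you quote at the end should read $\theta_{t-1,i}+(n-2t+1)=\theta_{t,i}$, since the equation you actually need is $(t-1)+\theta_{t-1,i}+(n-2t+2)=t+\theta_{t,i}$.
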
 

\begin{proof} 
We proceed by induction on $n$ and~$t$. For the base case, note that the Johnson graph $J(n,1)$ is the complete graph on $n$ vertices. The corresponding adjacency matrix $A_{J}(n,1)$ has eigenvalue $-1$ with multiplicity $(n-1)$ and $(n-1)$ with multiplicity $1$, thus the theorem is true for $n=1$. 

We obtain the spectrum of $A_{J}(n,t)$ in terms of the spectrum of $A_{J}(n,t-1)$. Computing the entries of $W_{t}^{T} W_{t}$ and 
$W_{t} W_{t}^{T}$ it follows that, 
\al{ 
W_{t}^{T} W_{t}&= tI + A_{J}(n, t) \nl 
W_{t} W_{t}^{T}&=  (n-t+1)I   + A_{J}(n,t-1) \enspace .
} 
 The non-zero eigenspaces of $W_{t}^{T} W_{t}$ correspond to those of $W_{t}W_{t}^{T}$, so if $v$ is an eigenvector for $A_{J}(n,t-1)$ with eigenvalue $\lambda_{i}$ then $W_{t}^{T}v$ is an eigenvector for $A_{J}(n,t)$ with eigenvalue $\lambda_{i} + n-2t+1$. By the induction hypothesis, $(t-1)(n-t+1)- i(n+1-i)$ 
is an eigenvalue for $A_{J}(n,t-1)$ with multiplicity $\binom{n}{i} - \binom{n}{i-1}$ for $i \in [t-1]$. Adding $n-2t+1$, it follows that $t(n-t) - i(n+1-i)$ 
is an eigenvalue for $A_{J}(n,t)$ with the same multiplicity. 

The induction hypothesis also implies that $W_{t} W_{t}^{T} = (n-t+1)I   + A_{J}(n,t-1)$ has rank $\binom{n}{t-1}$ as it is positive semidefinite and the smallest eigenvalue is $n-2t+2>0$. Hence the $\binom{n}{t}$-dimensional matrix $W_{t}^{T} W_{t}$ has rank $\binom{n}{t-1}$, so its kernel has dimension $\binom{n}{t} - \binom{n}{t-1}$.  This implies that $A_{J}(n, t)$ has an eigenspace of dimension $\binom{n}{t} - \binom{n}{t-1}$ with eigenvalue $-t= t(n-t) - t(n+1-t)$. 
\end{proof} 
\noindent The following corollary computes the dimension of $\Ker(W_{t})$. 

\begin{lemma} \label{c0} 
$\Dim (\Ker(W_{t}))=\binom{n}{t} - \binom{n}{t-1}$ for $t\leq (n+1)/2$.
\end{lemma}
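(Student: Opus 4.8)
The plan is to read off the dimension of $\Ker(W_t)$ directly from the spectral computation just completed. First I would recall the two factorization identities $W_t^T W_t = tI + A_J(n,t)$ and $W_t W_t^T = (n-t+1)I + A_J(n,t-1)$, which were established in the proof of the preceding theorem. Since $\Ker(W_t) = \Ker(W_t^T W_t)$ for any real matrix, it suffices to compute the dimension of the kernel of $W_t^T W_t = tI + A_J(n,t)$, i.e. the multiplicity of the eigenvalue $-t$ of $A_J(n,t)$.

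Next I would invoke the eigenvalue theorem: the eigenvalues of $A_J(n,t)$ are $t(n-t) - i(n+1-i)$ with multiplicity $\binom{n}{i} - \binom{n}{i-1}$ for $i = 0,1,\ldots,t$, valid for $t \le (n+1)/2$. Setting $i = t$ gives the eigenvalue $t(n-t) - t(n+1-t) = t(n-t) - t(n+1) + t^2 = -t$, with multiplicity $\binom{n}{t} - \binom{n}{t-1}$. I should check that no \emph{other} value of $i \in \{0,\ldots,t\}$ produces the eigenvalue $-t$: the function $i \mapsto t(n-t) - i(n+1-i) = t(n-t) - i(n+1) + i^2$ is a downward-then-upward parabola in $i$ whose two roots of $i(n+1-i) = t(n+1-t)$ are $i=t$ and $i = n+1-t$; since $t \le (n+1)/2 \le n+1-t$ and we only range over $i \le t$, the value $i=t$ is the unique index in range giving eigenvalue $-t$. (One also checks these are genuinely all the eigenvalues by a dimension count: $\sum_{i=0}^t \left(\binom{n}{i} - \binom{n}{i-1}\right) = \binom{n}{t}$, which matches the size of $A_J(n,t)$.) Hence the $-t$-eigenspace of $A_J(n,t)$ has dimension exactly $\binom{n}{t} - \binom{n}{t-1}$, and therefore $\Dim(\Ker(W_t)) = \binom{n}{t} - \binom{n}{t-1}$.

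There is essentially no obstacle here: the corollary is an immediate consequence of the theorem, and indeed the last two sentences of the theorem's proof already exhibit the relevant kernel of dimension $\binom{n}{t} - \binom{n}{t-1}$. The only point requiring a moment's care is the uniqueness argument above — confirming that $-t$ is attained by a single index $i$ in the admissible range — so that the multiplicities do not accidentally add up; this is handled by the parabola observation together with the constraint $t \le (n+1)/2$.
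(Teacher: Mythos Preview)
Your proof is correct and essentially matches the paper's approach: both read off the result directly from the spectral computation in the preceding theorem. The only cosmetic difference is that the paper phrases it via rank--nullity, using $\Dim(\Ker W_t)=\binom{n}{t}-\rank(W_t^T)$ together with the observation that $W_tW_t^T=(n-t+1)I+A_J(n,t-1)$ has full rank $\binom{n}{t-1}$, whereas you work with $W_t^TW_t=tI+A_J(n,t)$ and identify its kernel as the $-t$-eigenspace of $A_J(n,t)$; your uniqueness check for the index $i$ is a nice touch that makes the multiplicity argument self-contained.
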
 

\begin{proof} 
$\Dim(\Ker(W_{t}))=\binom{n}{t}  - \rank(W_{t}^{T})$, and from the above proof it follows that $\rank(W_{t}^{T})=\rank(W_{t}W_{t}^{T})= \binom{n}{t-1}$ for 
$t\leq (n+1)/2$. 
\end{proof} 

\noindent We next compute an explicit basis for $\Ker(W_{t})$, viewed as a subspace of $L_{t}$. We recall the notion of 
a standard Young tableau of shape $(n-t, t)$ to describe the basis. 

\begin{defn} 
A standard Young tableau $\mathcal{U}$ of shape $(n-t, t)$ is an arrangement of $[n]$ in an array with two rows of 
size $n-t$ and $t$ respectively, such that each row and column is sorted in ascending order. 
\end{defn} 

\noindent The basis for $\Ker(W_{t})$ described by the following theorem will be used for computations in the following sections. Note that polynomials $p_{\mathcal{U}}$ in this basis evaluate to~0 for all $x\in \{0,1\}^{n}$ with $|x| \in \{ 0, 1, \ldots, t-1 \} \cup \{ n, n-1, \ldots, n-t+1 \}$.

\begin{theorem} \label{Span} 
For $t \leq n/2$ and $\mathcal{A}= (a(1), a(2), \ldots, a(2t))$ an array of distinct elements $a(i) \in [n]$, define the polynomial $p_{\mathcal{A}}(x):=\prod_{i \in [t]} (x_{a(2i-1)} - x_{a(2i)})$. 

The polynomials $p_{\mathcal{U}}(x)$, where $(u(2i-i), u(2i))$ for $i\in [t]$ are the entries of the $i$-th column of a standard $(n-t, t)$ Young tableau~$\mathcal{U}$, form a basis for $\Ker(W_{t})$.   
\end{theorem}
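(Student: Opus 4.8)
The plan is to verify two facts: (i) each polynomial $p_{\mathcal{U}}(x)$ actually lies in $\Ker(W_t)$, and (ii) the collection $\{p_{\mathcal{U}}\}$ over all standard $(n-t,t)$ Young tableaux $\mathcal{U}$ is linearly independent. Since the number of standard Young tableaux of shape $(n-t,t)$ is exactly $\binom{n}{t} - \binom{n}{t-1}$ (by the hook-length formula, or by the classical count of ballot sequences), combining (i) and (ii) with \lemref{c0} forces these polynomials to span $\Ker(W_t)$, which is the theorem.

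For (i), I would first observe that $W_t$ is a first-order constant-coefficient differential operator, so it acts as a derivation-like map on products. For a single factor, $W_1(x_a - x_b) = 0$ directly from \eqnref{one}. For a product $p_{\mathcal{A}}(x) = \prod_{i\in[t]}(x_{a(2i-1)} - x_{a(2i)})$ of $t$ such difference-factors on $2t$ distinct indices, applying $\sum_j \partial/\partial x_j$ and using the product rule, each resulting term differentiates exactly one factor (the polynomial is multilinear, so no factor contributes a second derivative), and that factor becomes $\partial(x_{a(2i-1)}-x_{a(2i)})/\partial x_{a(2i-1)} + \partial(x_{a(2i-1)}-x_{a(2i)})/\partial x_{a(2i)} = 1 + (-1) = 0$. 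Hence $W_t p_{\mathcal{A}} = 0$ for \emph{every} array $\mathcal{A}$ of distinct indices, in particular for the column-pairs of a standard Young tableau. (The remark that $p_{\mathcal{U}}$ vanishes when $|x|$ is small or close to $n$ is a sanity check: if $|x| \le t-1$ then some column has both its entries equal to~$0$; if $|x| \ge n-t+1$ then some column has both entries equal to~$1$; either way one factor vanishes — this uses that the tableau has $t$ columns covering $2t$ distinct indices, but we do not strictly need it for the proof.)

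For (ii), the linear independence is the main obstacle, and here I would use a triangularity / leading-term argument with respect to a suitable monomial order. Expand $p_{\mathcal{U}}(x) = \prod_{i\in[t]}(x_{u(2i-1)} - x_{u(2i)})$; its monomials are products $\prod_{i} x_{c_i}$ where $c_i \in \{u(2i-1), u(2i)\}$, with sign $(-1)^{\#\{i : c_i = u(2i)\}}$. I would single out the monomial obtained by always taking the top-row entry, namely $x^{S(\mathcal{U})}$ with $S(\mathcal{U}) = \{u(1), u(3), \ldots, u(2t-1)\}$ = the bottom row of the tableau. The claim to establish is that the map $\mathcal{U} \mapsto S(\mathcal{U})$, together with an appropriate ordering of $t$-subsets, makes the coefficient matrix of the $p_{\mathcal{U}}$'s (in the monomial basis $\{x^S : |S|=t\}$) triangular with $\pm 1$ on the diagonal. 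Concretely, order $t$-subsets by their sorted entries and check that $S(\mathcal{U})$ is the lexicographically extremal subset appearing in $p_{\mathcal{U}}$ among all $\mathcal{U}$ that could produce it; the standard-tableau condition (rows and columns increasing) is exactly what is needed to make this assignment injective and the triangularity work — this is the classical fact that bottom rows of standard $(n-t,t)$ tableaux biject with the "standard" $t$-subsets in the combinatorial basis for the irreducible $S_n$-module. If setting up the monomial order cleanly proves fussy, an alternative is to invoke representation theory directly: $\Ker(W_t)$ is $S_n$-invariant, has dimension $\binom{n}{t}-\binom{n}{t-1} = \dim S^{(n-t,t)}$, and one checks it is a single copy of the irreducible $S^{(n-t,t)}$ (e.g. because $W_t$ intertwines the permutation modules $M^{(n-t,t)} \to M^{(n-t+1,t-1)}$ and its kernel is the Specht-complement piece); then the polytabloid-type vectors indexed by standard tableaux are a known basis. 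I expect the bare-hands triangularity argument to be the cleaner route for a self-contained appendix, with the $S_n$-representation viewpoint as backup.
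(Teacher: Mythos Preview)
Your proposal is correct and follows the same overall skeleton as the paper: show every $p_{\mathcal{A}}$ lies in $\Ker(W_t)$ via the product rule, then match the count of standard $(n-t,t)$ tableaux against $\Dim(\Ker(W_t))$ from \lemref{c0}. The place where you diverge is in establishing that the $p_{\mathcal{U}}$ are independent. The paper does not do this directly: it instead invokes the straightening algorithm (citing \cite{CST08}) to assert that the standard-tableau polynomials $p_{\mathcal{U}}$ form a basis of $\Span\{p_{\mathcal{A}}:|\mathcal{A}|=2t\}$, and then uses the dimension count to conclude this span is all of $\Ker(W_t)$. Your route---a leading-monomial/triangularity argument in a dominance-type order, with the Specht-module viewpoint as backup---is the standard self-contained proof that standard polytabloids are linearly independent, and is essentially what underlies the straightening lemma anyway. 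Your version has the advantage of being self-contained, at the cost of writing out the order argument; the paper's version is shorter but leans on an external reference. One small wobble: you refer to $S(\mathcal{U})=\{u(1),u(3),\ldots,u(2t-1)\}$ first as the ``top-row entry'' choice and then as ``the bottom row''; in the paper's indexing $u(2i-1)$ is the top entry of column~$i$, and the dominance-maximal monomial is actually the one indexed by the second row $\{u(2),u(4),\ldots,u(2t)\}$. This does not affect the validity of the argument, but you should fix the row identification before writing it out.
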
 

\begin{proof} 
We first show that for all $|\mathcal{A}|=2t$, the degree-$t$ polynomial $p_{\mathcal{A}}(x)$ belongs to the kernel of~$W_{t}$. Computing the 
partial derivatives of $p_{\mathcal{A}}(x)$, 
\al{ 
\part{}{x_{j}} p_{\mathcal{A}}(x) = 
\begin{cases} 
p_{\mathcal{A}}(x)/(x_{a(2i-1)} - x_{a(2i)}) \text{ if $j=a(2i-1)$} \\
-p_{\mathcal{A}}(x)/(x_{a(2i-1)} - x_{a(2i)}) \text{ if $j=a(2i)$} \\
0 \text{ otherwise} 
\end{cases}  
} 
Summing over the partial derivatives and using \eqnref{one} it follows that $W_{t} p_{\mathcal{A}}(x) =0$. 

The set of polynomials $\{p_{\mathcal{A}}(x): |\mathcal{A}|=2t\}$ is not linearly independent. The straightening algorithm for Young tableaux (see for example Section~10.5 in~\cite{CST08}) shows that the polynomials $p_{\mathcal{U}}(x)$ where $(u(2i-i), u(2i))$ for 
$i\in [t]$ are entries of the $i$-th column of a standard $(n-t, t)$ Young tableau $\mathcal{U}$ form a basis for $\Span\{p_{\mathcal{A}}(x):|\mathcal{A}|=2t\}$. A simple counting argument or the hook length formula \cite{CST08} shows that the number of such $~\mathcal{U}$ is $\binom{n}{t} - \binom{n}{t-1}$. These $p_{\mathcal{U}}$ together thus span a space of dimension $\binom{n}{t} - \binom{n}{t-1}$, which is $\Dim(\Ker(W_{t}))$ by \lemref{c0}. Hence the $p_{\mathcal{U}}$ form a basis for~$\Ker(W_{t})$. 
\end{proof}

\subsection{Polynomial decompositions} 
The action of the operators $W_{t}$ yields the decomposition $L_{t} = \Ker(W_{t}) \oplus \Img(W_{t}^{T})$. Applying this decomposition iteratively we obtain the following theorem, 
\begin{theorem} \label{dec1} 
A polynomial $p(x) \in L_{t}$ can be decomposed as 
\all{ 
p(x) &= p_t(x) + (|x|-t+1) p_{t-1}(x) + 
%(|x|-t+1)(|x|-t+2) p_{t-2}(x) + 
\cdots + (|x|-t+1)\cdots (|x|-1)|x| p_0(x) \nl 
&= p_t(x) + \sum_{i=1}^t p_{t-i}(x) \prod_{j=1}^{i} (|x|-t+j)
} {six} 
where $p_{t-i}(x) \in \Ker(W_{t-i})$. 
\end{theorem}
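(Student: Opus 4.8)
The plan is to prove \thmref{dec1} by induction on $t$, peeling off one factor of $|x|-t+1$ at each stage. The two ingredients I would use are the orthogonal splitting $L_{s}=\Ker(W_{s})\oplus\Img(W_{s}^{T})$ recorded just before the theorem --- which is simply the standard linear-algebra identity $\Img(W_{s}^{T})=\Ker(W_{s})^{\perp}$ taken with respect to the monomial inner product on $L_{s}$ (the one in which the matrix of $W_{s}^{T}$ is literally the transpose of that of $W_{s}$) --- and the explicit description of the adjoint from \eqnref{two}: modulo the ideal $\mathcal{I}$, the operator $W_{s}^{T}$ sends $r \in L_{s-1}$ to $(|x|-s+1)\,r \in L_{s}$.

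For the base case $t=0$, the space $L_{0}$ consists of constants, $W_{0}$ annihilates all of them, so $p=p_{0}$ with $p_{0}\in\Ker(W_{0})=L_{0}$, and the sum in \eqnref{six} is empty. For the inductive step, assume the statement for $t-1$ and let $p\in L_{t}$. Using $L_{t}=\Ker(W_{t})\oplus\Img(W_{t}^{T})$ I would write $p=p_{t}+W_{t}^{T}q$ with $p_{t}\in\Ker(W_{t})$ and $q\in L_{t-1}$, which by \eqnref{two} equals $p=p_{t}+(|x|-t+1)\,q$ after reduction modulo $\mathcal{I}$. Applying the induction hypothesis to $q$ gives $q=q_{t-1}+\sum_{i=1}^{t-1}q_{t-1-i}\prod_{j=1}^{i}(|x|-(t-1)+j)$ with each $q_{t-1-i}\in\Ker(W_{t-1-i})$; multiplying through by $|x|-t+1$ and noting that the factors of $(|x|-t+1)\prod_{j=1}^{i}(|x|-(t-1)+j)$ are exactly $|x|-t+1,\,|x|-t+2,\dots,\,|x|-t+(i+1)$, this product collapses to $\prod_{j=1}^{i+1}(|x|-t+j)$. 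Setting $p_{t-1}:=q_{t-1}$ and $p_{t-m}:=q_{t-m}$ for $2\le m\le t$, and reindexing the sum by $m=i+1$, I obtain
\[
p(x)=p_{t}(x)+\sum_{m=1}^{t}p_{t-m}(x)\prod_{j=1}^{m}\bigl(|x|-t+j\bigr),
\]
with $p_{t-m}\in\Ker(W_{t-m})$ for every $m\geq 1$ by the induction hypothesis. This is \eqnref{six}.

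I do not expect a genuine obstacle: the mathematical content is entirely in the two ingredients above, and what remains is the reindexing of the telescoping product of linear factors $|x|-t+j$, carried out exactly as indicated. The points that need care are purely formal --- every identity must be read modulo $\mathcal{I}$, so that multiplying by $|x|$ and reducing keeps us inside $M_{t}$ and the pieces $p_{t-i}$ genuinely lie in $L_{t-i}$ --- and one should note that only the direct-sum structure $L_{s}=\Ker(W_{s})\oplus\Img(W_{s}^{T})$ is used here, not the finer dimension count of \lemref{c0} or the explicit Young-tableau basis of \thmref{Span}, which enter only in later applications of this decomposition.
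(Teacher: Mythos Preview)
Your proposal is correct and follows essentially the same approach as the paper: induction on $t$, splitting $p\in L_t$ via $L_t=\Ker(W_t)\oplus\Img(W_t^T)$, using \eqnref{two} to write the image part as $(|x|-t+1)\,q$ with $q\in L_{t-1}$, and applying the induction hypothesis to $q$. Your write-up is in fact more explicit than the paper's in carrying out the reindexing of the product $\prod_j(|x|-t+j)$.
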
 

\begin{proof} 
We proceed by induction on~$t$.  For the base case $t=0$, observe that a degree-$0$ polynomial belongs to $\Ker(W_{0})$. 
For the inductive step, a polynomial $p(x) \in L_{t}$ can be written as $p_{t}(x) + q(x)$ where $p_{t}(x) \in \Ker(W_{t})$ and $q(x) \in \text{Im}(W_{t}^{T})$.  
The action of $W_{t}^{T}$ on polynomials in $L_{t-1}$ is described by \eqnref{two}: for all $g(x) \in L_{t-1}$ we have
\al{ 
W_{t}^{T} g(x) =( |x| - t+1) g(x) \enspace .
} 
As $q(x) \in \Img(W_{t}^{T})$, it can be factored as $q(x)=(|x|- t+1) h(x)$ where $h(x)\in L_{t-1}$.  The result follows using the induction hypothesis for $g(x)$.  
\end{proof} 

Applying the above theorem to the subspaces $L_{j}$ ($j\in\{0,\ldots,t\}$) that are contained in $M_{t}$ and collecting the terms corresponding to $\Ker(W_{j})$, we obtain the following decomposition for polynomials in $M_{t}$. 

\begin{corollary} \label{cor5} 
A polynomial $p(x) \in M_{t}$ can be decomposed as $p(x)= \sum_{j=0}^t q_{j}(x)$, where
\all{ 
q_{j}(x) = \sum_{0\leq i\leq t-j} |x|^{i} p_{ij} (x)    
} {cor6} 
such that each $p_{ij}(x) \in \Ker(W_{j})$. 
\end{corollary}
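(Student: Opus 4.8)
The plan is to derive the corollary from \thmref{dec1} applied one homogeneous degree at a time, followed by a regrouping of terms. First I would use that $M_t=\bigoplus_{j=0}^{t}L_j$ as a vector space: every multilinear polynomial of degree at most $t$ splits uniquely as $p(x)=\sum_{j=0}^{t}r_j(x)$ with $r_j(x)\in L_j$ its homogeneous multilinear part of degree $j$. Since each $L_j$ with $j\le t\le n/2$ satisfies the hypotheses needed for \thmref{dec1} (the decomposition $L_j=\Ker(W_j)\oplus\Img(W_j^{T})$ together with \lemref{c0}), I would apply that theorem to each $r_j$ separately, obtaining
\[
r_j(x) \;=\; p^{(j)}_j(x) \;+\; \sum_{i=1}^{j} p^{(j)}_{j-i}(x)\prod_{\ell=1}^{i}\bigl(|x|-j+\ell\bigr),
\qquad p^{(j)}_{j-i}(x)\in\Ker(W_{j-i}).
\]
Reindexing the inner sum by $m=j-i$, the contribution of $r_j$ living in $\Ker(W_m)$ is $p^{(j)}_m(x)$ multiplied by the univariate polynomial $\prod_{\ell=1}^{j-m}(z-j+\ell)$ in $z=|x|$, which has degree exactly $j-m$ (and equals $1$ when $j=m$).

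Next I would interchange the order of summation so that the pieces are organized by kernel index rather than by degree. Setting $q_m(x):=\sum_{j=m}^{t}p^{(j)}_m(x)\prod_{\ell=1}^{j-m}(|x|-j+\ell)$, I get $p(x)=\sum_{j=0}^{t}r_j(x)=\sum_{m=0}^{t}q_m(x)$. Writing each prefactor in the monomial basis, $\prod_{\ell=1}^{j-m}(z-j+\ell)=\sum_{i=0}^{j-m}\gamma^{(j,m)}_i z^i$, and collecting by powers of $|x|$ yields
\[
q_m(x)=\sum_{i=0}^{t-m}|x|^{i}\,p_{im}(x),
\qquad p_{im}(x):=\sum_{j\ge m+i}\gamma^{(j,m)}_i\,p^{(j)}_m(x).
\]
Because $\Ker(W_m)$ is a linear subspace of $L_m$ and every $p^{(j)}_m$ lies in it, the combination $p_{im}(x)$ again lies in $\Ker(W_m)$. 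The highest power of $|x|$ that can occur in $q_m$ is bounded by $\deg\prod_{\ell=1}^{j-m}(\cdot)=j-m\le t-m$, giving exactly the range $0\le i\le t-m$ in the statement; renaming $m$ back to $j$ reproduces the asserted form.

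I do not expect a genuine obstacle: the entire argument is a rearrangement of \thmref{dec1}. The only point requiring care is the index bookkeeping when passing from the degree-graded sum $p=\sum_j r_j$ to the sum graded by the kernel $\Ker(W_j)$ to which each sub-piece belongs, together with the elementary observation that re-expanding the falling-factorial prefactors $\prod_{\ell}(|x|-j+\ell)$ in powers of $|x|$ keeps each coefficient polynomial inside $\Ker(W_j)$, since that kernel is closed under linear combinations with scalar coefficients. I would also make a one-line remark that the uniqueness/validity of the factorization used inside \thmref{dec1} relies on $W_j^{T}$ being injective on $L_{j-1}$, which holds throughout since $j\le t\le n/2$.
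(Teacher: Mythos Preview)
Your proposal is correct and follows essentially the same route as the paper: split $p$ into homogeneous parts, apply \thmref{dec1} to each, regroup by the kernel index $\Ker(W_j)$, and then expand the falling-factorial prefactors in powers of $|x|$, using that $\Ker(W_j)$ is closed under scalar linear combinations. Your bookkeeping is a bit more explicit than the paper's, and the extra remark about injectivity of $W_j^{T}$ is unnecessary for existence (the orthogonal decomposition $L_j=\Ker(W_j)\oplus\Img(W_j^{T})$ holds for any linear map), but harmless.
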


\begin{proof} 
A polynomial $p(x) \in M_{t}$ can be written as $p(x) = \sum_{i=0}^t p_{i}(x)$ where $p_{i}(x) \in L_{i}$ is the homogeneous degree-$i$ component of $p$. Applying \thmref{dec1} to each $p_{i}(x)$ and collecting all the terms over the \eqnsref{six} with prefactors in $\Ker(W_{j})$, we obtain a decomposition 
$p(x) = \sum_{j \in [t]} q_{j}^{\prime}(x)$ such that 
\al{ 
q_{j}^{\prime}(x) &= p_{jj}^{\prime}(x) + p_{j+1, j}^{\prime} (x) (|x| -j) + p_{j+2, j}^{\prime}(x) (|x|-j)(|x|-j+1)+ \cdots \nl 
&\cdots + p_{t, j}^{\prime} (x) (|x|-j)(|x|-j+1)\cdots(|x| - t+1) \enspace .
} 
Note that the indices in the above equation increase, because $\Ker(W_{j})$ occurs in the decompositions of $p_{i}(x)$ for $i\geq j$. 
Let $p_{ij}(x)$ be the coefficient of $|x|^{i}$ for $0\leq i \leq t-j$ in the above expression. This $p_{ij}(x)$ is a linear combination of polynomials $p^{\prime}_{ij}(x) \in \Ker(W_{j})$ 
and therefore also lies in $\Ker(W_{j})$. The decomposition in \eqnref{cor6} follows. 
\end{proof} 

\subsection{Symmetrization and Blekherman's theorem} 
The symmetric group $S_{n}$ acts on the polynomial ring $M_{n}$ by permuting the indices of the monomials. The 
subspace of symmetric polynomials in $M_{n}$ that are invariant under the action of $S_{n}$ is denoted by $\Lambda_{n}$. 
The operator $\Sym: M_{n} \to \Lambda_{n}$ maps a polynomial to its symmetrization, 

\all{ 
\Sym(p)(x) := \frac{1}{n!} \sum_{\sigma \in S_{n}} p(\sigma x)  \enspace .
} {sym}
The symmetrization of degree $k$ monomials evaluates to a univariate polynomial in $|x|$ over $M_{n}$. 
\begin{lemma} \label{2syms} 
Let $m_{k}(x)=x_{1}x_{2}\cdots x_{k}$ be a degree-$k$ monomial, then the following identity is true in the ring $M_{n}$, 
\all{ 
\Sym(m_{k})(x) = \frac{|x|(|x|-1)\cdots (|x|-k+1) }{ n(n-1) \cdots (n-k+1)} \enspace .
} {eleven}
\end{lemma}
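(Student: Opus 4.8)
The plan is to evaluate both sides at every point of the hypercube and use that an identity in $M_n=\R[x]_{\le n}/\mathcal{I}$ is exactly an identity of functions on $\{0,1\}^n$, since multilinear polynomials are determined by their values there. Both $\Sym(m_k)$ and the claimed right-hand side are symmetric, so their common value at a point $x$ will depend only on $w=|x|$, and checking agreement for every $w\in\{0,1,\ldots,n\}$ suffices.

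First I would rewrite the symmetrization by grouping permutations according to the tuple they induce. By definition $\Sym(m_k)(x)=\frac{1}{n!}\sum_{\sigma\in S_n}x_{\sigma(1)}\cdots x_{\sigma(k)}$, and the summand depends only on the ordered tuple $(\sigma(1),\ldots,\sigma(k))$ of distinct indices; exactly $(n-k)!$ permutations $\sigma$ realize each such tuple. Hence $\Sym(m_k)(x)=\frac{(n-k)!}{n!}\sum_{(a_1,\ldots,a_k)}x_{a_1}\cdots x_{a_k}$, the inner sum ranging over ordered $k$-tuples of distinct indices. Collecting the $k!$ orderings of each underlying $k$-set turns this into $\Sym(m_k)(x)=\frac{(n-k)!\,k!}{n!}\,e_k(x)=\frac{1}{\binom{n}{k}}\,e_k(x)$, an identity valid in $M_n$ (indeed already in $\R[x]$).

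Next I would compute $e_k$ on the hypercube. For $x\in\{0,1\}^n$ whose support has size $w=|x|$, the monomial $x_{i_1}\cdots x_{i_k}$ equals $1$ precisely when $\{i_1,\ldots,i_k\}$ is contained in the support of $x$, so $e_k(x)=\binom{w}{k}=\frac{w(w-1)\cdots(w-k+1)}{k!}$; equivalently this is the congruence $k!\,e_k(x)\equiv|x|(|x|-1)\cdots(|x|-k+1)\pmod{\mathcal{I}}$ that was already established in \claimref{claim1}. Substituting into the previous display gives $\Sym(m_k)(x)=\frac{1}{\binom{n}{k}}\cdot\frac{|x|(|x|-1)\cdots(|x|-k+1)}{k!}=\frac{|x|(|x|-1)\cdots(|x|-k+1)}{n(n-1)\cdots(n-k+1)}$, which is the asserted identity in $M_n$.

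There is essentially no serious obstacle: both ingredients are elementary counting facts. The only point that needs care is bookkeeping about which ring the identity lives in — it is an identity in $M_n$, so one must either reduce modulo the hypercube ideal $\mathcal{I}$ throughout (using $x_i^2\equiv x_i$ to pass from $e_k$ to the falling factorial) or, as above, argue by evaluation at all points of $\{0,1\}^n$.
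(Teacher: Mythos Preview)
Your proof is correct. Both you and the paper begin with the same counting step, obtaining $\Sym(m_k)(x)=\tfrac{1}{\binom{n}{k}}\sum_{|S|=k}x^{S}=\tfrac{1}{\binom{n}{k}}e_k(x)$. The difference is in how the passage from $e_k$ to the falling factorial is made: you evaluate $e_k$ directly on the hypercube (or, equivalently, invoke \claimref{claim1}), whereas the paper proceeds by induction on $k$ using the adjoint operator $W_k^T$, deriving the recursion $\Sym(m_k)(x)=\tfrac{|x|-k+1}{\,n-k+1\,}\Sym(m_{k-1})(x)$ from \eqnref{two}. Your argument is the more elementary and self-contained one; the paper's argument has the advantage of staying within the operator framework that the surrounding appendix is building and exercising.
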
 
\begin{proof} 
We proceed by induction on $k$, for $k=1$ the result is clearly true. Let $x^{S}$ be an arbitrary degree $k$ monomial. There are $k!(n-k)!$ permutations $\sigma \in S_{n}$ such that 
$\sigma (x_1 x_2 \cdots x_k)= x^{S}$, thus $\Sym(m_{k})(x)$ evaluates to
\all{ 
\Sym(m_{k})(x) &= \frac{1}{ \binom{n}{k} } \sum_{|S|=k} x^{S} \enspace .
} {sympk} 
In order to to express $\Sym(m_{k})(x)$ in terms of $\Sym(m_{k-1})(x)$, we write the above equation in terms of the operator $W_{k}^{t}$ from \secref{sec1}, 
\al{ 
\Sym(m_{k})(x) &=  \frac{1}{ \binom{n}{k} } W_{k}^{T} \en{  \frac{1}{k} \sum_{|U|=k-1} x^{U} } =  \frac{(|x| - k+1)}{(n-k+1)} \Sym(m_{k-1})(x)  \enspace .
} 
The second equality follows from \eqnref{two} and the expression for $\Sym(m_{k-1})$ in \eqnref{sympk}. The result follows from the induction hypothesis. 
\end{proof} 
\noindent The above lemma shows that $\Sym(p)$ for polynomials $p \in M_{n}$ can be viewed as a univariate polynomial in 
$|x|$ by extending the mapping given by \lemref{2syms} to all $p\in M_{n}$. We denote the univariate polynomial 
thus obtained by $\Sym^{uni}(p)$ to disambiguate from the multivariate polynomial in \eqnref{sym}. 

We can define an inner product on polynomials $p,q \in L_{t}$ 
by treating them as vectors of coefficients: if $p(x)= \sum_{|S|=t} p_{S}x^{S}$ and $q(x)= \sum_{|S|=t} q_{S}x^{S}$ then
\al{ 
\braket{p} {q} := \sum_{S\subseteq [n], |S|= t} p_{S} q_{S} \enspace .
} 
The symmetrization of the product of polynomials in $\Ker(W_{t})$ can be expressed in terms of this inner product.

\begin{lemma} \label{lem2} 
If $p, q \in \Ker(W_{t})$ for some $t\leq n/2$, then: 
\all{ 
\Sym(pq)(x)= \braket{p}{q} \frac{(n-2t)!}{n! } \prod_{0\leq i <t}  (|x|-i)(n-|x|-i)\enspace .
} {poly} 
\end{lemma}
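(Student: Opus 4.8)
The plan is to regard $\Sym(pq)$ as a univariate polynomial in $z:=|x|$ and pin it down completely from three pieces of data: (i) its degree, (ii) a list of $2t$ roots, and (iii) its value at one further integer. Concretely: Step~1, bound $\deg\Sym^{uni}(pq)\le 2t$; Step~2, show $\Sym^{uni}(pq)$ vanishes at the $2t$ integers in $\{0,\dots,t-1\}\cup\{n-t+1,\dots,n\}$, so that $\Sym^{uni}(pq)(z)=C\prod_{0\le i<t}(z-i)(n-z-i)$ for a scalar $C=C(p,q)$; Step~3, evaluate at $z=t$ to solve for $C$ and check it matches the claimed constant.

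For Step~1, note that $pq$ reduces modulo $\mathcal I$ to a polynomial of degree at most $2t$, so by \lemref{2syms} (and the induced map $\Sym^{uni}$) its symmetrization is a univariate polynomial in $|x|$ of degree at most $2t$. For Step~2, recall from \thmref{Span} that $\Ker(W_t)$ is spanned by the polynomials $p_{\mathcal A}(x)=\prod_{i=1}^{t}(x_{a(2i-1)}-x_{a(2i)})$ over $2t$-tuples $\mathcal A$ of distinct indices; since evaluation at a fixed $x$ is linear, it suffices to check that each $p_{\mathcal A}$ vanishes when $|x|\le t-1$ or $|x|\ge n-t+1$ (the remark following \thmref{Span} records this for the basis $\{p_{\mathcal U}\}$). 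Indeed, if $|x|\le t-1$ then at most $t-1$ of the $2t$ indices $a(1),\dots,a(2t)$ carry a $1$, so by pigeonhole some pair $\{a(2i-1),a(2i)\}$ is entirely zero and that factor of $p_{\mathcal A}$ vanishes; symmetrically, if $|x|\ge n-t+1$ some pair is entirely one. Hence $p$, and therefore $pq$ and $\Sym(pq)$, vanishes on all such $x$. A univariate polynomial of degree $\le 2t$ with the $2t$ distinct roots $0,\dots,t-1,n-t+1,\dots,n$ must be $C\prod_{0\le i<t}(z-i)(n-z-i)$ for some constant $C$ depending bilinearly on $p,q$.

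For Step~3, observe that the hypothesis $t\le n/2$ forces $n-t\ge t$, so $z=t$ is not one of the roots above. Evaluate at any $x$ with $|x|=t$: writing such an $x$ as the $0/1$ indicator vector $\mathbf 1_R$ of a $t$-set $R$, the only monomial of $p$ that survives is $x^R$, so $p(\mathbf 1_R)=p_R$ and likewise $q(\mathbf 1_R)=q_R$. Averaging over $S_n$, each weight-$t$ indicator is produced by exactly $t!(n-t)!$ permutations, so
\[
\Sym(pq)(x)=\frac{t!(n-t)!}{n!}\sum_{|R|=t}p_R q_R=\frac{\braket{p}{q}}{\binom n t}\enspace .
\]
On the other hand $C\prod_{0\le i<t}(t-i)(n-t-i)=C\,t!\,(n-t)!/(n-2t)!$, and equating the two expressions yields $C=\braket{p}{q}\,(n-2t)!/n!$, which is precisely the constant in the claimed identity.

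The step I expect to be the crux is the small observation in Step~3 that on a weight-$t$ input $p(\mathbf 1_R)$ collapses to the single coefficient $p_R$; once this is noticed the constant is forced with essentially no computation. An alternative to Step~3 is to compare the coefficients of $z^{2t}$ on both sides, which reduces the lemma to the combinatorial identity $\sum_{S\cap T=\emptyset}p_S q_T=(-1)^t\braket{p}{q}$ for $p,q\in\Ker(W_t)$; this can be proved by inclusion--exclusion starting from $\sum_{|S|=t}p_S=0$, but it is fiddlier, so I would prefer the evaluation at $z=t$.
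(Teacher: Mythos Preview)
Your proof is correct and follows essentially the same approach as the paper: bound the degree of $\Sym^{uni}(pq)$ by $2t$, use the vanishing of the basis polynomials $p_{\mathcal U}$ on Hamming weights in $\{0,\dots,t-1\}\cup\{n-t+1,\dots,n\}$ to force the factored form $C\prod_{0\le i<t}(z-i)(n-z-i)$, and then evaluate at $|x|=t$ using $p(\mathbf 1_R)=p_R$ to solve for the constant. The only cosmetic difference is that you spell out the pigeonhole argument for the vanishing explicitly, whereas the paper simply invokes the remark following \thmref{Span}.
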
 

\begin{proof} 
\thmref{Span} shows that $\Ker(W_{t})$ has a basis consisting of polynomials $p_{\mathcal{U}}(x)$ such that $p_{\mathcal{U}}(x)=0$ for all $x\in \{0,1\}^{n}$ with $|x| \in \{ 0, 1, \ldots, t-1 \} \cup \{ n, n-1, \ldots, n-t+1 \}$. Consider such an~$x$. Evaluating $\Sym(pq)$ at $x$ using \eqnref{sym} by expanding $p$ and $q$ in the basis given by \thmref{Span}, it follows that $\Sym^{uni}(pq)(\alpha)=0$ for all 
$\alpha \in \{ 0, 1, \ldots, t-1 \} \cup \{ n, n-1, \ldots, n-t+1 \}$. \lemref{2syms} shows that $\Sym(p q)(x)$ is a univariate polynomial $\Sym^{uni}$ in $|x|$ of degree at most $2t$, hence  
\al{
\Sym(p q)(x)=\lambda \prod_{0\leq i <t}  (|x|-i)(n-|x|-i)\enspace .
}
for some $\lambda \in \R$. Below we determine $\lambda$ by evaluating 
$\Sym(pq)$ for $x \in \{0,1\}^{n}$ such that $|x|=t$. 

We compute $\Sym(pq)(x)$ by evaluating the sum $\sum_{\sigma \in S_{n}} p(\sigma x) q(\sigma x)$ in \eqnref{sym}. As $p, q$ are homogeneous degree-$t$ polynomials, for each $x$ with $|x|=t$ there is a unique $S \subset [n]$, $|S|=t$, such that $p(x)=p_{S}$ and $q(x)=q_{S}$. In other words, $x$ sets exactly one degree-$t$ monomial $x^S$ to~1 and all others to~0. There are $t!(n-t)!$ different $\sigma\in S_{n}$ such that $\sigma(x)$ sets the same monomial to~1. The symmetrization $\Sym(pq)(x)$ therefore evaluates to
\al{ 
\Sym( pq )(x)= \frac{1}{n!} \sum_{\sigma \in S_{n}} p(\sigma x) q(\sigma x)= \frac{t!(n-t)!}{n!} \sum_{|S|=t} p_{S} q_{S} \enspace .
}  
$\Sym(pq)(x)$ also evaluates to $\lambda  \prod_{0\leq i <t}  (t-i)(n-t-i)$. Equating the two expressions we have: 
\al{ 
 \lambda t! \prod_{0\leq i< t} (n-t-i) = \frac{ \braket{p}{q} t! (n-t)! }{ n! }  
}
which implies $\lambda =  \frac{ \braket{p}{q} (n-2t)! }{ n! } $, and the theorem follows. 
\end{proof} 

\noindent We next show that the symmetrization of the product of polynomials $p \in \Ker(W_{t}), q \in \Ker(W_{t^{\prime}})$ evaluates to $0$ if $t\neq t^{\prime}$. The following lemma is used for the proof in \lemref{lem3}. 

\begin{lemma} \label{lem1} 
If $p(x)= \prod_{i \in [k]} (x_{i} - x_{i+1})q(x)$ for some odd~$k$, and $q(x)$ is a polynomial that does not depend on variables $x_{1},\ldots,x_{k+1}$, then $\Sym(p)=0$. 
\end{lemma}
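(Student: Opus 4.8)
The plan is to exploit a symmetry of the chain factor $\prod_{i\in[k]}(x_i-x_{i+1})$ under the permutation that reverses the block of variables $x_1,\dots,x_{k+1}$. Concretely, let $\pi\in S_n$ be the involution defined by $\pi(i)=k+2-i$ for $i\in\{1,\dots,k+1\}$ and $\pi(i)=i$ for $i>k+1$; this permutes only the indices $1,\dots,k+1$, which are exactly the variables that $q$ does \emph{not} depend on.

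First I would compute the effect of $\pi$ on the two factors of $p$. Since the action of $\pi$ replaces each formal variable $x_i$ by $x_{\pi(i)}$, applying $\pi$ to the chain factor gives
\[
\prod_{i=1}^k\bigl(x_{\pi(i)}-x_{\pi(i+1)}\bigr)=\prod_{i=1}^k\bigl(x_{k+2-i}-x_{k+1-i}\bigr)=\prod_{j=1}^k\bigl(x_{j+1}-x_{j}\bigr)=(-1)^k\prod_{j=1}^k\bigl(x_j-x_{j+1}\bigr),
\]
where the middle equality is the reindexing $j=k+1-i$. Because $k$ is odd, $(-1)^k=-1$, so $\pi$ negates the chain factor; meanwhile $\pi$ fixes $q$ by hypothesis. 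Hence $p(\pi x)=-p(x)$, as an identity of polynomials (and so also in $M_n$).

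Finally I would use the invariance of $\Sym$ under precomposing its argument with the fixed permutation $\pi$ (reindex the defining sum over $S_n$): this gives $\Sym(p)(x)=\Sym\bigl(p(\pi\,\cdot)\bigr)(x)$. Combining with $p(\pi x)=-p(x)$ yields $\Sym(p)=\Sym(-p)=-\Sym(p)$, hence $\Sym(p)=0$. The only point that needs care is the sign bookkeeping in the middle step --- checking that reversing the telescoping chain of $k$ consecutive differences contributes exactly $(-1)^k$, so that the oddness of $k$ is precisely what forces the sign to be $-1$; the rest is a one-line symmetrization argument and I anticipate no real obstacle.
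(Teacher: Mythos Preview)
Your proof is correct and follows the same high-level strategy as the paper: exhibit a fixed permutation of $\{1,\dots,k+1\}$ that flips the sign of $p$, then use that $\Sym$ is invariant under precomposition by any fixed permutation to conclude $\Sym(p)=-\Sym(p)$.

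The difference is in the choice of permutation. The paper uses the product of transpositions $\tau=(1\ 2)(3\ 4)\cdots(k\ k{+}1)$ and pairs each $\sigma$ with $\overline\sigma=\sigma\tau$. With that choice, the identity $p(\overline\sigma x)=-p(\sigma x)$ does \emph{not} hold as a polynomial identity for $k\ge 3$ (e.g.\ for $k=3$ the chain-sum works out to $(a-b)(c-d)(a+b-c-d)$), so the paper first reduces to $x\in\{0,1\}^n$, where the alternation forced by nonvanishing of the chain makes the cancellation go through. Your block-reversal $\pi(i)=k+2-i$ sends the chain $\prod_i(x_i-x_{i+1})$ to $(-1)^k$ times itself as a genuine polynomial identity, so you get $p(\pi x)=-p(x)$ in $\R[x_1,\dots,x_n]$ without passing through the hypercube. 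That makes your argument a bit cleaner; the paper's version has the minor advantage that the involution it uses is exactly the one that matches the ``pair up the endpoints of each edge'' picture used later in \lemref{lem3}.
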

 
\begin{proof} 
It suffices to show that $\Sym(p)(x)=0$ for all $x\in\{0,1\}^n$, because a multilinear polynomial that is~0 on the hypercube is identically equal to~0.
Define the involution $\sigma \to \overline{\sigma}$ on $S_{n}$ by setting $\overline{\sigma}(i)= \sigma(i+1)$ if $i\in [k+1]$ is odd, 
$\overline{\sigma}(i)= \sigma(i-1)$ if $i \in [k+1]$ is even, and $\overline{\sigma}(i)= \sigma(i)$ for $i > k+1$. It follows that $\overline{\sigma}$ is an involution as $k+1$ is even and it acts by swapping the pairs $(\sigma(2i-1), \sigma(2i))$ for $i\in [(k+1)/2]$. This involution partitions $S_n$ into pairs $(\sigma,\overline{\sigma})$, and hence
\al{ 
\Sym(p) (x) = \frac{1}{n!} \sum_{(\sigma, \overline{\sigma}) } (p(\sigma x) + p(\overline{\sigma} x))  = 0 .
}  
The second equality follows as $p(\overline{\sigma} x) = -p(\sigma x)$ for all $x \in \{0,1\}^{n}$ and $\sigma \in S_{n}$. 
\end{proof}

\begin{lemma} \label{lem3} 
If $p \in \Ker(W_{t})$ and $q \in \Ker(W_{t^{\prime}})$ for $n/2 \geq t > t^{\prime}$, then $\Sym(pq)=0$. 
\end{lemma}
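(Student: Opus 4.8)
The plan is to reduce to a purely combinatorial statement about products of differences and then apply \lemref{lem1}. First, recall that \thmref{Span} exhibits $\Ker(W_t)$ as the span of the polynomials $p_{\mathcal{A}}(x)=\prod_{i\in[t]}(x_{a(2i-1)}-x_{a(2i)})$ over all arrays $\mathcal{A}=(a(1),\dots,a(2t))$ of distinct indices in $[n]$ (these all lie in the kernel, and already the standard-tableau subfamily spans the full space of dimension $\binom{n}{t}-\binom{n}{t-1}$); the same holds for $\Ker(W_{t'})$. Since $\Sym$ is linear, it therefore suffices to show $\Sym(p_{\mathcal{A}}p_{\mathcal{B}})=0$ whenever $\mathcal{A}$ has $2t$ distinct entries and $\mathcal{B}$ has $2t'$ distinct entries with $t>t'$. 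The $t$ factors of $p_{\mathcal{A}}$ define a perfect matching $R$ (``red'') on the index set of $\mathcal{A}$, and the $t'$ factors of $p_{\mathcal{B}}$ define a perfect matching $B$ (``blue'') on the index set of $\mathcal{B}$.

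Next I would study the multigraph $G$ on the vertex set $\{a(i):i\in[2t]\}\cup\{b(j):j\in[2t']\}$ with edge multiset $R\cup B$. Because the entries of $\mathcal{A}$ are distinct, no vertex meets two red edges, and likewise no vertex meets two blue edges; hence every vertex of $G$ has degree at most $2$, so $G$ is a disjoint union of paths and cycles, and the colors alternate red/blue along each component (a doubled edge, i.e.\ a pair lying in both $R$ and $B$, is just a harmless $2$-cycle). Consequently each cycle has equally many red and blue edges, while a path with $m$ edges has $\lceil m/2\rceil$ edges of one color and $\lfloor m/2\rfloor$ of the other. Summing over all components, the red edges number $t$ and the blue edges number $t'$, and since the cycles contribute $0$ to the difference $t-t'\ge 1$, at least one path component $P$ has strictly more red edges than blue edges. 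Such a $P$ then has $a$ red and $a-1$ blue edges for some $a\ge 1$, hence $2a$ vertices $v_1-v_2-\cdots-v_{2a}$ in path order, with red edges $\{v_1,v_2\},\{v_3,v_4\},\dots,\{v_{2a-1},v_{2a}\}$ and blue edges $\{v_2,v_3\},\{v_4,v_5\},\dots,\{v_{2a-2},v_{2a-1}\}$.

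The product of the red factors of $p_{\mathcal{A}}$ along $P$ and the blue factors of $p_{\mathcal{B}}$ along $P$ telescopes, up to a sign, into the single chain product $\prod_{i=1}^{k}(x_{v_i}-x_{v_{i+1}})$ with $k:=2a-1$ \emph{odd}, while all remaining factors of $p_{\mathcal{A}}p_{\mathcal{B}}$ involve only indices outside $P$ and hence are independent of $x_{v_1},\dots,x_{v_{k+1}}$. Relabeling the variables so that $v_i=i$ — which does not change $\Sym$, since symmetrization commutes with any fixed relabeling of the variables — puts us exactly in the hypothesis of \lemref{lem1} for odd $k$, giving $\Sym(p_{\mathcal{A}}p_{\mathcal{B}})=0$. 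By linearity this yields $\Sym(pq)=0$ for all $p\in\Ker(W_t)$ and $q\in\Ker(W_{t'})$.

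The main obstacle I anticipate is the bookkeeping in the middle step: correctly arguing that $G$ decomposes into paths and cycles with strictly alternating colors (handling doubled edges), that the parity count forces a path with exactly one more red edge than blue, and that the interleaved red/blue factors along that path collapse to a genuine chain product of odd length so that \lemref{lem1} applies verbatim. Signs are irrelevant throughout since we only need the symmetrization to vanish, and the hypothesis $t\le n/2$ enters only to guarantee that \thmref{Span} applies and that arrays $\mathcal{A}$ of size $2t$ exist.
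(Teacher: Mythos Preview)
Your proposal is correct and follows essentially the same approach as the paper: reduce by bilinearity to basis polynomials $p_{\mathcal{A}}$, $p_{\mathcal{B}}$, view their index pairs as red and blue matchings, observe that the union decomposes into paths and alternating cycles, use the count $t>t'$ to locate a path component of odd length, and invoke \lemref{lem1}. Your write-up is in fact a bit more careful than the paper's (you handle doubled edges explicitly, justify the alternation of colors, and spell out why the factors along the odd path collapse up to sign to a chain product with the rest independent of those variables), but the strategy is identical.
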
 

\begin{proof} 
It suffices to prove the statement for polynomials $p=p_{\mathcal{U}}$ and $q=q_{\mathcal{V}}$ belonging to the bases for $\Ker(W_{t})$ and $\Ker(W_{t^{\prime}})$ constructed in \thmref{Span}. The arrays $\mathcal{U}, \mathcal{V}$ define matchings $M(\mathcal{U})= \bigcup_{i\in [t]} (u(2i-1), u(2i))$ and $M(\mathcal{V}) = \bigcup_{i\in [t^{\prime}]} (v(2i-1), v(2i))$ on $[n]$ of size $t$ and $t^{\prime}$ respectively. The product $p_{\mathcal{U}}q_{\mathcal{V}}= \prod_{(a, b) \in M(\mathcal{U}) \cup M(\mathcal{V})} (x_{a} - x_{b})$. If $M(\mathcal{U}) \cup M(\mathcal{V})$ contains an odd-length path as an induced subgraph, then we can use \lemref{lem1} to conclude that $\Sym(p_{\mathcal{U}}q_{\mathcal{V}})=0$. 

It suffices to show that the union of two matchings of different sizes contains an odd-length path as an induced subgraph.  
The connected components of a union of two distinct matchings on $[n]$ either form even-length cycles or paths. Color the edges in $M(\mathcal{U})$ red and the edges in $M(\mathcal{V})$ blue. The number of red edges $t$ is greater than blue edges $t^{\prime}$, so there must be at least one connected component that is an odd-length path, as even-length paths and cycles have an equal number of red and blue edges. 
\end{proof} 

The preceding lemmas allow us to give a proof of Blekherman's result~\cite{B15} on the symmetrization of sum-of-squares polynomials on the hypercube. 

\begin{theorem}[Blekherman] \label{sos} 
The symmetrization of the square of polynomial $p \in M_{t}$ for $t \leq n/2$ can be decomposed as
\all{ 
\Sym(p^{2})(x) = \sum_{j=0}^t p_{t-j}(|x|)  \en{ \prod_{0\leq i <j} (|x| - i) (n-|x| -i) } 
} {onenine} 
where $p_{t-j}$ is a univariate polynomial that is the sum of squares of polynomials of degree at most $t-j$. 
\end{theorem}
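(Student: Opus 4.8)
The plan is to reduce the statement to the polynomial decomposition of \corref{cor5} and the two symmetrization identities in \lemref{lem2} and \lemref{lem3}, and then to read off the sum-of-squares property from a Gram matrix. Concretely, I would fix a decomposition of $p\in M_{t}$ as in \corref{cor5}: write $p(x)=\sum_{j=0}^{t} q_{j}(x)$ with $q_{j}(x)=\sum_{i=0}^{t-j}|x|^{i}p_{ij}(x)$ and $p_{ij}(x)\in\Ker(W_{j})$. Since $t\le n/2$ we have $2t\le n$, so $p^{2}$ lies in $M_{n}$ and, by the extension of \lemref{2syms}, $\Sym(p^{2})$ is a univariate polynomial in $|x|$.

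Next I would expand $p^{2}=\sum_{j,j'}q_{j}q_{j'}$ and use linearity of $\Sym$ together with the fact that $|x|$ is $S_{n}$-invariant (so $\Sym(|x|^{m}r)(x)=|x|^{m}\Sym(r)(x)$) to obtain
\[
\Sym(p^{2})(x)=\sum_{j,j'=0}^{t}\;\sum_{i=0}^{t-j}\;\sum_{i'=0}^{t-j'}|x|^{i+i'}\,\Sym\bigl(p_{ij}\,p_{i'j'}\bigr)(x).
\]
By \lemref{lem3} every term with $j\neq j'$ symmetrizes to zero, so only the diagonal $j=j'$ survives, and for those \lemref{lem2} (applicable since $j\le t\le n/2$) gives $\Sym(p_{ij}p_{i'j})(x)=\braket{p_{ij}}{p_{i'j}}\,\frac{(n-2j)!}{n!}\prod_{0\le\ell<j}(|x|-\ell)(n-|x|-\ell)$. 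Pulling out the common product yields
\[
\Sym(p^{2})(x)=\sum_{j=0}^{t}p_{t-j}(|x|)\prod_{0\le\ell<j}(|x|-\ell)(n-|x|-\ell),\qquad p_{t-j}(z):=\frac{(n-2j)!}{n!}\sum_{i,i'=0}^{t-j}\braket{p_{ij}}{p_{i'j}}\,z^{i+i'},
\]
which is exactly the claimed form.

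It then remains to show each $p_{t-j}$ is a sum of squares of univariate polynomials of degree at most $t-j$. Here I would introduce the $(t-j+1)\times(t-j+1)$ matrix $G^{(j)}$ with entries $G^{(j)}_{i,i'}=\frac{(n-2j)!}{n!}\braket{p_{ij}}{p_{i'j}}$. Since $\braket{\cdot}{\cdot}$ is the standard Euclidean inner product on coefficient vectors and $(n-2j)!/n!>0$ (using $j\le n/2$), $G^{(j)}$ is a Gram matrix and hence positive semidefinite; writing $G^{(j)}=\sum_{r}v^{(r)}(v^{(r)})^{T}$ with $v^{(r)}\in\R^{t-j+1}$ gives
\[
p_{t-j}(z)=\sum_{i,i'=0}^{t-j}G^{(j)}_{i,i'}\,z^{i+i'}=\sum_{r}\Bigl(\sum_{i=0}^{t-j}v^{(r)}_{i}\,z^{i}\Bigr)^{2},
\]
a sum of squares of polynomials of degree at most $t-j$. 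The argument is essentially bookkeeping; the one point requiring care is the degree bound, namely that the inner index $i$ in $q_{j}$ runs only up to $t-j$ (precisely the range in \corref{cor5}), which is what keeps the squared polynomials $\sum_{i}v^{(r)}_{i}z^{i}$ of degree at most $t-j$. I therefore expect the main obstacle to be notational rather than conceptual, since \lemref{lem2} and \lemref{lem3} already carry all the analytic content.
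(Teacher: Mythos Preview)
Your proposal is correct and follows essentially the same route as the paper: decompose $p$ via \corref{cor5}, kill the cross terms $j\neq j'$ with \lemref{lem3}, evaluate the diagonal terms with \lemref{lem2}, and recognize the resulting coefficient matrix as a Gram (hence psd) matrix to obtain the sos conclusion. The only cosmetic difference is that you track the constant $(n-2j)!/n!$ explicitly, whereas the paper absorbs it into an unspecified positive constant~$c$.
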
 

\begin{proof} 
Consider the representation of the polynomial $p(x)= \sum_{j=0}^t q_{j}(x)$ given by \corref{cor5}, 
\all{ 
q_{j}(x) = \sum_{0 \leq k \leq t-j} |x|^{k} p_{kj}(x) 
} {cor7} 
where the polynomials $p_{kj}(x) \in \Ker(W_{j})$. \lemref{lem3} shows that $\Sym(p_{kj} p_{k^{\prime}j^{\prime}})=0$ if $j\neq j^{\prime}$, 
hence $\Sym(p^{2})$ can be decomposed as 
\al{ 
\Sym(p^{2}) = \sum_{j=0}^t \Sym \en{ q_{j}^{2}   } \enspace .
} 
Expanding the term $\Sym \en{ q_{j}^{2} }$ using \lemref{lem2}, we have
\al{ 
\sum_{0\leq k, l \leq t-j} \Sym( |x|^{k+l} p_{kj} p_{lj}  )   &=  c \en{ \prod_{0\leq i <j} (|x| - i) (n-|x| -i) }  \sum_{0\leq k, l \leq t-j} \braket{ p_{kj} }{ p_{lj} }|x|^{k+l}  \nl 
&=  c \en{ \prod_{0\leq i <j} (|x| - i) (n-|x| -i) }  \mathbf{x}^{T} P \mathbf{x}  
} 
where $c$ is a constant independent of $|x|$, $P \in \R^{(t-j+1)\times (t-j+1)}$ is the matrix with entries $P_{kl} = \braket{ p_{kj} }{ p_{lj} }$, and $\mathbf{x} \in \R^{t-j+1}$ is the vector with entries $(1, |x|, |x|^{2}, \ldots, |x|^{t-j})$. The matrix $P$ is positive semidefinite, hence the polynomial $p_{t-j}(|x|)$ corresponding to the quadratic form $\mathbf{x}^{t} P \mathbf{x}$ is a sum of squares of polynomials in $|x|$ of degree at most $t-j$. The theorem follows. 
\end{proof} 

Note that the proof is constructive, as it provides a way to compute the terms in the decomposition by projecting onto the eigenspaces $W_{t}$ of the Johnson scheme. For example, the first term $p_{t}(|x|)$ in \eqref{onenine} is in fact $\Sym^{uni}(p)^{2}$ as the $\Sym$ operator maps $\Ker(W_{j})$ to $0$ for all $j>0$. 

\begin{corollary} 
The polynomial $p_{t}(|x|)$ in \thmref{sos} is $\Sym^{uni}(p)^{2}$. 
\end{corollary}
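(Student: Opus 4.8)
The plan is to read off the $j=0$ term directly from the proof of \thmref{sos}. Recall from there that $p(x)=\sum_{j=0}^{t}q_{j}(x)$ with $q_{j}(x)=\sum_{0\le k\le t-j}|x|^{k}p_{kj}(x)$, where $p_{kj}(x)\in\Ker(W_{j})$, and that $\Sym(p^{2})=\sum_{j=0}^{t}\Sym(q_{j}^{2})$, the $j$-th summand being precisely the term $p_{t-j}(|x|)\prod_{0\le i<j}(|x|-i)(n-|x|-i)$ of \eqref{onenine}. Since the prefactor is the empty product for $j=0$, we have $p_{t}(|x|)=\Sym(q_{0}^{2})(x)$, so it suffices to show that $\Sym^{uni}(p)(|x|)=q_{0}(x)$ and that $\Sym(q_{0}^{2})=q_{0}^{2}$.

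First I would record that $\Sym$ annihilates $\Ker(W_{j})$ for every $j\ge 1$: applying \lemref{lem3} with the second polynomial taken to be the constant $1\in\Ker(W_{0})$ gives $\Sym(r)=0$ for all $r\in\Ker(W_{j})$ with $j\ge1$. Since $|x|=\sum_{i}x_{i}$ is $S_{n}$-invariant, $\Sym$ commutes with multiplication by powers of $|x|$, hence $\Sym(|x|^{k}r)=|x|^{k}\Sym(r)=0$ for such~$r$, and therefore $\Sym(q_{j})=0$ for $j\ge1$ and $\Sym(p)=\Sym(q_{0})$. But every $p_{k0}\in\Ker(W_{0})=L_{0}$ is a constant, so $q_{0}(x)=\sum_{0\le k\le t}p_{k0}|x|^{k}$ is already symmetric and, viewed as a univariate polynomial in $z=|x|$, equals $\sum_{0\le k\le t}p_{k0}z^{k}$. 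Thus $\Sym^{uni}(p)(z)=\sum_{0\le k\le t}p_{k0}z^{k}$, i.e., $\Sym^{uni}(p)(|x|)=q_{0}(x)$.

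Finally, because $q_{0}$ is a polynomial in $|x|$ alone it is $S_{n}$-invariant, and $\Sym$ fixes every $S_{n}$-invariant polynomial, so $\Sym(q_{0}^{2})=q_{0}^{2}$. (Equivalently, this is \lemref{lem2} specialized to $t=0$, where the prefactor degenerates to the empty product~$1$ and $\langle\cdot,\cdot\rangle$ on $L_{0}$ is ordinary multiplication of scalars.) Combining, $p_{t}(|x|)=\Sym(q_{0}^{2})(x)=q_{0}(x)^{2}=\Sym^{uni}(p)(|x|)^{2}$, which is the claim.

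There is no serious obstacle here: the whole corollary is essentially the observation that $\Sym$ sees only the $\Ker(W_{0})$-component of $p$, which is exactly its univariate part $\Sym^{uni}(p)$. The only points needing a moment's care are the degenerate conventions at $j=0$ — namely $\Ker(W_{0})=L_{0}$, the empty product equal to~$1$, and the $t=0$ cases of \lemref{lem2} and \lemref{lem3} — together with the trivial fact that $\Sym$ commutes with multiplication by $|x|$ and fixes symmetric polynomials.
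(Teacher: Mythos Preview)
Your proof is correct and follows exactly the approach the paper sketches: the paper merely states that ``the $\Sym$ operator maps $\Ker(W_{j})$ to $0$ for all $j>0$,'' and you have carefully filled in the details---applying \lemref{lem3} with $q=1\in\Ker(W_{0})$ to obtain this vanishing, using the $S_{n}$-invariance of $|x|$ to pass through the powers $|x|^{k}$, and then reading off the $j=0$ term from the proof of \thmref{sos}. There is nothing to add.
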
 

\noindent A symmetric function $f$ that is the sum of squares of polynomials of degree $d\leq n/2$ is a sum of terms 
$\Sym(p^{2})$ for $n$-variate polynomials $p$ of degree $d\leq n/2$. Applying \thmref{sos} for $t=d$ we 
obtain Blekherman's result as stated in \thmref{thm:B}.  

Note that \thmref{sos} 
%and \thmref{thm:B} 
applies to the setting where $deg(p(x))\leq n/2$, this suffices for our applications. Blekherman's theorem in ~\cite{B15} is valid for all degrees modulo the ideal 
$I= \langle \prod_{0\leq i\leq n} (|x|-i) \rangle$.

\section{Grigoriev's knapsack lower bound} \label{ap:C} 
We now see how Blekherman's theorem can be easily used to reprove Grigoriev's lower bound on 
the degree of Positivstellensatz refutations of knapsack (\thmref{thm:grig}).  A Positivstellensatz refutation of 
the knapsack system of equations \eqref{eq:knapsack_intro} with parameter $r$ consists of polynomials 
$g, g_1, \ldots, g_n$ and a sos polynomial $h$ such that 
\begin{equation}
\label{eq:app_appendix}
g(x) \cdot \left(\sum_{i=1}^n x_i - r \right) + \sum_{i=1}^n g_i(x) \cdot (x_i^2-x_i) = 1 + h(x) \enspace.
\end{equation}

\begin{theorem}[Grigoriev \cite{G01}]
\label{thm:G}
Let $0 \le k \le (n-3)/2$ be an integer.  If $k < r < n-k$, then any Positivstellensatz refutation of the knapsack 
system of equations with parameter $r$, as in~\eqnref{eq:app_appendix}, has degree at least $2k+4$.
\end{theorem}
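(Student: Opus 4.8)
The plan is to argue by contradiction, mimicking Grigoriev's argument but using Blekherman's theorem to handle the symmetrized sos polynomial. Suppose \eqnref{eq:app_appendix} holds with $\deg(h)$, $\deg\!\big(g\cdot(\sum_i x_i-r)\big)$ and every $\deg\!\big(g_i\cdot(x_i^2-x_i)\big)$ at most $2k+3$. Restricting to the boolean cube kills the terms $g_i(x_i^2-x_i)$, so for every $x\in\{0,1\}^n$ we get
\[
g(x)\big(|x|-r\big)=1+h(x).
\]
Since $\sum_i x_i-r$ is a nonzerodivisor of degree~$1$, $\deg\!\big(g\cdot(\sum_i x_i-r)\big)\le 2k+3$ forces $\deg(g)\le 2k+2$, and reducing $g$ modulo $\langle x_i^2-x_i\rangle$ only lowers its degree. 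Also, $h$ being a sum of squares has even degree, so $\deg(h)\le 2k+2$ and $h=\sum_j p_j^2$ with each $\deg(p_j)\le k+1$. Because $k\le (n-3)/2$, we have $k+1\le n/2$, so Blekherman's theorem (\thmref{thm:B}) applies to $h$.

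Next I would symmetrize. As $|x|-r$ and the constant $1$ are $S_n$-invariant, applying $\Sym$ to the displayed pointwise identity gives $\Sym(g)(x)\,(|x|-r)=1+\Sym(h)(x)$ for all $x\in\{0,1\}^n$. By \lemref{2syms}, $\Sym(g)(x)=G(|x|)$ for a univariate $G$ with $\deg(G)\le 2k+2$. Let $\tilde h$ be the univariate polynomial produced by \thmref{thm:B} applied to $h$, so that $\Sym(h)(x)=\tilde h(|x|)$ and
\[
\tilde h(z)=\sum_{m=0}^{k+1}\Big(\textstyle\prod_{i=0}^{m-1}(z-i)(n-i-z)\Big)\,q_{k+1-m}(z),
\]
with each $q_t$ a univariate sum of squares of degree at most $2t$ (padding with zero polynomials if $\deg h<2k+2$). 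Hence $G(z)(z-r)=1+\tilde h(z)$ holds for all $z\in\{0,1,\dots,n\}$.

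Now $G(z)(z-r)$ has degree at most $2k+3$ and $1+\tilde h(z)$ has degree at most $2k+2$, while $k\le(n-3)/2$ gives $2k+3\le n$, so these two polynomials of degree $\le 2k+3$ agree at $n+1>2k+3$ points and therefore coincide identically in $z$. Evaluating at the non-integer point $z=r$ makes the left side vanish, forcing $\tilde h(r)=-1$. But for $k<r<n-k$ and every $m\le k+1$, all factors of $\prod_{i=0}^{m-1}(r-i)(n-i-r)$ are strictly positive — the two extreme ones, $r-k$ and $(n-k)-r$, are positive by the hypothesis $k<r<n-k$, and the others are larger — and each $q_{k+1-m}(r)\ge 0$, so $\tilde h(r)\ge 0$, a contradiction. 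Therefore the refutation degree is at least $2k+4$.

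I expect the only delicate points to be bookkeeping ones. First, one must be careful that a refutation of degree $\le 2k+3$ (rather than merely $\le 2k+2$) still forces $\deg(h)\le 2k+2$ and $\deg(g)\le 2k+2$; this parity argument is exactly what makes the bound $2k+4$ and not $2k+3$. Second, one must verify that the hypothesis $k\le (n-3)/2$ is precisely what keeps all the relevant univariate degrees at most $n$, so that agreement on the $n+1$ integer points $\{0,\dots,n\}$ genuinely upgrades to a polynomial identity and licenses substituting $z=r\notin\mathbb Z$. The remaining sign analysis of the Blekherman prefactors on the interval $(k,n-k)$ is routine.
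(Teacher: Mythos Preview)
Your proof is correct and follows essentially the same approach as the paper: both symmetrize the refutation identity on the cube and evaluate the resulting univariate polynomial at $z=r$, using Blekherman's theorem to show the sos contribution is nonnegative on $(k,n-k)$, which contradicts $\tilde h(r)=-1$. The paper packages this as a linear functional $\G_r(p)=\Sym^{uni}(p)(r)$ and verifies its properties one at a time (in particular showing $\G_r\big(x_S(\sum_i x_i-r)\big)=0$ by a direct recurrence rather than via your degree-counting polynomial-identity argument), but the underlying mechanism and the role of the hypothesis $k\le(n-3)/2$ are identical.
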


\begin{proof}
Grigoriev constructs a functional $\G_r : \R[x_1, \ldots, x_n] \rightarrow \R$ such that: when $\G_r$ is applied to 
the left-hand side of~\eqnref{eq:app_appendix} it evaluates to $0$, provided that the total degree of the left-hand side is 
at most $n$; and when $\G_r$ is applied to the right-hand side of~\eqnref{eq:app_appendix} it is at least $1$, provided the total degree of the right-hand side is at most $2k+2$.  This leads to a contradiction, hence constructing such a 
functional $\G_r$ suffices to prove that a Positivstellensatz refutation must have degree at least $\min\{n, 2k+4\}$.  The theorem then 
follows, with the additional observation that if $\min\{n, 2k+4\}=n$ is odd, then we can actually obtain a lower bound of 
$n+1$ (since any sum-of-squares polynomial must have even degree).  

The functional $\G_r$ is first defined on the quotient ring 
$\mathcal{A}=\R[x_1, \ldots, x_n] / \langle x_1^2-x_1, \ldots, x_n^2-x_n\rangle$.  For $p \in \mathcal{A}$ define 
$$
\G_r(p) = \Sym^{uni}(p)(r).
$$
In other words, $\G_r$ looks at the univariate polynomial formed from the 
symmetrization of $p$ over the symmetric group, and evaluates it at the point $r$.  Explicitly, for a monomial 
$x_S = \prod_{i \in S} x_i$ with $|S|=t$ we see by~\lemref{2syms} that $\G_r(x_S)=B_{t}$ where
\begin{equation}
\label{eq:B}
B_{t} = \frac{r(r-1) \cdots (r-t+1)}{n(n-1)\cdots (n-t+1)} \enspace .
\end{equation}

For $p \in \R[x_1, \ldots, x_n]$ let $\bar p$ be its canonical multilinear representative in $\mathcal{A}$.  The 
definition of the functional $\G_r$ is extended from $\mathcal{A}$ to the polynomial ring by letting 
$\G_r(p):=\G_r(\bar p)$ for $p \in \R[x_1, \ldots, x_n]$.

Grigoriev's theorem now follows from the following four observations about $\G_r$:
\begin{enumerate}
\item $\G_r\left(g(x) \cdot (\sum_i x_i -r)\right)=0$ for all polynomials $g$ with $\deg(g) < n$.  It suffices to show this 
for $g(x)=x_S = \prod_{i \in S} x_i$ for some $S \subsetneq [n]$ with $|S|=t < n$.  In this case, by~\eqnref{eq:B},
$\G_r(x_S(\sum_i x_i -r))= (n-t) B_{t+1} + (t-r)B_t=0$.

\item $\G_r \en{ g_{i}(x) (x_{i}^{2} - x_{i}) }=0$ for all polynomials $g_{i}$. This is because the canonical multilinear representative of $g_{i}(x) (x_{i}^{2} - x_{i})$ in the quotient ring $\mathcal{A}$ is the constant-0 polynomial, and $\Sym^{uni}(0) (r)=0$. 

\item $\G_r(1) =1$ for all values of $r$.  The symmetrization of the constant-1 polynomial is itself, and the constant-1 polynomial always evaluates to 1.

\item $\G_r \en{ p^2(x) }\geq 0$ if $p$ is a polynomial of degree at most $k+1$.  By Blekherman's 
Theorem~\ref{thm:B}, if $p \in \mathcal{A}$ and $d=\deg(p)$ then
\al{ 
\Sym^{uni}(p^{2})(x) = q_{d}(x) + x(n-x) q_{d-1}(x) + x(x-1)(n-x)(n-1-x) q_{d-2}(x) + \cdots \nl 
+ x(x-1)\cdots(x-d+1)(n-x)(n-1-x)\cdots(n-d+1-x) q_{0} (r) \enspace .
} 
It follows that $\Sym^{uni}(p^{2})(x) \ge 0$ for $x \in [d-1, n-d+1]$.  Thus if $k < r < n-k$, then $\G_r(p^2) \ge 0$ for 
any $p$ of degree $\le k+1$. By linearity this extends to any $h$ that is a sum of squares of polynomials of degree $\le k+1$.
\end{enumerate}
The first two observations imply that the left-hand side of~\eqnref{eq:app_appendix} evaluates to $0$ under $\G_r$ (provided the total degree of the left-hand side is at most $n$), while the last two observations imply that the right-hand side evaluates to at least~1 (provided the total degree on the right is at most $2k+2$).
\end{proof}

\section{Application of Grigoriev's bound to $\ell_\infty$-error sos degree}
\label{ap:D}
Let $n$ be odd and let $f=f_{\floor{n/2}}$, that is $f(x) = (|x|-n/2)^2-1/4$.  
Our \thmref{thm:ell_infty} gives 
that any sos polynomial approximating~$f$ with $\ell_\infty$-error at most $1/50$, needs degree $\Omega(n)$.  The 
functional $\G_r$ defined by Grigoriev (see discussion above \eqnref{eq:B}) can be used to show an incomparable result: 
any sos polynomial of degree $(n-1)/2$ has error at least $\Omega(1/\log n)$ in approximating $f$ in $\ell_\infty$-norm.  

\begin{theorem}
Let $n$ be odd and $f: \{0,1\}^n \rightarrow \mathbb{R}$ be defined as $f(x) = (|x|-n/2)^2-1/4$.  Any sos polynomial of 
degree $(n-1)/2$ has error at least 
\[
\left(1- O(1/n) \right)\frac{\pi}{4} \frac{1}{\ln((n+1)/2)+\gamma+\ln(16)}
\]
in approximating $f$ in $\ell_\infty$ norm.  Here $\gamma \approx 0.577$ is the Euler-Mascheroni constant.
\end{theorem}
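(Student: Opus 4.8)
The plan is to combine Blekherman's theorem (\thmref{thm:B}) with Grigoriev's evaluation functional $\G_r$ (defined above \eqnref{eq:B}) at $r=n/2$, and then to bound how far an $\ell_\infty$-approximation can push this functional. Suppose $h=\sum_i p_i^2$ with $\deg p_i=d=(n-1)/2$ and $|h(x)-f(x)|\le\eps$ for all $x\in\{0,1\}^n$; we want a lower bound on~$\eps$. Since $h$ and $f$ are symmetric we pass to the univariate polynomials $\tilde h,\tilde f$ with $\tilde h(|x|)=h(x)$ and $\tilde f(z)=(z-n/2)^2-\tfrac14$. By Blekherman's theorem, $\tilde h(z)=q_d(z)+z(n-z)q_{d-1}(z)+\cdots$ with each $q_t$ a univariate sos polynomial; for $z$ in $[d-1,n-d+1]=[(n-3)/2,(n+3)/2]$ every prefactor appearing is nonnegative, so $\tilde h(z)\ge 0$ on this interval, and in particular $\tilde h(n/2)\ge 0$. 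Since $\tilde f(n/2)=-\tfrac14$ and $\Sym^{uni}$ is linear with $\Sym^{uni}(h)=\tilde h$ and $\Sym^{uni}(f)=\tilde f$, this already gives $\Sym^{uni}(h-f)(n/2)\ge\tfrac14$.

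Next I would upper bound $|\Sym^{uni}(h-f)(n/2)|$ in terms of $\eps$. The function $g=h-f$ is, on the cube, a multilinear polynomial of degree at most $n-1$, so $\Sym^{uni}(g)$ is the unique univariate polynomial of degree at most $n$ satisfying $\Sym^{uni}(g)(j)=\binom{n}{j}^{-1}\sum_{|x|=j}g(x)$ for $j=0,1,\dots,n$, and each of these $n+1$ values has absolute value at most $\eps$. By Lagrange interpolation at the nodes $\{0,1,\dots,n\}$,
\[
\Sym^{uni}(g)\!\left(\tfrac n2\right)=\sum_{j=0}^{n}\Sym^{uni}(g)(j)\,\ell_j\!\left(\tfrac n2\right),\qquad \ell_j\!\left(\tfrac n2\right)=\prod_{\substack{0\le i\le n\\ i\ne j}}\frac{n/2-i}{j-i},
\]
so $\tfrac14\le\bigl|\Sym^{uni}(g)(n/2)\bigr|\le\eps\,\Lambda$ where $\Lambda:=\sum_{j=0}^{n}\bigl|\ell_j(n/2)\bigr|$. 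Hence $\eps\ge 1/(4\Lambda)$, and the theorem reduces to proving the upper bound $\Lambda\le\bigl(1+O(1/n)\bigr)\pi^{-1}\bigl(\ln((n+1)/2)+\gamma+\ln 16\bigr)$.

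It remains to estimate $\Lambda$. Using that $n$ is odd, so that $\prod_{i=0}^{n}(n/2-i)=\pm\bigl(\prod_{t=1}^{(n+1)/2}\tfrac{2t-1}{2}\bigr)^2$ and $\prod_{i=0,\,i\ne j}^{n}(j-i)=\pm\,j!\,(n-j)!$, one gets the closed form
\[
\Lambda=\frac{C}{n!}\sum_{j=0}^{n}\frac{\binom{n}{j}}{|n/2-j|},\qquad C=\Bigl(\prod_{t=1}^{(n+1)/2}\tfrac{2t-1}{2}\Bigr)^{2}.
\]
Writing $n=2m+1$, the prefactor simplifies to $C/n!=(2m+1)\binom{2m}{m}/4^{2m+1}$ and the sum to $4\binom{2m+1}{m}\sum_{\ell=0}^{m}r_\ell/(2\ell+1)$ with $r_\ell=\binom{2m+1}{m-\ell}/\binom{2m+1}{m}=\prod_{i=1}^{\ell}\frac{m+1-i}{m+1+i}$. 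Plugging in the Stirling estimate $\binom{2m}{m}=\tfrac{4^m}{\sqrt{\pi m}}(1+O(1/m))$ and the Gaussian estimate $r_\ell=\exp(-\ell^2/m+O(\ell^3/m^2))$ (the tail $\ell\gtrsim\sqrt m$ being negligible), one obtains $\Lambda=(1+O(1/n))\tfrac{4}{\pi}\sum_{\ell\ge 0}e^{-\ell^2/m}/(2\ell+1)$, and a careful evaluation of this last sum — for instance via $\tfrac{1}{2\ell+1}=\int_0^1 u^{2\ell}\,du$ followed by matched asymptotics of $\int_0^1\sum_{\ell\ge0}e^{-\ell^2/m}u^{2\ell}\,du$ as $m\to\infty$ — shows it equals $\tfrac14\ln m+\tfrac{\gamma}{4}+\ln 2+o(1)$. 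Substituting $m=(n-1)/2$ and $\ln 16=4\ln 2$ then yields the stated bound. (Equivalently, $\Lambda$ is the Lebesgue function of degree-$n$ equispaced polynomial interpolation evaluated at the central node, whose midpoint asymptotics are classical.)

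The main obstacle is precisely this last asymptotic step: extracting the exact additive constant $\gamma+\ln 16$ from $\Lambda$ — rather than an unspecified $O(1)$ — requires a genuinely careful treatment of the delicately truncated harmonic-type sum $\sum_{j}\binom{n}{j}/|n/2-j|$. By contrast, the reduction to this sum is routine given the tools already in the excerpt: the Blekherman decomposition supplies $\tilde h(n/2)\ge0$, the value $\tilde f(n/2)=-\tfrac14$ supplies $\Sym^{uni}(h-f)(n/2)\ge\tfrac14$, the Lagrange bound supplies $\eps\ge1/(4\Lambda)$, and the closed form for $\Lambda$ is elementary.
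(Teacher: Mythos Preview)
Your reduction to the inequality $\eps\ge 1/(4\Lambda)$ with $\Lambda=\sum_{j=0}^{n}|\ell_j(n/2)|$ is exactly the paper's argument, just phrased differently: the paper writes $\Lambda=\sum_{y}|\G_{n/2}(\delta_y)|=\sum_k |L_k(n/2)|$, which is literally your Lagrange sum. (One small omission: you assert $h$ is symmetric. This is a genuine WLOG---replace $h$ by $\Sym(h)$, which is still sos of the same degree and has at most the same $\ell_\infty$-error since $f$ is symmetric---but you should say so.)

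Where you and the paper diverge is in evaluating $\Lambda$. The paper does \emph{not} attempt your direct Stirling/Gaussian asymptotic. Instead, after reaching the same closed form you found, it proves the exact identity
\[
\Lambda \;=\; A(m)\;=\;\sum_{i=0}^{m}\left(\frac{\binom{2i}{i}}{4^{i}}\right)^{2},\qquad m=\tfrac{n-1}{2},
\]
by showing $A(m+1)-A(m)=\bigl(\binom{2(m+1)}{m+1}/4^{m+1}\bigr)^{2}$; the recurrence on the summand is found via Zeilberger's algorithm. The asymptotic $\sum_{i=0}^{N}\bigl(\binom{2i}{i}/4^{i}\bigr)^{2}=\tfrac{1}{\pi}\bigl(\ln(N+1)+\gamma+\ln 16\bigr)+O(1/N)$ is then quoted from~\cite{CL08}. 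This yields the exact constant, including the $\gamma$, with no delicate analysis at all.

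Your direct route can in principle reach the same constant, but---as you yourself flag---this is exactly the hard step. The Gaussian approximation $r_\ell=e^{-\ell^2/m}(1+O(\ell^3/m^2))$ has to be fed through a harmonic-weighted sum, and controlling the accumulated error well enough to pin down $\gamma+\ln 16$ (rather than just some $O(1)$) is genuinely delicate; your integral-representation sketch does not yet do this. The paper's telescoping-plus-citation approach avoids the issue entirely, at the cost of invoking Zeilberger's algorithm and an external asymptotic result.
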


\begin{proof}
Let $h: \{0,1\}^n \rightarrow \mathbb{R}$ be a sos polynomial of degree $(n-1)/2$ approximating $f$ with 
$\ell_\infty$-error $\epsilon$.  Write $h(x) = f(x)+e(x)$ where $e$ is the function of ``errors'' satisfying 
$|e(x)| \le \epsilon$ for all $x \in \{0,1\}^n$.  Let $\delta_y: \{0,1\}^n \rightarrow \{0,1\}$ be the delta function on the boolean 
cube, where $\delta_y(x)=1$ if and only if $x=y$.  Recall that $\G_{n/2}(f) = \Sym^{uni}(f)(n/2)=(n/2-n/2)^2-1/4=-1/4$.  
By linearity of $\G_{n/2}$ we have
\begin{align*}
\G_{n/2}(f+e) = -1/4 + \G_{n/2}(e) &= -1/4 + \G_{n/2} \left(\sum_{y \in \{0,1\}^n} e(y) \delta_y \right) \\
& \le -1/4 + \epsilon \sum_{y \in \{0,1\}^n} | \G_{n/2}(\delta_y) | \enspace .
\end{align*}
On the other hand, $\G_{n/2}(f+e) \ge 0$ as $f+e$ is a sum-of-squares of polynomials of degree at most $(n-1)/2$ (property~4 in the proof of \thmref{thm:G}).  Thus 
\begin{equation}
\label{eq:eps_bound}
\epsilon \ge \left( 4 \sum_{y \in \{0,1\}^n} | \G_{n/2}(\delta_y) |\right)^{-1} \enspace.
\end{equation}
The main part of the proof will be to evaluate this sum.  

Let $L_i:\mathbb{R} \rightarrow \mathbb{R}$ be the degree-$n$ polynomial uniquely defined by
\[
L_i(z) = 
\begin{cases}
1 & z=i \\
0 & z \in \{0,1,2, \ldots, n\} \setminus \{i\}
\end{cases} \enspace .
\]
Then we see that $\G_{n/2}(\delta_y) = L_{|y|}(n/2)/ \binom{n}{|y|}$, and so 
\[
\sum_{y \in \{0,1\}^n} | \G_{n/2}(\delta_y)| = \sum_{k=0}^n |L_k(n/2)| \enspace .
\]
To do this sum, let us first simplify the summand
\begin{align*}
|L_k(n/2)| &= \frac{\prod_{a=0, a \ne k}^n |n/2-a|}{\prod_{a =0, a \ne k}^n |k-a|} \\
& = \frac{ \prod_{a=0}^n |n/2 -a|}{k! (n-k)! |n/2-k|} \\
& = \frac{1}{2^{n+1}}\frac{n!! n!!}{k! (n-k)! |n/2-k|} \\
& = \frac{n!}{2^{2n-1} \left(\frac{n-1}{2}\right)!^2} \binom{n}{k}\frac{1}{|n-2k|} \\
&= \frac{n}{2^{2n-1}} \binom{n-1}{(n-1)/2} \binom{n}{k} \frac{1}{|n-2k|} \enspace,
\end{align*}
where $n!!$ is defined as $\prod_{j=0}^{\ceil{n/2}-1}(n-2j)$, and we used $n!=n!!\cdot 2^{(n-1)/2}\cdot ((n-1)/2)!$
for odd $n$ in the penultimate equality.

For what comes next, it will be more convenient to express $|L_k(n/2)|$ in terms of $m=(n-1)/2$.  In this way, we 
obtain an expression defined for all $m$, rather than just odd $n$.  
\[
|L_k(m+1/2)|= \frac{2m+1}{2^{4m+1}} \binom{2m}{m} \binom{2m+1}{k}\frac{1}{|2m-2k+1|}
\]
Let $A(m)$ denote the sum over $k=0, \ldots, n=2m+1$, which is
\[
A(m)=  \frac{2m+1}{2^{4m+1}} \binom{2m}{m} \sum_{k=0}^{2m+1} \frac{1}{|2m-2k+1|} \binom{2m+1}{k} \enspace.
\]
By symmetry of the binomial coefficients we can multiply by $2$ and sum over only half of them, thereby 
removing the absolute values.
\[
A(m)=  \frac{2m+1}{4^{2m}} \binom{2m}{m} \sum_{\ell=0}^{m} \frac{1}{2\ell+1} \binom{2m+1}{\ell+m+1}
\]
Now we look at the difference between consecutive $A(m)$:

\begin{claim}
\[
A(m+1) - A(m) = \left( \frac{\binom{2(m+1)}{m+1}}{4^{m+1}} \right)^2
\]
\end{claim}

\begin{proof}
It is somewhat cumbersome to verify this claim directly.  We take the following approach.  Let $A(m)=B(m)C(m)$, where
\[
B(m)=\frac{2m+1}{4^{2m}} \binom{2m}{m}, \qquad C(m) = \sum_{k=0}^m \frac{1}{2k+1} \binom{2m+1}{k+m+1} \enspace.
\]
Note that 
\[
\frac{B(m+1)}{B(m)} = \frac{2m+3}{8(m+1)} \enspace.
\]
Since $B(0)=1$, this resolves to 
\[
B(m+1)=\frac{(2m+3)!!}{8^{m+1}(m+1)!}=\frac{2m+3}{4^{2(m+1)}}\binom{2(m+1)}{m+1} \enspace.
\]

By Zeilberger's algorithm \cite{PWZ97} we find a recurrence satisfied by the summand of $C(m)$.  
\[
\frac{2m+3}{2k+1} \binom{2m+3}{k+m+2} - \frac{8(m+1)}{2k+1}\binom{2m+1}{k+m+1} = \binom{2(m+1)}{m+k+1}-
\binom{2(m+1)}{m+k+2} \enspace .
\]
Summing this recurrence over $k=0,\ldots,m+1$ we find
\[
(2m+3) C(m+1) - 8(m+1) C(m) = \binom{2(m+1)}{m+1} \enspace .
\]
This means that 
\[
A(m+1) - \underbrace{\frac{8(m+1)}{2m+3}B(m+1)}_{B(m)}C(m) = \frac{B(m+1)}{2m+3}  \binom{2(m+1)}{m+1} \enspace,
\]
and in turn
\[
A(m+1) - A(m) = \frac{1}{4^{2(m+1)}} \binom{2(m+1)}{m+1}^2 \enspace.
\]
\end{proof}
As $A(0)=1$ this gives
\[
A(m) = \sum_{i=0}^m \left(\frac{\binom{2i}{i}}{4^i} \right)^2 \enspace .
\]

Luckily, the latter sum has already been asymptotically evaluated in the study of the quantum adversary
bound for the ordered search problem \cite{CL08}.  There it is shown that
\[
\sum_{i=0}^N \left(\frac{\binom{2i}{i}}{4^{i}} \right)^2 = \frac{1}{\pi}\left(\ln(N+1) + \gamma + \ln(16)\right) + O(1/N) 
\enspace ,
\]
where $\gamma \approx 0.577$ is the Euler-Mascheroni constant.

This gives
\begin{align*}
\sum_{y \in \{0,1\}^n} |\G_{n/2}(\delta_y)| =A((n-1)/2) &= \sum_{i=0}^{(n-1)/2} \left(\frac{\binom{2i}{i}}{4^{i}} \right)^2 \\
&=\frac{1}{\pi}\left(\ln((n+1)/2) + \gamma + \ln(16)\right) + O(1/n) \enspace .
\end{align*}
Plugging this into \eqnref{eq:eps_bound} gives the theorem.
\end{proof}
\end{document}